\documentclass[preprint]{imsart}
\usepackage{fullpage}
\usepackage[utf8]{inputenc}
\usepackage{amsthm,amsfonts,amssymb,graphicx,bm}
\usepackage[reqno]{amsmath}
\usepackage{algorithm2e}
\usepackage{framed}
\usepackage{color}
\usepackage{natbib}
\usepackage{url} 
\allowdisplaybreaks

\startlocaldefs
\newtheorem{definition}{Definition}
\newtheorem{lemma}{Lemma}
\newtheorem{theorem}{Theorem}
\newtheorem{corollary}{Corollary}
\newtheorem{example}{Example}
\newtheorem{remark}{Remark}
\newtheorem{assumption}{Assumption}

\newcommand{\Ep}{\mathbb{E}}
\renewcommand{\tilde}{\widetilde}
\renewcommand{\hat}{\widehat}
\endlocaldefs

%\documentclass[bj,dvipdfmx]{imsart}
%%\RequirePackage[a4paper]{geometry}
%\usepackage[utf8]{inputenc}
%\usepackage{amsthm,amsfonts,amssymb,graphicx,bm}
%%\RequirePackage[reqno]{amsmath}
%\usepackage{algorithm2e}
%\usepackage{framed}
%%\usepackage{color}
%\usepackage{natbib}
%\usepackage[normalem]{ulem}
%\allowdisplaybreaks
%%\usepackage[whole]{bxcjkjatype}
%\usepackage{xr}
%\makeatletter
%\newcommand*{\addFileDependency}[1]{% argument=file name and extension
%\typeout{(#1)}% latexmk will find this if $recorder=0
%% however, in that case, it will ignore #1 if it is a .aux or 
%% .pdf file etc and it exists! If it doesn't exist, it will appear 
%% in the list of dependents regardless)
%%
%% Write the following if you want it to appear in \listfiles 
%% --- although not really necessary and latexmk doesn't use this
%%
%\@addtofilelist{#1}
%%
%% latexmk will find this message if #1 doesn't exist (yet)
%\IfFileExists{#1}{}{\typeout{No file #1.}}
%}\makeatother
%
%
%\startlocaldefs
%\newtheorem{definition}{Definition}
%\newtheorem{lemma}{Lemma}
%\newtheorem{theorem}{Theorem}
%\newtheorem{corollary}{Corollary}
%\newtheorem{example}{Example}
%\newtheorem{remark}{Remark}
%\newtheorem{assumption}{Assumption}
%
%\newcommand{\Cov}{\mathrm{Cov}}
%\newcommand{\Ep}{\mathbb{E}}
%\renewcommand{\tilde}{\widetilde}
%\renewcommand{\hat}{\widehat}
%\endlocaldefs

\begin{document}

\begin{frontmatter}
\title{Minimum information dependence modeling}
\runtitle{Minimum information dependence modeling}

\begin{aug}
%%%%%%%%%%%%%%%%%%%%%%%%%%%%%%%%%%%%%%%%%%%%%%%
%% ORCID can be inserted by command:         %%
%% \orcid{0000-0000-0000-0000}               %%
%%%%%%%%%%%%%%%%%%%%%%%%%%%%%%%%%%%%%%%%%%%%%%%
\author[A]{\fnms{Tomonari}~\snm{Sei}\ead[label=e1]{sei@mist.i.u-tokyo.ac.jp}}
\author[B]{\fnms{Keisuke}~\snm{Yano}\ead[label=e2]{yano@ism.ac.jp}}
%%%%%%%%%%%%%%%%%%%%%%%%%%%%%%%%%%%%%%%%%%%%%%
%% Addresses                                %%
%%%%%%%%%%%%%%%%%%%%%%%%%%%%%%%%%%%%%%%%%%%%%%
\address[A]{The University of Tokyo, 7-3-1 Hongo, Bunkyo-ku, Tokyo, 113-8656, Japan\printead[presep={,\ }]{e1}}

\address[B]{The Institute of Statistical Mathematics, 10-3 Midori cho, Tachikawa City, Tokyo, 190-8562, Japan\printead[presep={,\ }]{e2}}
\end{aug}

\begin{abstract}
We propose a method to construct a joint statistical model
for mixed-domain data to analyze their dependence.
Multivariate Gaussian and log-linear models are particular examples of the proposed model.
It is shown that the functional equation defining the model has a unique solution under fairly weak conditions.
The model is characterized by two orthogonal parameters:
the dependence parameter and the marginal parameter.
To estimate the dependence parameter, a conditional inference together with a sampling procedure is proposed and is shown to provide a consistent estimator.
Illustrative examples of data analyses involving penguins and earthquakes are presented.
\end{abstract}

\begin{keyword}
\kwd{conditional inference}
\kwd{copula}
\kwd{earthquake data}
\kwd{graphical model}
\kwd{mixed-domain}
\kwd{Monte Carlo method}
\end{keyword}

\end{frontmatter}

\section{Introduction} \label{section:introduction}

In multivariate analysis, there are a lot of statistical models describing dependence such as copula models, regression models, log-linear models and Gaussian graphical models. Such models are quite powerful and frequently used in applications. However, these approaches  depend to varying degrees on the characteristics of the data domain. For example, in copula modeling, the variables are assumed to be real-valued and transformed into $[0,1]$ by monotone transformation. In generalized linear models, the domains of explanatory variables are arbitrary via quantification, but the conditional density of the response variable has to be specified as Gaussian or Poisson, for example. A log-linear model assumes the domain to be discrete.

In this paper, we propose a method to construct dependence models without using domain characteristics. The joint density function that we suppose takes the form
\begin{align}
 p(x_1,x_2;\theta) = e^{\theta^\top h(x_1,x_2)}A_1(x_1;\theta)A_2(x_2;\theta)
 \label{eq:min-info-2dim}
\end{align}
in bivariate cases, and is defined similarly in multivariate cases. Here, the domains of $x_1$ and $x_2$ are arbitrary as long as they have base measures. The function $h(x_1,x_2)$ represents the dependence of the variables. 
The functions $A_1$ and $A_2$ are determined by the marginal distributions of $x_1$ and $x_2$.
It is proved that a density function satisfying the marginal constraints exists and is unique for every value of the parameter $\theta$ under fairly weak conditions (Theorem~\ref{theorem:feasible}).
This makes the model quite flexible since the parametric or nonparametric forms of the marginal distributions can be separately designed. We will call $A_1$ and $A_2$ adjusting functions.
See Section~\ref{section:model} for a precise definition.

We show the usefulness of our model by enumerating various examples.
In particular, models for continuous, discrete, and any other type of variables can be jointly designed up to the same cost as homogeneous data.
There has been a similar attempt to construct a mixed-variable model based on univariate conditional exponential families (\cite{yang2015graphical}), which, however, does not avoid the restriction on the parameter space of joint exponential families.
The conditional Gaussian families for mixed data discussed in \cite{Lauritzen} and \cite{Whittaker1990} are tractable but restrictive.

The adjusting functions in the proposed model (\ref{eq:min-info-2dim}) cannot be written in a closed form except for limited cases.
This means a naive likelihood analysis is intractable. 
However, we can perform conditional inference \citep{CoxHinkley,reid1995roles} given the marginal empirical distribution, as described in Section~\ref{section:conditional}.
It is shown that the conditional likelihood has almost the same information as the full likelihood and is not affected by nuisance parameters that characterize the marginal distribution (Theorem \ref{theorem:negligible}).
These aspects are similar to Fisher's exact test for contingency tables (e.g.,\ \cite{choi2015elucidating,Little1989}), which fixes marginal frequencies for testing of independence.
Given the marginal empirical distribution, the problem of estimation reduces to that of an exponential family on a set of permutations. 
See \cite{Mukherjee2016} for properties of exponential families on permutations.

A closed form for the conditional likelihood remains unknown.
To deal with this, we propose a sampling method for the conditional distribution in a Markov chain Monte Carlo (MCMC) manner (Subsection \ref{subsection:sampling})
and 
a pseudo likelihood method
(Subsection \ref{subsection:Besag})
that make the conditional inference tractable.

The idea behind our method is the minimum information copula model (e.g.,\ \cite{Bedford_et_al2016,BedfordWilson2014,MeeuwissenBedford1997,piantadosi2012copulas}), in which the joint density function of real-valued data is determined by uniform marginals and fixed values of expectations of some statistics.
In a study related to the minimum information copula model, \cite{Geenens2020} constructed a bivariate discrete model written as
$p(x_1,x_2)=c(x_1,x_2)A_1(x_1)A_2(x_2)$,
where the marginal distributions of $c(x_1,x_2)$ are discrete uniform distributions.

The paper is organized as follows.
In Section~\ref{section:model}, we define our model together with practical examples and establish an existence and uniqueness theorem (Theorem \ref{theorem:feasible}). In Section~\ref{section:conditional}, we develop a conditional inference approach for the dependence parameter and prove its validity under mild conditions (Theorems \ref{theorem:negligible}--\ref{theorem:consistencyofPL}).
In Section~\ref{section:numerical}, we present simulation studies for the inference. We conclude the paper in Section~\ref{section:discussion} with some future work. 
In Appendix A, we provide more examples of our model.
In Appendix B,
we summarize useful properties of the model, including information geometry (e.g.,\ \cite{AmariNagaoka2000,Csiszar1975}), and relationship to the optimal transport and  Schr\"{o}dinger problems (e.g.,\ \cite{haasler2021,Leonard2012,PeyreCuturi2019}).
Appendix C gives all proofs of the results.
Appendix D provides illustrative examples of data analyses involving penguins and earthquakes are presented.

\section{Minimum information dependence model}\label{section:model}

In this section, 
we introduce the minimum information dependence model with 
its existence guarantee and present several examples.

\subsection{Definition}

Let $(\mathcal{X}_i,\mathcal{F}(\mathcal{X}_i),{\rm d}x_i)$ for $i=1,\ldots,d$ be a measure space
and denote their product space by $\mathcal{X}=\prod_{i=1}^{d}\mathcal{X}_i$ and ${\rm d}x=\prod_{i=1}^{d}{\rm d}x_i$.
For index $i$, use the notation $-i$ to indicate the removal of the $i$-th coordinate, e.g.,
$x_{-i}=(x_j)_{j\neq i}$, $\mathcal{X}_{-i}=\prod_{j\neq i}\mathcal{X}_j$, and ${\rm d}x_{-i}=\prod_{j\neq i}{\rm d}x_j$.

Let $r_1(x_1;\nu),\ldots,r_d(x_d;\nu)$ be statistical models of marginal densities on $\mathcal{X}_1,\ldots,\mathcal{X}_d$, respectively, where $\nu$ denotes parameters characterizing the marginal densities.
We can assign, if necessary, independent parameters to each $r_i$ as $r_i(x_i;\nu_i)$ by setting $\nu=(\nu_1,\ldots,\nu_d)$.
It is also possible to deal with infinite-dimensional parameters.

We consider a class of probability density functions
\begin{align}
& p(x;\theta,\nu) = \exp\left(\theta^\top h(x)-\sum_{i=1}^d a_i(x_i;\theta,\nu)-\psi(\theta,\nu)\right)
\prod_{i=1}^d r_i(x_i;\nu),
\label{eq:min-info}
\end{align}
where $\theta\in\mathbb{R}^K$ is a $K$-dimensional parameter representing the dependence, and
$h:\mathcal{X}\to\mathbb{R}^K$ is a given function.
The functions $a_i(x_i;\theta,\nu)$ and $\psi(\theta,\nu)$ are simultaneously determined by constraints
\begin{align}
 &\int p(x;\theta,\nu){\rm d}x_{-i} = r_i(x_i;\nu), \quad i=1,\ldots,d,\ \text{and}
 \label{eq:marginal-condition}
 \\
 & \int \sum_{i=1}^d a_i(x_i;\theta,\nu)p(x;\theta,\nu){\rm d}x = 0.
 \label{eq:zero-mean-condition}
\end{align}
The equation (\ref{eq:marginal-condition}), which specifies the marginal distributions, is essential in our modeling.
The equation (\ref{eq:zero-mean-condition}) is assumed just for identifiability of $\psi(\theta,\nu)$,
because
for fixed $a_{i}(x_{i};\theta,\nu)$s and fixed $\psi(\theta,\nu)$ and for any $c\in\mathbb{R}$,
$(a_{i}(x_{i};\theta,\nu)+c)$s and $(\psi(\theta,\nu)-dc)$ yield the same probability density function as that with $a_{i}(x_{i};\theta,\nu)$s and $\psi(\theta,\nu)$.
With these constraints,
 we shall see that the functions $\sum_{i=1}^d a_i(x_i;\theta,\nu)$ and $\psi(\theta,\nu)$ are unique if they exist.
If each term of (\ref{eq:zero-mean-condition}) is integrable, the equation (\ref{eq:zero-mean-condition}) is equivalent to the equation 
$\sum_{i=1}^d\int a_i(x_i;\theta,\nu)r_i(x_i;\nu){\rm d}x_i=0$
that does not involve $p(x;\theta,\nu)$,
under the marginal condition (\ref{eq:marginal-condition}).
Note that the density (\ref{eq:min-info}) is reduced to the independent model $\prod_{i=1}^{d} r_i(x_i;\nu)$ if $\theta=0$.

\begin{definition}
 A statistical model (\ref{eq:min-info}) together with the constraints (\ref{eq:marginal-condition}) and (\ref{eq:zero-mean-condition}) is called a {\em minimum information dependence model}.
The parameter $\theta$ is called the {\em canonical parameter},
$\nu$ is the {\em marginal parameter},
$h(x)$ comprises the {\em canonical statistics},
$a_i(x_i;\theta,\nu)$s are the {\em 
adjusting functions}
and $\psi(\theta,\nu)$ is the {\em potential function}.
\end{definition}

Throughout the paper, we assume that the canonical statistics $h_k(x)$, $k=1,\ldots,K$, are linearly independent modulo additive functions. 
That is, if $\theta$ satisfies
\[\theta^\top h(x)+\sum_{i=1}^{d} A_i (x_i)=0,
\]
with $A_{i}(x_{i})$ not depending on $x_{-i}$ ($i=1,\ldots,d$),
then $\theta=0$.

The canonical statistics are not sufficient statistics for $\theta$ in the full likelihood because the 
adjusting
functions contain $\theta$ and $x$,
but are sufficient statistics in the conditional likelihood; this point will be clarified in Section~\ref{section:conditional}.

We state several useful properties of the minimum information dependence model.
First, the derivative of the potential function with respect to $\theta$ is $\Ep[h(X)]$.
Second, the potential function $\psi(\theta,\nu)$ is shown to be  strictly convex with respect to $\theta$.
Third, the value of 
$\theta$ is directly linked to the total correlation \citep{Watanabe1960}
$\Ep[\log \{p(X;\theta,\nu)/\prod_{j=1}^{d}r_{j}(X_{j};\nu)\}
]$, 
a measure of strength of the association, as
\begin{align*}
\Ep\left[\log \frac{p(X;\theta,\nu)}{\prod_{j=1}^{d}r_{j}(X_{j};\nu)}
\right]
=\theta^{\top}\nabla_{\theta}\psi(\theta,\nu)-\psi(\theta,\nu),
\end{align*}
where this follows from the equation (\ref{eq:zero-mean-condition}) and Lemma S.2 in the appendix. 
Finally, the parameters $\theta$ and $\nu$ are mutually orthogonal with respect to the Fisher information metric;
see Appendices B.1 and B.2 in the appendix for details of these properties.

\begin{figure}[tb]
\centering
\includegraphics[width=14cm]{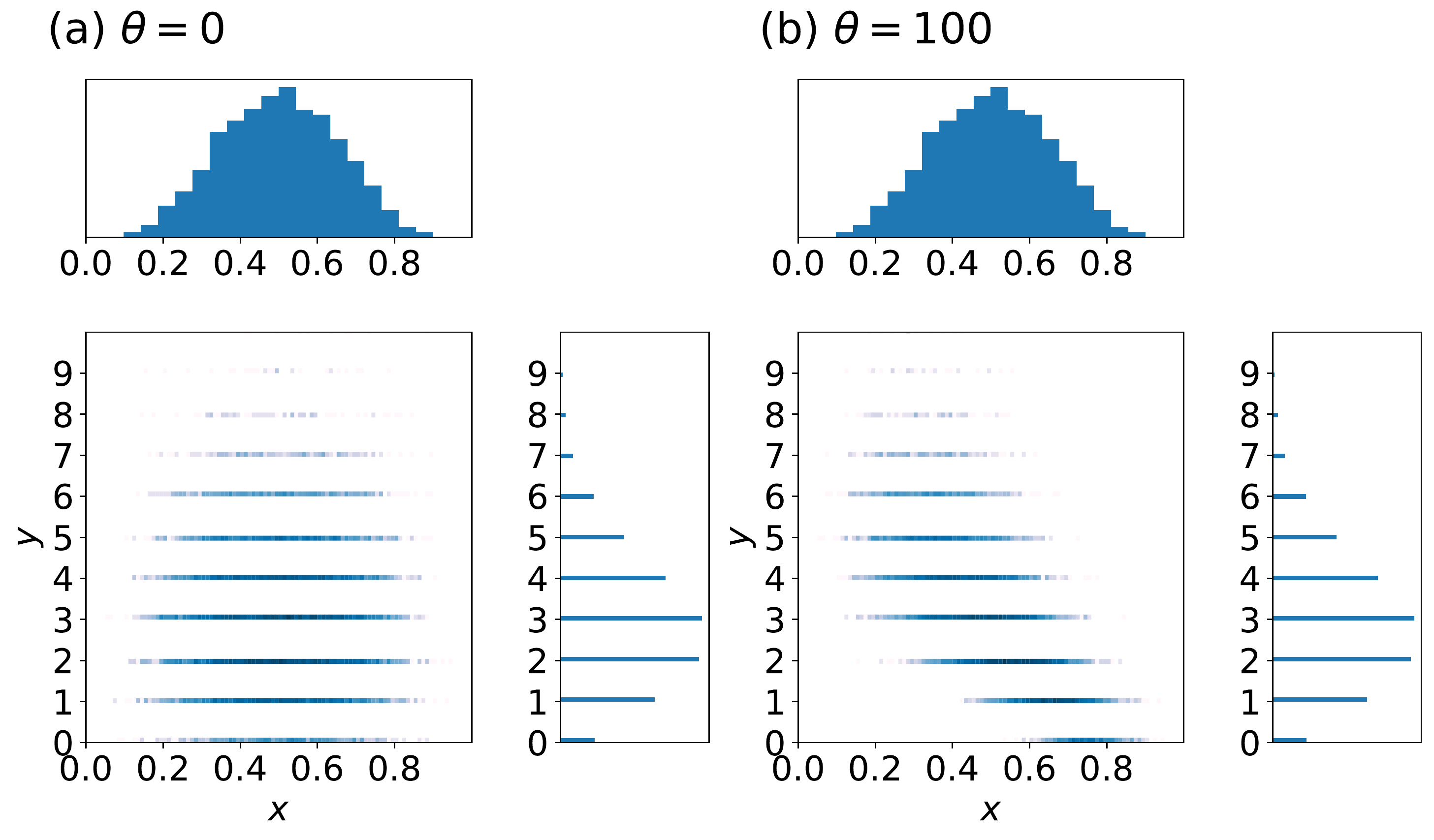}
\caption{Two-dimensional histograms of $10000$ samples from the minimum information dependence model with 
$\mathrm{Beta}(10,10)$ and $\mathrm{Po}(3)$ marginals. The canonical statistic $h(x,y)$ is given by $h(x,y)=x/(y+1)$. The joint histogram and marginal histograms are plotted.
(a) Joint histogram with $\theta=0$. (b) Joint histogram with $\theta=100$.}
\label{fig:poissonbeta}
\end{figure}

As a final remark in this subsection,
our model is a generalization of the minimum information copula model proposed by \cite{BedfordWilson2014},
particular cases of which appear in \cite{MeeuwissenBedford1997} and \cite{piantadosi2012copulas}.
In this copula model, the space $\mathcal{X}_i$ is the interval $[0,1]\subset\mathbb{R}$ and the marginal density functions $r_i(x_i)$ are assumed to be the uniform density on $[0,1]$.
This copula model is derived from the maximum entropy principle (\cite{Jaynes1957}), or equivalently, from the minimum information principle, which is the origin of the name.
We discuss a similar property for our model in Appendix B in the appendix.
Yet, copula models are, in general, intended to be used together with the probability integral transform that makes the variables have uniform marginal distributions. In contrast, our model specifies the marginal model without transforming the variables, which allows arbitrary $\mathcal{X}_i$ rather than $\mathbb{R}$.
Figure \ref{fig:poissonbeta} displays an example of two-dimensional histograms of samples from the minimum information dependence model for mixed variables (discrete and $[0,1]$) with negative correlation.
This example shows that the minimum information dependence model easily expresses arbitrary dependence between variables in arbitrary product spaces.

Also, the minimum information dependence model naturally contains the existing dependence models for specific sample spaces (\cite{Amari2001,HollandWang1987,Jansen1997,Jones_et_al2015}).
Several examples highlighting the connection between the minimum information dependence model and the existing models are presented in Subsections~\ref{subsection:illustrativeexamples} and \ref{subsection:examples}, and Appendix A in the appendix.
Remark \ref{remark:diff between copula and min-info} discusses the difference between the minimum information dependence models and copula models.

\subsection{An illustrative example}
\label{subsection:illustrativeexamples}

We give an elementary example for illustrative purposes. 
More practical examples are provided in Subsection~\ref{subsection:examples}.
Further examples
related to the total positivity (e.g.,\ \cite{HollandWang1987,Kurowicka2015}) and circulas (\cite{Jones_et_al2015})
are presented in Appendix A in the appendix.

\begin{example}[Gaussian model;  \cite{Jansen1997}]
\label{example: Jansen Gaussian}
 The bivariate Gaussian distribution is a minimum information dependence model.
 This fact was pointed out by \cite{Jansen1997} in an argument of the copula theory.
 Suppose that the mean vector is zero for simplicity.
 If the variance and correlation parameters are denoted as $\sigma_1^2,\sigma_2^2$ and $\rho$, respectively, then the density is written as
 \begin{align*}
  p(x_1,x_2;\sigma_1,\sigma_2,\rho)
   &= e^{\theta x_1x_2 - a_1(x_1;\theta,\nu) - a_2(x_2;\theta,\nu) - \psi(\theta,\nu)}\phi(x_1;\sigma_1^2)\phi(x_2;\sigma_2^2),
 \end{align*}
 where $\theta=(\sigma_1\sigma_2)^{-1}\rho(1-\rho^2)^{-1}$ is the canonical parameter, $x_1x_2$ is the canonical statistic, $\nu=(\sigma_1^2,\sigma_2^2)$ is the marginal parameter and $\phi(x_i;\sigma_i^2)$ denotes the univariate normal density. The 
 adjusting
 functions and the potential function are
 \begin{align*}
 a_i(x_i;\theta,\nu) &= \left(\frac{1}{2(1-\rho^2)}-\frac{1}{2}\right)\left(\left(\frac{x_i}{\sigma_i}\right)^2-1\right),\quad i=1,2,
 \\
 \psi(\theta,\nu) &= \frac{1}{2}\log(1-\rho^2) +\frac{1}{1-\rho^2}-1,
 \end{align*}
 respectively, where $\rho=\rho(\theta,\nu)$ is the unique solution of $\theta=(\sigma_1\sigma_2)^{-1}\rho(1-\rho^2)^{-1}$.
 More explicitly,
 $\rho = \rho(\theta,\nu) = 2\theta\sigma_1\sigma_2 / \{1+(1+4\theta^2\sigma_1^2\sigma_2^2)^{1/2}\}.$
 The canonical parameter $\theta$ ranges over $\mathbb{R}$ whereas $\rho$ ranges over $(-1,1)$.
 The one-to-one relationship between $\rho$ and $\theta$ for a given $\nu$ is a consequence of a more general result (see Appendix B.1 in the appendix).
 We will show that higher-dimensional Gaussian models also have a similar structure in Subsection~\ref{subsection:examples}.
  \end{example}

\begin{remark}\label{remark:diff between copula and min-info}
    Here we clarify the difference between minimum information dependence models and copula models.
    Assume that $\mathcal{X}_{i}=\mathbb{R}$ for $i=1,\ldots,d$.
    A copula model is built by using the Sklar theorem and the change of variables as
    \begin{align*}
    p_{\mathrm{copula}}(x;c,\nu)=c(R_{1}(x_{1};\nu),\ldots,R_{d}(x_{d};\nu))
    \prod_{i=1}^{d}r_{i}(x_{i};\nu),
    \end{align*}
    where
    $c(\cdot)$ is a copula density, 
    $R_{i}(\cdot;\nu)$ $(i=1,\ldots,d)$ is the distribution function of the $i$-th variable for $i=1,\ldots,d$,
    and 
    $r_{i}(\cdot;\nu)$ $(i=1,\ldots,d)$ is the density function of the $i$-th variable.
    In contrast,
    the minimum information dependence model is built as
    \begin{align*}    p(x;\theta,\nu)=\exp\left(\theta^{\top}h(x)
    -\sum_{i=1}^{d}a_{i}(x_{i};\theta,\nu)
    -\psi(\theta,\nu)
    \right)
    \prod_{i=1}^{d}r_{i}(x_{i};\nu).
    \end{align*}
    Two models have the form of
    the dependence part and the product of marginals. 
    But, 
    the dependence part in the copula model has the composite form $c \circ R$
    , whereas that of our model has 
    the exponential of the additive form:
    the term $\theta^{\top}h(x)$ independent of marginals plus the terms $-\sum_{i=1}^{d}a_{i}(x_{i};\theta,\nu)-\psi(\theta,\nu)$ depending on the marginals.
    So, the coincidence between copula and minimum information dependence models  depends on the marginal distributions.

    One example of the coincidence is a Gaussian model (Gaussian copula with Gaussian marginals).
    Another example is the bi-variate minimum information dependence copula with uniform marginals. 
    In contrast, 
    one example of the discordance
    is the Farlie--Gumbel--Morgenstern copula with Gaussian marginals that is included in copula models but is excluded in our models.
\end{remark}

In Example \ref{example: Jansen Gaussian}, 
the 
adjusting  
functions 
and potential functions are explicitly represented.
However, such cases are exceptional.
To generate a rich class of distributions, we establish an existence and uniqueness theorem for the functions in the following subsection.

\subsection{Existence and uniqueness} \label{subsection:feasible}

We find tractable conditions for the existence and uniqueness of the 
adjusting 
functions and potential function.
Fix $\theta$ and $\nu$ and drop the dependence on these parameters in this subsection.
Then, the inner product $\theta^\top h(x)$ is replaced with a scalar function $H(x)=\theta^\top h(x)$.
The marginal parameter $\nu$ is also abbreviated as $r_i(x_i)$.

Denote the product density of $r_i(x_i)$ by
$p_0(x)=\prod_{i=1}^d r_i(x_i)$
that corresponds to the density function for $H=0$.
Denote the set of integrable functions with respect to a measure $\mu$ by $L_1(\mu)$.

\begin{definition}
We say that a function $H\in L_1(p_0(x){\rm d}x)$ is {\em feasible} if there exist measurable functions $\{a_i(x_i):i=1,\ldots,d\}$ and a real number $\psi\in\mathbb{R}$ such that the function
$p(x) = e^{H(x)-\sum_{i=1}^d a_i(x_i)-\psi}p_0(x)$
satisfies
$\int p(x){\rm d}x_{-i} = r_i(x_i)$ for each  $i=1,\ldots,d$ and $\int \sum_{i=1}^{d} a_i(x_i)p(x){\rm d}x=0$.
\end{definition}
\begin{definition}
We say that $H$ is {\em strongly feasible} if $H$ is feasible and each $a_i$ belongs to $L_1(r_i(x_i){\rm d}x_i)$.
\end{definition}
\begin{definition}
\label{def: moderately feasible}
We say that $H$ is {\em moderately feasible} if 
there exist $\{b_i\in L_1(r_i(x_i){\rm d}x_i): i=1,\ldots,d\}$ such that
 \begin{align}
  \int e^{H(x)-\sum_{i=1}^{d} b_i(x_i)} p_0(x)
 {\rm d}x < \infty.
 \label{eq:feasible-equivalent}
 \end{align}
\end{definition}

The only difference between feasibility and strong feasibility is the integrability of $a_i(x_i)$. Strong feasibility is convenient for theoretical discussions whereas feasibility is sufficient for the modeling and the inference.
Strong feasibility implies moderate feasibility by Definition \ref{def: moderately feasible}.
If $H$ is feasible and the 
adjusting 
functions are integrable, then $H$ is strongly feasible.

Our first main result is that moderate feasibility is a sufficient condition of feasibility.
The proof, which is based on the results of \cite{Borwein_et_al1994}, is given in Appendix C.1 in the appendix.

\begin{theorem} \label{theorem:feasible}
If a function $H\in L_1(p_0(x){\rm d}x)$ is moderately feasible,
then $H$ is feasible.
 Furthermore, if $H$ is feasible, then $\sum_{i=1}^d a_i(x_i)$ and $\psi$ are unique.
 \end{theorem}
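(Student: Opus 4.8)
The plan is to treat existence and uniqueness separately, reading the marginal constraints as a continuum-indexed family of linear constraints and exploiting the convex-duality structure behind the target representation $p=e^{H-\sum_i a_i-\psi}p_0$.

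\textbf{Existence.} I would recast feasibility as a relative-entropy minimization problem. Using the $b_i$ supplied by moderate feasibility, set $Z=\int e^{H(x)-\sum_i b_i(x_i)}p_0(x){\rm d}x<\infty$ and let $q=Z^{-1}e^{H-\sum_i b_i}p_0$, a genuine probability density. I would then minimize the free-energy functional $p\mapsto \int p\log(p/p_0){\rm d}x-\int Hp\,{\rm d}x$ over the convex set $\Pi$ of densities with prescribed marginals $r_i$. This is well posed: $\Pi$ is nonempty since $p_0\in\Pi$, and the functional is finite there (it equals $-\int Hp_0\,{\rm d}x$, finite because $H\in L_1(p_0\,{\rm d}x)$); moreover any $p\in\Pi$ satisfies $\int\sum_i b_i\,p\,{\rm d}x=\sum_i\int b_i r_i\,{\rm d}x_i$, a fixed finite constant, so the functional differs from the relative entropy $D(p\|q)=\int p\log(p/q)\,{\rm d}x$ only by additive constants and is therefore bounded below. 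The form of the solution is dictated by the Lagrangian: a stationary point with multiplier functions $\lambda_i(x_i)$ for the marginal constraints satisfies $\log(p/p_0)=H+\sum_i\lambda_i+{\rm const}$, i.e.\ $p=e^{H-\sum_i a_i-\psi}p_0$ with $a_i=-\lambda_i$.

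Next I would establish that a minimizer exists and admits this representation. Existence follows from weak lower semicontinuity of relative entropy together with weak sequential compactness of the sublevel sets $\{D(\cdot\|q)\le M\}$ in $L_1(q)$, which holds by the de la Vall\'ee--Poussin criterion (superlinearity of $t\log t$ gives uniform integrability) and the Dunford--Pettis theorem; since the marginal constraints cut out a weakly closed subset, the infimum is attained at some $p^*\in\Pi$. The delicate step is passing from $p^*$ to genuine measurable multipliers: dual attainment for this infinite-dimensional linearly constrained entropy problem is precisely the content of the results of \cite{Borwein_et_al1994}, whose qualification hypotheses are met under moderate feasibility, and which produce measurable $a_i$ and a scalar $\psi$ with $p^*=e^{H-\sum_i a_i-\psi}p_0$ meeting the marginal constraints exactly. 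Finally I would enforce the zero-mean condition: the representation is invariant under $a_i\mapsto a_i+c_i$, $\psi\mapsto\psi-\sum_i c_i$, so after checking $\sum_i a_i\in L_1(p^*\,{\rm d}x)$ (which I would deduce from finiteness of $D(p^*\|p_0)$ and $H\in L_1(p^*\,{\rm d}x)$), a single constant shift forces $\int\sum_i a_i\,p^*\,{\rm d}x=0$, completing the proof that $H$ is feasible.

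\textbf{Uniqueness.} Suppose $p^{(1)}$ and $p^{(2)}$ are two feasible densities with adjusting functions $a_i^{(j)}$ and potentials $\psi^{(j)}$. Writing $g_i=a_i^{(2)}-a_i^{(1)}$ and $\gamma=\psi^{(2)}-\psi^{(1)}$, the factor $e^{H}p_0$ cancels in the log-ratio, leaving $\log(p^{(1)}/p^{(2)})=\sum_i g_i(x_i)+\gamma$, a purely additive function. The key computation is the symmetrized divergence
\begin{align*}
D(p^{(1)}\|p^{(2)})+D(p^{(2)}\|p^{(1)})=\int (p^{(1)}-p^{(2)})\Big(\sum_i g_i(x_i)+\gamma\Big){\rm d}x.
\end{align*}
The constant term integrates to $\gamma(1-1)=0$, and each $\int (p^{(1)}-p^{(2)})g_i(x_i){\rm d}x=\int g_i(x_i)(r_i-r_i){\rm d}x_i=0$ because $p^{(1)}$ and $p^{(2)}$ share the marginal $r_i$; hence the right-hand side vanishes. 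As both divergences are nonnegative, each is zero, forcing $p^{(1)}=p^{(2)}$ almost everywhere and thus $\sum_i g_i+\gamma=0$ $p_0$-a.e. Integrating this identity against the common density and using both zero-mean conditions gives $\gamma\int p^{(1)}{\rm d}x=0$, so $\gamma=0$, and therefore $\sum_i a_i^{(1)}=\sum_i a_i^{(2)}$ and $\psi^{(1)}=\psi^{(2)}$ (note that the individual $a_i$ are determined only up to constants summing to zero, consistent with the statement).

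I expect the main obstacle to be the existence half, and within it the dual-attainment step rather than the mere existence of a minimizer. Producing a minimizer is routine once the right weak topology is fixed, but extracting measurable adjusting functions $a_i$ from the optimizer, and securing the integrability that makes the zero-mean condition meaningful, is exactly what the infinite-dimensional convex-duality theory of \cite{Borwein_et_al1994} supplies; the real work lies in verifying that moderate feasibility is the qualification condition their theory requires. The uniqueness argument is by contrast self-contained, the only care being the justification of Fubini and the finiteness of the integrals $\int g_i\,p^{(j)}\,{\rm d}x$, which follow from the $L_1$ properties already established in the existence step.
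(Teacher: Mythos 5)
Your overall strategy matches the paper's: both proofs reduce existence to the dual-attainment/entropy-projection results of \cite{Borwein_et_al1994}, and the paper's Lemma S.8 is exactly the packaged form of the theorem you invoke. The one step you explicitly leave open --- checking that moderate feasibility supplies the constraint qualification that \cite{Borwein_et_al1994} requires --- is in fact the entire content of the paper's existence argument, and it is shorter than you anticipate: the paper substitutes $k(x)=p_0(x)e^{H(x)-\sum_i b_i(x_i)}$ (integrable by the moderate-feasibility inequality) and $u(x)=e^{-H(x)+\sum_i b_i(x_i)}$, so that $uk=p_0$ is itself a strictly positive feasible point whose objective value $\int p_0(-H+\sum_i b_i-1)\,{\rm d}x$ is finite by $H\in L_1(p_0\,{\rm d}x)$ and $b_i\in L_1(r_i\,{\rm d}x_i)$; Lemma S.8 then hands back the product-form optimizer $u_*=\prod_i f_i$, and the integrability of $\sum_i\log f_i$ against $p_*=ku_*$ (needed to make the zero-mean normalization meaningful) is read off from the finiteness of the optimal objective value rather than derived from $D(p_*\Vert p_0)<\infty$ as you propose --- both routes work. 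Where you genuinely diverge is uniqueness: the paper deduces it in one line from the uniqueness of the optimizer in Lemma S.8 (which tacitly requires knowing that every feasible density is that optimizer), whereas your symmetrized-divergence computation $D(p^{(1)}\Vert p^{(2)})+D(p^{(2)}\Vert p^{(1)})=\int(p^{(1)}-p^{(2)})(\sum_i g_i+\gamma)\,{\rm d}x=0$ proves this directly and self-containedly; this is arguably the cleaner argument and is essentially the Pythagorean identity the paper only proves later (Theorem S.1). The caveat you flag at the end is real, though: splitting that integral coordinatewise needs each $g_i=a_i^{(2)}-a_i^{(1)}$ to be integrable against the marginals, which the definition of feasibility (as opposed to strong feasibility) does not guarantee; the paper's own uniqueness argument is equally silent on this point, so you are no worse off, but you should not claim the integrability ``follows from the existence step,'' since uniqueness must cover arbitrary feasible representations, not only the one produced by the optimization.
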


It is usually easy to find integrable functions $b_i$s satisfying the inequality (\ref{eq:feasible-equivalent});
we provide concrete examples in Subsection~\ref{subsection:examples}.

For moderate feasibility, we obtain a couple of corollaries.
First, by applying H\"older's inequality to the integral in (\ref{eq:feasible-equivalent}), we immediately obtain the following corollary.

\begin{corollary}
 The set of moderately feasible functions is convex.
\end{corollary}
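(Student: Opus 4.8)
The plan is to verify the defining inequality of moderate feasibility directly for a convex combination, feeding in the functions supplied by each endpoint and using Hölder's inequality to control the resulting integral. Suppose $H$ and $G$ are both moderately feasible and fix $t\in(0,1)$ (the cases $t=0$ and $t=1$ return $G$ and $H$ and so need no work). By Definition~\ref{def: moderately feasible} there exist $\{b_i\in L_1(r_i(x_i){\rm d}x_i)\}$ and $\{c_i\in L_1(r_i(x_i){\rm d}x_i)\}$ with
\[
\int e^{H(x)-\sum_{i=1}^{d} b_i(x_i)}p_0(x){\rm d}x<\infty
\quad\text{and}\quad
\int e^{G(x)-\sum_{i=1}^{d} c_i(x_i)}p_0(x){\rm d}x<\infty .
\]

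The candidate adjusting functions for $tH+(1-t)G$ would be $d_i:=tb_i+(1-t)c_i$, which lie in $L_1(r_i(x_i){\rm d}x_i)$ because $L_1$ is a vector space. With this choice I would factor the integrand as
\[
e^{\,tH+(1-t)G-\sum_{i=1}^{d} d_i}
=\left(e^{\,H-\sum_{i=1}^{d} b_i}\right)^{t}\left(e^{\,G-\sum_{i=1}^{d} c_i}\right)^{1-t},
\]
and then apply Hölder's inequality with respect to the (probability) measure $p_0(x){\rm d}x$, pairing the convex weights with the conjugate exponents $p=1/t$ and $q=1/(1-t)$, which satisfy $1/p+1/q=1$. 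Since $(e^{H-\sum_i b_i})^{t}$ raised to the power $p$ is exactly $e^{H-\sum_i b_i}$, and likewise for the other factor, this yields
\[
\int e^{\,tH+(1-t)G-\sum_{i=1}^{d} d_i}p_0\,{\rm d}x
\le
\left(\int e^{\,H-\sum_{i=1}^{d} b_i}p_0\,{\rm d}x\right)^{t}
\left(\int e^{\,G-\sum_{i=1}^{d} c_i}p_0\,{\rm d}x\right)^{1-t}<\infty ,
\]
so that $tH+(1-t)G$ satisfies (\ref{eq:feasible-equivalent}) and is therefore moderately feasible.

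There is essentially no hard step here: the only thing to get right is the matching of the convex weight $t$ with the conjugate exponent $1/t$, after which finiteness of the two endpoint integrals closes the argument. I would, however, take a moment to note that $tH+(1-t)G$ remains in the ambient space $L_1(p_0(x){\rm d}x)$, which is immediate from the vector-space structure of $L_1$, and to record the triviality of the endpoints $t\in\{0,1\}$. The convexity of the set of moderately feasible functions then follows.
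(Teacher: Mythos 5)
Your proof is correct and matches the paper's argument, which likewise obtains the corollary by applying H\"older's inequality to the integral in (\ref{eq:feasible-equivalent}) with the convex combination $tb_i+(1-t)c_i$ of the adjusting functions. The exponent bookkeeping and the $L_1$ closure observations are exactly the details the paper leaves implicit.
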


Further, we provide the following two useful criteria for moderately feasibility.
The proofs are given in Appendix C.2.

\begin{corollary} \label{corollary:polynomial}
Suppose that 
for each $i=1,\ldots,d$,
the space 
$\mathcal{X}_i$ is equal to $\mathbb{R}$, and the marginal density function $r_i$ has finite moments of any order.
Then, any polynomial function $H(x)$ is moderately feasible.
\end{corollary}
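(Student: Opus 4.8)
The plan is to verify moderate feasibility directly from Definition~\ref{def: moderately feasible} by constructing functions $b_i\in L_1(r_i(x_i)\,{\rm d}x_i)$ that dominate $H$ in a \emph{separable} way, namely $H(x)\le \sum_{i=1}^d b_i(x_i)$ pointwise. If I can do this, then $e^{H(x)-\sum_i b_i(x_i)}\le 1$ everywhere, so the integral in \eqref{eq:feasible-equivalent} is bounded by $\int p_0(x)\,{\rm d}x=1<\infty$, and feasibility then follows from Theorem~\ref{theorem:feasible}. Thus the whole task reduces to producing a single-variable majorant of the polynomial $H$ whose pieces are integrable against the marginals.

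The crux is the separation step. Writing $H$ as a finite sum of monomials $H(x)=\sum_{\alpha}c_\alpha\,\prod_{i=1}^d x_i^{\alpha_i}$ and passing to absolute values gives $H(x)\le\sum_\alpha|c_\alpha|\prod_{i=1}^d|x_i|^{\alpha_i}$. For each multi-index with $|\alpha|:=\sum_i\alpha_i>0$, I would apply the weighted arithmetic--geometric mean inequality with weights $\alpha_i/|\alpha|$ (which sum to $1$) to the quantities $|x_i|^{|\alpha|}$:
\[
\prod_{i=1}^d|x_i|^{\alpha_i}=\prod_{i=1}^d\bigl(|x_i|^{|\alpha|}\bigr)^{\alpha_i/|\alpha|}\le\sum_{i=1}^d\frac{\alpha_i}{|\alpha|}\,|x_i|^{|\alpha|}\le\sum_{i=1}^d|x_i|^{|\alpha|}.
\]
Summing over $\alpha$ and interchanging the (finite) sums yields $H(x)\le\sum_{i=1}^d b_i(x_i)$, where $b_i(x_i)=\sum_{\alpha:\,|\alpha|>0}|c_\alpha|\,|x_i|^{|\alpha|}+|c_0|/d$, with $c_0$ the constant term of $H$. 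Each $b_i$ is therefore a finite nonnegative combination of absolute powers of a single variable.

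It then remains to check $b_i\in L_1(r_i(x_i)\,{\rm d}x_i)$, which is immediate: each $b_i$ is a finite linear combination of functions $|x_i|^{n}$, and the hypothesis that $r_i$ has finite moments of every order gives $\int|x_i|^n r_i(x_i)\,{\rm d}x_i<\infty$ for every $n$, so $\int b_i(x_i) r_i(x_i)\,{\rm d}x_i<\infty$. Combining this with the bound of the previous paragraph completes the verification of \eqref{eq:feasible-equivalent}. I expect the only genuinely delicate point to be the separation of the cross terms of $H$ into sums of single-variable powers via the weighted AM--GM bound; once that inequality is in hand, the mixed-sign monomials and the constant term are handled automatically by passing to absolute values at the outset, and the integrability of the resulting $b_i$ is a routine consequence of the moment hypothesis.
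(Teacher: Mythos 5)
Your proof is correct and follows essentially the same route as the paper: both arguments separate each monomial into a sum of single-variable powers via the arithmetic--geometric mean inequality and then invoke the finite-moment hypothesis to get integrable majorants $b_i$. The only cosmetic difference is that the paper first reduces to a single monomial using the convexity of the set of moderately feasible functions (and then applies unweighted AM--GM with exponents $\alpha_i d$), whereas you sum the weighted AM--GM majorants over all monomials directly, which bypasses the convexity step.
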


\begin{corollary}
\label{corollary:Lipschitz}
 Suppose that 
 for each $i=1,\ldots,d$,
 $(\mathcal{X}_{i},d_i)$ is a metric space
 and there exists $x^{0}_{i}\in\mathcal{X}_{i}$ for which
 $d_{i}(x_{i},x^{0}_{i})\in L_{1}(r_{i}(x_{i})\mathrm{d}x_{i})$.
 Then, any Lipschitz function $H(x)$ with respect to  $d_{(p)}(x,y):=\|(d_{i}(x_{i},y_{i}))_{i=1,\ldots,d}\|_{p}$ for some $p\in[1,\infty]$ is moderately feasible,
 where $\|\cdot\|_{p}$ is the $p$-norm.
\end{corollary}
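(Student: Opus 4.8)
The plan is to verify moderate feasibility directly by exhibiting explicit integrable functions $b_i$ that satisfy the defining inequality~(\ref{eq:feasible-equivalent}). The guiding observation is that a Lipschitz function grows at most linearly in the underlying metric, while the hypothesis that each $d_i(\cdot,x_i^0)$ is integrable against $r_i$ says precisely that this linear growth is integrable. So the whole task reduces to turning a metric bound into a separable, variable-by-variable bound.

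First I would fix the base point $x^0=(x_1^0,\ldots,x_d^0)\in\mathcal{X}$ assembled from the points supplied by the hypothesis, and let $L$ denote the Lipschitz constant of $H$ with respect to $d_{(p)}$. Applying the Lipschitz inequality with $y=x^0$ yields the one-sided bound $H(x)\le H(x^0)+L\,d_{(p)}(x,x^0)=H(x^0)+L\,\|(d_i(x_i,x_i^0))_{i=1}^d\|_p$. The key step is then to pass from the $p$-norm to the $1$-norm: since $\|v\|_p\le\|v\|_1$ for every $p\in[1,\infty]$ (the $p$-norm of a nonnegative vector is nonincreasing in $p$, the endpoint $p=\infty$ included), we obtain $H(x)\le H(x^0)+L\sum_{i=1}^d d_i(x_i,x_i^0)$.

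This inequality separates the variables, which is exactly what moderate feasibility requires. Setting $b_i(x_i):=L\,d_i(x_i,x_i^0)$ makes $H(x)-\sum_{i=1}^d b_i(x_i)\le H(x^0)$, a constant, so the integrand in~(\ref{eq:feasible-equivalent}) is bounded above by $e^{H(x^0)}$ and the integral is at most $e^{H(x^0)}<\infty$. It remains only to check that each $b_i$ lies in $L_1(r_i(x_i)\mathrm{d}x_i)$, but this is immediate since $b_i$ is a constant multiple of $d_i(\cdot,x_i^0)$, which is integrable by assumption.

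I do not expect a serious obstacle; the argument is essentially a bookkeeping exercise once the norm monotonicity is invoked. The only points genuinely requiring care are the elementary inequality $\|v\|_p\le\|v\|_1$ uniformly across $p\in[1,\infty]$, and the measurability of $x_i\mapsto d_i(x_i,x_i^0)$, which follows from continuity of the metric. As a byproduct, the same two-sided version of the estimate shows $\int|H(x)|\,p_0(x)\,\mathrm{d}x\le |H(x^0)|+L\sum_{i=1}^d\int d_i(x_i,x_i^0)\,r_i(x_i)\,\mathrm{d}x_i<\infty$, so $H\in L_1(p_0(x)\mathrm{d}x)$ and Theorem~\ref{theorem:feasible} upgrades moderate feasibility to feasibility.
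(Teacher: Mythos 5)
Your argument is correct and is essentially the same as the paper's: both apply the Lipschitz bound at the base point $x^0$, use $\|v\|_p\le\|v\|_1$ to separate the metric into a sum $\sum_i d_i(x_i,x_i^0)$, take $b_i = L\,d_i(\cdot,x_i^0)$, and conclude that the integral in~(\ref{eq:feasible-equivalent}) is bounded by $e^{H(x^0)}\int p_0(x)\,\mathrm{d}x<\infty$ with each $b_i$ integrable by hypothesis. Your closing remark that the two-sided estimate also gives $H\in L_1(p_0(x)\mathrm{d}x)$ is a small but welcome addition, since that integrability is a standing hypothesis of Theorem~\ref{theorem:feasible} that the paper's proof leaves implicit.
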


\subsection{Practical examples} \label{subsection:examples}

In this subsection, 
we provide various examples of the minimum information dependence model by applying Theorem~\ref{theorem:feasible}.
Denote the canonical part of the model by $H_\theta(x)=\theta^\top h(x)$.

\begin{example}[Gaussian] \label{example:Gaussian}
 For each $i=1,\ldots,d$,
 let $r_i (x_{i};\nu)$ be the Gaussian density $\phi(x_i;\mu_i,\sigma_i^2)$, where $\nu=(\mu_{1},\ldots,\mu_{d},\sigma_1,\ldots,\sigma_{d})$.
 Then, a quadratic function
 $ H_\theta(x)=\sum_{i<j}\theta_{ij}x_ix_j $
 is feasible for any real vector $\theta=(\theta_{ij})\in\mathbb{R}^{d(d-1)/2}$ because the condition of Corollary~\ref{corollary:polynomial} is satisfied.
 The obtained model $p_\theta(x)$ is simply the multivariate normal density.
 Indeed, there exists a unique positive definite matrix $\Sigma=(\sigma_{ij})$ such that $\sigma_{ii}=\sigma_i^2$ and $(\Sigma^{-1})_{ij}=\theta_{ij}$ (\cite{dempster1972covariance}).
 We also point out that the covariance selection model (Gaussian graphical model) is specified by the set of edges $(i,j)$ such that $\theta_{ij}=0$.
 See \cite{Lauritzen} and \cite{Whittaker1990} for details of the covariance selection model.
\end{example}

The following example deals with three-dimensional interaction. 
We emphasize that it is not easy to construct such a model if we use exponential families.

\begin{example}[Three-dimensional interaction]
\label{example: 3-dim interaction}
 Let $d=3$ and $r_i$ be the standard normal density for $i=1,2,3$ and define
 $ H_\theta(x_1,x_2,x_3) = \theta x_1x_2x_3$,
 where $\theta$ is a real parameter. The function is feasible by Corollary~\ref{corollary:polynomial}.
 The obtained distribution can describe Simpson's paradox for continuous variables, that is, the conditional correlation coefficient between the first and second variables given the third variable depends on the value of the third variable; Gaussian distributions do not have this property.
 As $\theta\to\infty$, the distribution tends to a distribution supported on $\{x\in\mathbb{R}^3\mid |x_1|=|x_2|=|x_3|,\ x_1x_2x_3\geq 0\}$ that is the optimal coupling between the three marginal distributions:
 \begin{align*}
 &\mathrm{Minimize} \,\, -\int x_{1}x_{2}x_{3}\, p(x_{1},x_{2},x_{3})\mathrm{d}x
 \\
 &\mathrm{subject \,\, to} \,\, p \,\,\mathrm{with \,\, marginals \,\, equal \,\, to}\,\, r_{i}\text{s}.
 \end{align*}
 We can see this convergence in more detail through the relationship with the entropic optimal transport problem discussed in Appendix B.3 in the appendix.
\end{example}

\begin{example}[count data] \label{example:count}
 Suppose that $(\mathcal{X}_i,{\rm d}x_i)$ for $i=1,2$ is the set of non-negative integers with the counting measure.
 For each $i$, let $r_i(x_i;\nu_i)=(\nu_i^{x_i}/x_i!)e^{-\nu_i}$ be the Poisson distribution with mean $\nu_{i}>0$ and define $H_\theta(x_1,x_2)=\theta x_1x_2$ for $\theta\in\mathbb{R}$.
 Then we can see that $H_\theta$ is feasible for any $\theta\in\mathbb{R}$ from Corollary~\ref{corollary:polynomial}.
 The parameter $\theta$ controls the correlation between the two variables.
 The range of the Pearson correlation coefficient is a proper subset of $(-1,1)$ that depends on $\nu=(\nu_1,\nu_2)$.
 Table~\ref{tab:Poisson} shows numerical evaluation of the lower and upper bounds of the correlation coefficient for several values of $\nu_2$ with $\nu_1$ set to $1$, where the Sinkhorn--Knopp algorithm (\cite{SinkhornKnopp1967}) is applied for this evaluation; see Appendix B.3 in the appendix for details of the computation.
\cite{Geenens2020} discusses another construction of bivariate distributions with Poisson marginals and a negative association.

\begin{table}[tb]
    \caption{\label{tab:Poisson}Ranges of the Pearson correlation coefficient for the Poisson marginal model. Parameter $\nu_1$ is set to 1.}
  \centering
\fbox{%
\begin{tabular}{c|ccccc}
\multicolumn{1}{}{}& \multicolumn{5}{c}{$\nu_2$}\\
        & 0.25& 0.5& 1& 2& 4 \\
        \hline
        upper bound  & 0.82&   0.87& 0.99& 0.94& 0.93\\
        lower bound  & $-0.50$& $-0.67$& $-0.74$& $-0.81$& $-0.87$
\end{tabular}}
\end{table}

 \end{example}

\begin{example}[mixed variables] \label{example:mixed}
In our model,
the domains $\mathcal{X}_1,\ldots,\mathcal{X}_d$ are not necessarily identical.
 Suppose that $d=2$, $\mathcal{X}_1=\mathbb{R}$ and $\mathcal{X}_2=\{0,1\}$ for simplicity.
 (The argument presented herein is easily generalized to other domains and higher-dimensional cases.)
 Then, the minimum information dependence model is
 $p(x_1,x_2;\theta,\nu)
  = \exp(\theta^\top h(x_1,x_2) - \tilde{a}_1(x_1;\theta,\nu) - \tilde{a}_2(x_2;\theta,\nu) - \psi(\theta,\nu))$,
 where $\tilde{a}_i(x_i;\theta,\nu)=a_i(x_i;\theta,\nu)-\log r_i(x_i;\nu)$.
 The interaction between the continuous and discrete variables is described by the canonical statistic $h(x_1,x_2)$.

 The model is interpreted as regression models in two ways as follows.
 First, let $x_1$ be the explanatory variable and $x_2$ be the response variable. The conditional density of $x_2=1$ given $x_1$ is written as
 \begin{align*}
  p(x_2=1|x_1;\theta,\nu)
  &= \frac{1}{1+\exp(- \theta^\top u(x_1)-\alpha(\theta,\nu))},
 \end{align*}
 where $u(x_1)=h(x_1,1)-h(x_1,0)$ and $\alpha(\theta,\nu)=-\tilde{a}_2(1;\theta,\nu)+\tilde{a}_2(0;\theta,\nu)$.
 This is the logistic regression model, except that the intercept term $\alpha(\theta,\nu)$ depends on the other regression coefficient $\theta$. Since $\nu$ is a nuisance parameter, we can treat $\alpha(\theta,\nu)$ as a nuisance parameter. More precisely, it is shown that there is a one-to-one correspondence between $\alpha(\theta,\nu)$ and $r_2(1;\nu)$, given $\theta$, by Theorem~\ref{theorem:feasible}.
 Thus, the proposed model is equivalent to the logistic model in this sense, and
 the difference is which of $\alpha$ and $r_2$ is specified first.

 Next, let $x_1$ be the response variable and $x_2$ be the explanatory variable. The conditional density of $x_1$ given $x_2$ is
 \[
 p(x_1|x_2;\theta,\nu) = \exp(\theta^\top h(x_1,x_2)-\psi_2(\theta,\nu|x_2))m(x_{1};\theta,\nu),
 \]
 where $m(x_{1};\theta,\nu):=e^{-a_{1}(x_{1};\theta,\nu)}r_{1}(x_{1};\nu)$ is the base measure independent from $x_{2}$, and
 $\psi_2(\theta,\nu|x_2)$ is the normalizing constant that makes $\int p(x_1|x_2){\rm d}x_1=1$.
 This is just a generalized linear model except that the base measure depends on the regression coefficient $\theta$. From Theorem~\ref{theorem:feasible}, the base measure has a one-to-one correspondence with the marginal density for a given $\theta$.

 An important consequence here is that the two regression models of opposite direction are derived from a common joint density function. This is in contrast to the traditional regression approach for mixed variables (e.g.,\ Chapter 6 of \cite{Lauritzen} and Chapter 11 of \cite{Whittaker1990}).
\cite{yang2015graphical} proposed a mixed-variable model based on univariate conditional exponential families,
 but the model cannot avoid the restriction on the parameter space due to the integrability.
The difference between our model and the model proposed by \cite{yang2015graphical} is that \cite{yang2015graphical}'s model specifies the base measure a priori instead of the marginal densities, while our model specifies marginal densities instead of the base measure.
\end{example}

\begin{remark}[Choice of $h$]
 The choice of canonical statistics $h$ is one of important issues for practitioners.
 This section and Appendix \ref{sec: more examples} give several examples of $h$. The standard choice of $h$ has the form $h(x)=\prod_{i=1}^{d}h_{i}(x_{i})$ with $h_{i}:\mathcal{X}_{i}\to\mathbb{R}$.
 There are several ways to choose $h_{i}$: The first is to use polynomial functions in a suitable sense. The second is to use monotone functions in a suitable sense. The third is to employ known embedding functions; for example, see Appendix D.2. In any case, the conditional inference in Section \ref{section:conditional} is applicable, and the selection of $h$ using the conditional likelihood is possible. 
\end{remark}

\section{Inference} \label{section:conditional}

In this section, we consider the inference for $\theta$ using the conditional likelihood as well as the sampling algorithm for the minimum information dependence model.

\subsection{Decomposition of the likelihood}

Suppose that $x(1),\ldots,x(n)$ are independent and identically distributed (i.i.d.) according to a density in a minimum information dependence model (\ref{eq:min-info}).
Denote the components of $x(t)$ as $x(t)=(x_i(t))_{i=1}^d$.

We decompose the likelihood function into a marginal part and a dependent part using an order and a rank. By the well-ordering principle,
we can define a total order $\leq_i$
on $\mathcal{X}_i$ for each $i=1,\ldots,d$.
Using the ordering is convenient for the following description and, 
as we shall see in Lemma \ref{lem: decomposition}, the choice does not affect the inference. 
Also, the following remark provides a standard choice of the ordering.

\begin{remark}[Observational order]
Given the $n$ observations $(x(t))_{t=1}^{n}$,
we can use the ``observational order'', $x_i(t) \le_{i} x_i(s)$ 
if $t\le s$, as if they are predetermined.
The observational order makes implementation easier since it does not require sorting.
\end{remark}

For each $i=1,\ldots,d$, define the set of $i$-th marginal values by
\begin{align}
  M_i(1) \leq_i \cdots \leq_i M_i(n),
 \label{eq:marginal-information}
\end{align}
where for each $i$, $M_i=(M_i(1),\ldots,M_i(n))$ are the $n$ observations $(x_i(t))_{t=1}^n$ sorted by the predetermined order $\leq_i$.
We call it the {\em marginal order statistic}.
Define the {\em rank statistic} by a permutation $\pi_i=(\pi_i(t))_{t=1}^n\in\mathbb{S}_n$ such that $x_i(t)=M_i(\pi_i(t))$,
where 
we denote the symmetric group of degree $n$ by $\mathbb{S}_n$.
If there are ties of observations, we choose $\pi$ with equal probability over the set of permutations giving the same observations.
Denote the vector of all statistics as $M=(M_1,\ldots,M_d)\in\prod_{i=1}^d\mathcal{X}_i^n$ and $\pi=(\pi_1,\ldots,\pi_d)\in\mathbb{S}_n^d$.
For each $t=1,\ldots,n$,
the $t$-th observation $x(t)$ is recovered from $M$ and $\pi$, and is written as
\[
x(t) = (M\circ\pi)(t) = (M_i(\pi_i(t)))_{i=1}^d.
\]

\begin{example}
 For illustrative purposes, consider discrete spaces $\mathcal{X}_1=\{{\sf a},{\sf b},{\sf c}\}$ and $\mathcal{X}_2=\{1,2\}$ equipped with the alphabetical and numeric orders, respectively.
 Suppose that we have the $4$ observations
 $x(1) = ({\sf c},2),\ x(2) = ({\sf c},1),\ x(3) = ({\sf b},2),\ x(4) = ({\sf a},1)$.
 Then, the marginal order statistics are $M_1=({\sf a},{\sf b},{\sf c},{\sf c})$ and $M_2=(1,1,2,2)$.
 The rank statistic $\pi_1$ is an element of $\{(3,4,2,1),(4,3,2,1)\}$ with equal probability $1/2$.
 Suppose that we choose $\pi_1=(3,4,2,1)$.
 Similarly, choose $\pi_2=(3,1,4,2)$ from four possible permutations.
 Then, the observations are recovered from $M$ and $\pi$ as
 \begin{align*}
  (M\circ\pi)(1) &= (M_1(\pi_1(1)),M_2(\pi_2(1))) = (M_1(3),M_2(3)) = ({\sf c},2),\\
  (M\circ\pi)(2) &= (M_1(\pi_1(2)),M_2(\pi_2(2))) = (M_1(4),M_2(1)) = ({\sf c},1),\\
  (M\circ\pi)(3) &= (M_1(\pi_1(3)),M_2(\pi_2(3))) = (M_1(2),M_2(4)) = ({\sf b},2),\\
  (M\circ\pi)(4) &= (M_1(\pi_1(4)),M_2(\pi_2(4))) = (M_1(1),M_2(2)) = ({\sf a},1).
 \end{align*}
 \end{example}

The following lemma implies that
the full likelihood is decomposed as the product of 
the conditional likelihood independent of the marginal parameter $\nu$
and the marginal likelihood.

\begin{lemma}\label{lem: decomposition}
The likelihood function is decomposed as
\begin{align*}
 L(M,\pi;\theta,\nu) := \prod_{t=1}^n p((M\circ\pi)(t); \theta,\nu)
 = f(\pi|M;\theta) g(M;\theta,\nu),
\end{align*}
where
\begin{align*}
    f(\pi|M;\theta) 
    = \frac{e^{\sum_{t=1}^{n} \theta^\top h((M\circ\pi)(t))}}
    {\sum_{\tilde\pi\in\mathbb{S}_n^d}e^{\sum_{t=1}^{n} \theta^\top h((M\circ\tilde\pi)(t))}}\ \ \text{and}\ \ 
g(M;\theta,\nu) = \sum_{\tilde\pi\in\mathbb{S}_n^d} L(M,\tilde\pi;\theta,\nu).
\end{align*}
Further, the conditional likelihood does not depend on the choice of the ordering:
\begin{align*}
f(\pi\mid M;\theta)
=\frac{e^{\sum_{t=1}^{n}\theta^{\top}h(x(t))}}{\sum_{\tilde\pi\in\mathbb{S}_n^d}e^{\sum_{t=1}^{n} \theta^\top h(( x_{i}(\tilde\pi_{i}(t)))_{i=1}^{d})}}.
\end{align*}
\end{lemma}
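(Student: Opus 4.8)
The plan is to take logarithms of the likelihood and exploit the fact that a permutation merely reorders a finite sum. Writing out
\[
\log L(M,\pi;\theta,\nu)=\sum_{t=1}^n\theta^\top h((M\circ\pi)(t)) - \sum_{t=1}^n\sum_{i=1}^d a_i(M_i(\pi_i(t));\theta,\nu) - n\psi(\theta,\nu) + \sum_{t=1}^n\sum_{i=1}^d\log r_i(M_i(\pi_i(t));\nu),
\]
I would observe that in each of the last three groups of terms the summation over $t$ can be re-indexed coordinatewise by $s=\pi_i(t)$. Since each $\pi_i$ is a bijection of $\{1,\dots,n\}$, the sums $\sum_t a_i(M_i(\pi_i(t));\theta,\nu)=\sum_s a_i(M_i(s);\theta,\nu)$, and likewise for $\log r_i$, are independent of $\pi$. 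Hence only the canonical term $\sum_t\theta^\top h((M\circ\pi)(t))$ depends on $\pi$, and we may factor $L(M,\pi;\theta,\nu)=e^{\sum_t\theta^\top h((M\circ\pi)(t))}\,C(M;\theta,\nu)$ with $C$ collecting the $\pi$-free terms.

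From this factorization the decomposition follows immediately: summing over $\tilde\pi$ gives $g(M;\theta,\nu)=\sum_{\tilde\pi}L(M,\tilde\pi;\theta,\nu)=C(M;\theta,\nu)\sum_{\tilde\pi}e^{\sum_t\theta^\top h((M\circ\tilde\pi)(t))}$, and multiplying by $f(\pi|M;\theta)$ shows that the common factor $C$ cancels against the normalizing sum in $f$, leaving $f(\pi|M;\theta)g(M;\theta,\nu)=L(M,\pi;\theta,\nu)$. This confirms the stated forms of $f$ and $g$.

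For the ordering-independence, I would show directly that $f(\pi|M;\theta)$ coincides with the claimed expression in the original observations. The numerator is immediate since $(M\circ\pi)(t)=x(t)$ by construction, so $e^{\sum_t\theta^\top h((M\circ\pi)(t))}=e^{\sum_t\theta^\top h(x(t))}$, which involves no ordering. For the denominator I would substitute $\tilde\pi_i=\pi_i\circ\sigma_i$ in each coordinate; using $M_i(\pi_i(s))=x_i(s)$ yields $M_i(\tilde\pi_i(t))=x_i(\sigma_i(t))$, and because $\sigma=(\sigma_i)_{i=1}^d$ ranges over all of $\mathbb{S}_n^d$ exactly as $\tilde\pi$ does, the denominator sum is unchanged and becomes $\sum_{\sigma}e^{\sum_t\theta^\top h((x_i(\sigma_i(t)))_{i})}$. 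Both numerator and denominator then depend only on the $x(t)$'s, not on the chosen orders $\le_i$.

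I expect the only point requiring care to be the coordinatewise relabeling in the denominator: one must track that the bijection $\tilde\pi_i\mapsto\pi_i^{-1}\circ\tilde\pi_i$ acts independently in each coordinate and that $M_i(\pi_i(\cdot))=x_i(\cdot)$ holds for every admissible tie-breaking choice of $\pi$, so that the argument is insensitive both to the ordering and to any ambiguity from ties. The reindexing in the first paragraph is routine, being nothing more than the invariance of a finite sum under permutation of its summands.
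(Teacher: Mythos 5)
Your proposal is correct and follows essentially the same route as the paper: both rest on the observation that the adjusting-function, potential, and marginal terms are invariant under re-indexing by the bijections $\pi_i$, so only the canonical term depends on $\pi$, and both derive the ordering-independence from the identity $x(t)=(M\circ\pi)(t)$ together with a relabeling of the denominator sum. Your write-up is merely more explicit than the paper's (which states the reindexing only for the adjusting functions and declares the second claim immediate), but there is no substantive difference.
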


\begin{proof}
For the former assertion,
it is sufficient to show that the conditional distribution of $\pi$ is
 \[
  \frac{L(M,\pi;\theta,\nu)}{\sum_{\tilde\pi\in\mathbb{S}_n^d}L(M,\tilde\pi;\theta,\nu)}
  = \frac{e^{\sum_{t=1}^{n} \theta^\top h((M\circ\pi)(t))}}{\sum_{\tilde\pi\in\mathbb{S}_n^d}e^{\sum_{t=1}^{n} \theta^\top h((M\circ\tilde\pi)(t))}}.
 \]
 Indeed, the adjusting-function part of $\log L(M,\pi;\theta,\nu)$ is
 \[
\sum_{j=1}^d\sum_{t=1}^n a_j(M_j(\pi_j(t));\theta,\nu) = \sum_{j=1}^d\sum_{t=1}^n a_j(M_j(t);\theta,\nu)
 \]
and does not depend on $\pi$, 
which proves the former assertion.
The latter assertion follows immediately from the identity $x(t)=(M\circ \pi)(t)$.
 \end{proof}

\begin{remark}
Let us mention that the choice or the randomization in defining rank statistics does not impact on the inference based on the conditional likelihood.
In fact, the latter part of Lemma \ref{lem: decomposition} shows the conditional likelihood
is independent from the choice of total orders $(\le_{i})_{i=1}^{d}$ and is not affected by the presence of ties, which is different from the inference of discrete copula models \citep{GenestNeslehova2007}.
\end{remark}

\subsection{Conditional maximum likelihood estimation} \label{subsection:conditional-MLE}

In this subsection,
on the basis of the conditional likelihood,
we propose 
a method of estimating $\theta$ and show that this yields a consistent estimator.

We first point out that 
the log conditional likelihood ratio
is almost the same as the log likelihood ratio
and thus utilizing the conditional likelihood for the inference of $\theta$ is reasonable. 
To show this,
we make the following assumptions.

\begin{assumption}
\label{assumption: thm 2}
The following hold:
\begin{enumerate}
\item[(1)] Let $(\mathcal{X}_{i},d_{i},\mathrm{d}x_{i})$ for $i=1,\ldots,d$ be metric measure spaces.
There exist $\kappa>0$ and $\alpha\ge 0$ such that
for each $i=1,\ldots,d$,
the  $\varepsilon$-covering number $\mathcal{N}(\mathcal{X}_{i},d_{i},\varepsilon)$ is bounded as $ \mathcal{N}(\mathcal{X}_{i},d_{i},\varepsilon) \le \kappa\varepsilon^{-\alpha}$ for sufficiently small $\varepsilon>0$.
\item[(2)]
The parameter space $\Theta$ of canonical parameter $\theta$ is bounded.
\item[(3)] The canonical statistic $h(x)$ is Lipschitz continuous with respect to~$x$ with Lipschitz constant $L_{h}$, where $\mathcal{X}$ is endowed with the 2 product metric $d_{\mathcal{X}}(x,y):=(\sum_{i=1}^{d}d^{2}_{i}(x_{i},y_{i}))^{1/2}$.
\end{enumerate}
\end{assumption}

Assumption \ref{assumption: thm 2} (1) states that 
each $\mathcal{X}_{i}$ is a totally bounded space with a finite upper Minkowski--Bouligand dimension (e.g.,\ \cite{Falconer2014}), and is quite general.
Assumption \ref{assumption: thm 2} (2) is a usual assumption.
Assumption \ref{assumption: thm 2} (3) ensures the existence of $p(x;\theta,\nu)$ by Corollary \ref{corollary:Lipschitz} and that of an approximator of $p(x;\theta,\nu)$.

Under these assumptions, we obtain the following theorem implying that the log conditional likelihood ratio (per sample) approaches to the log likelihood ratio (per sample) uniformly in $\theta$ as the sample size gets larger.
\begin{theorem} \label{theorem:negligible}
Let $\theta_{0}$ be the true canonical parameter in the parameter space $\Theta$ of $\theta$, and let $\nu_{0}$ be the true marginal parameter in the parameter space $\mathrm{N}$ of $\nu$.
Suppose that $x(1),\ldots,x(n)$ are i.i.d.~according to $p(x;\theta_{0},\nu_{0})$.
\begin{enumerate}
    \item[(1)] Under Assumption \ref{assumption: thm 2},
there exists a positive constant $C$ not depending on $n$ such that
we have
\begin{align}
    \Ep\left[\sup_{\theta\in\Theta}\frac{1}{n}
    \left|
    \log \frac{\prod_{t=1}^{n}p(x(t);\theta_{0},\nu_{0})}{\prod_{t=1}^{n}p(x(t);\theta,\nu_{0})}
    -\log \frac{f(\pi\mid M;\theta_{0})}{f(\pi\mid M;\theta)}
    \right|
    \right]
    \le C\varepsilon_{n},
    \label{eq: validity of conditional likelihood ratio}
\end{align}
where
\[\varepsilon_{n}:=\max\left\{n^{-\frac{(d-1)\alpha+1}{2(d\alpha+1)}}(\log n)^{-\frac{\alpha}{2(d\alpha+1)}},n^{-\frac{1}{d\alpha+1}}(\log n)^{\frac{1}{d\alpha+1}}\right\}.
\]
\item[(2)] In addition, under Assumption \ref{assumption: thm 2} and under the additional assumption that for $i=1,\ldots,d$, the marginal density $r_{i}(x_{i};\nu)$ is log-Lipschitz continuous with Lipschitz constant $L_{r_{i},\nu}$. where $\sup_{\nu\in \mathrm{N}}L_{r_{i},\nu}<\infty$,
there exists a positive constant $C$ not depending on $n$ such that
for any subset $\mathrm{S}\subset \mathrm{N}$ containing $\nu_{0}$,
we have
\begin{align}
    \Ep\left[\sup_{\theta\in\Theta,\nu\in\mathrm{S}}\frac{1}{n}
    \left|
    \log \frac{\prod_{t=1}^{n}p(x(t);\theta_{0},\nu_{0})}{\prod_{t=1}^{n}p(x(t);\theta,\nu)}
    -\log \frac{f(\pi\mid M;\theta_{0})}{f(\pi\mid M;\theta)}
    \right|
    \right]
    \le C
    \left\{
    \sup_{\nu\in\mathrm{S}}
    \tilde{D}(\nu_{0},\nu)
    +\varepsilon_{n}
    \right\},
    \label{eq: validity of conditional likelihood ratio 2}
\end{align}
where $\tilde{D}(\nu_{0},\nu)
:=\sum_{i=1}^{d}\{\|r_{i}(\cdot;\nu_{0})-r_{i}(\cdot;\nu)\|_{1}
+D(r_{i}(\cdot;\nu_{0}),r_{i}(\cdot;\nu))\}$ with $D(\cdot,\cdot)$ the Kullback--Leibler divergence.
\end{enumerate}
\end{theorem}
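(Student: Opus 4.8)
The plan is to leverage the exact factorization of Lemma~\ref{lem: decomposition}, which isolates into the tractable factor $f(\pi\mid M;\theta)$ all the $\theta$-dependence that survives conditioning, and to show that the remaining \emph{marginal} factor $g(M;\theta,\nu)$ carries asymptotically negligible information about $\theta$. Since $\prod_{t}p(x(t);\theta,\nu)=f(\pi\mid M;\theta)\,g(M;\theta,\nu)$ and $f(\pi\mid M;\theta)$ does not involve $\nu$, the expression inside the absolute value in both (\ref{eq: validity of conditional likelihood ratio}) and (\ref{eq: validity of conditional likelihood ratio 2}) equals $\frac1n\log\{g(M;\theta_0,\nu_0)/g(M;\theta,\nu)\}$. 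First I would make this explicit. Writing $w(x;\theta,\nu):=\theta^\top h(x)-\sum_{i=1}^d a_i(x_i;\theta,\nu)$, so that $p(x;\theta,\nu)=e^{w(x;\theta,\nu)-\psi(\theta,\nu)}\prod_{i}r_i(x_i;\nu)$, and using that $\sum_t\sum_i a_i(\cdot)$ and $\sum_t\sum_i\log r_i(\cdot)$ are invariant under the rank permutations, a direct computation from Lemma~\ref{lem: decomposition} gives
\[
\frac1n\log g(M;\theta,\nu)=R_n(\theta,\nu)+\frac1n\sum_{t=1}^n\sum_{i=1}^d\log r_i(x_i(t);\nu)+\frac{d\log n!}{n},
\]
where, with $\tilde\pi$ uniform on $\mathbb{S}_n^d$,
\[
R_n(\theta,\nu):=\frac1n\log \Ep_{\tilde\pi}\Bigl[\exp\Bigl(\sum_{t=1}^n w((M\circ\tilde\pi)(t);\theta,\nu)\Bigr)\Bigr]-\psi(\theta,\nu).
\]

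For part (1), $\nu=\nu_0$ is fixed, so the last two terms above are common to $\theta$ and $\theta_0$ and cancel in the ratio; hence the left-hand side of (\ref{eq: validity of conditional likelihood ratio}) is at most $2\,\Ep[\sup_{\theta}|R_n(\theta,\nu_0)|]$, and the whole of part (1) reduces to proving $\Ep[\sup_\theta|R_n(\theta,\nu_0)|]\le C\varepsilon_n$. This is the crux. The quantity $R_n$ is a finite-sample \emph{free-energy gap}: by the Gibbs variational principle, $\Ep_{\tilde\pi}[e^{\sum_t w}]$ is an exponential moment of the empirical coupling $\frac1n\sum_t\delta_{(M\circ\tilde\pi)(t)}$, whose marginals are the empirical marginals $\hat r_i^{(n)}$, so that $\frac1n\log\Ep_{\tilde\pi}[e^{\sum_t w}]$ is, up to a large-deviation error governed by the multi-information, $\sup_Q\{\int w\,dQ-D(Q,\prod_i\hat r_i^{(n)})\}$ over couplings $Q$ with those marginals. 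The unconstrained Gibbs maximizer $e^{w-\psi}\prod_i r_i=p(\cdot;\theta,\nu_0)$ has marginals $r_i(\cdot;\nu_0)$ by the marginal condition (\ref{eq:marginal-condition}), matching the prescribed marginals up to the fluctuation $\hat r_i^{(n)}-r_i$; consequently this supremum equals $\psi(\theta,\nu_0)+o(1)$ and $R_n(\theta,\nu_0)\to0$.

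To upgrade this to the rate $\varepsilon_n$ I would discretize each $\mathcal{X}_i$ into $\lesssim\kappa\varepsilon^{-\alpha}$ cells of diameter at most $\varepsilon$ (Assumption~\ref{assumption: thm 2}(1)), replace $h$ by a piecewise-constant proxy at a cost $\lesssim L_h\varepsilon$ (Assumption~\ref{assumption: thm 2}(3)), reduce the permutation sum to a contingency-table/transportation sum over the cells, and identify its normalized logarithm with the discretized $\psi$ through the associated Sinkhorn potentials, the discrete analogues of the $a_i$ whose feasibility is guaranteed by Corollary~\ref{corollary:Lipschitz}. The deterministic discretization bias is $\lesssim\varepsilon$, and it must be balanced against two stochastic error sources---the deviation of $\hat r_i^{(n)}$ from $r_i(\cdot;\nu_0)$ and the concentration of the permutation partition function around its annealed value---whose magnitudes depend on the number of cells $\asymp\varepsilon^{-\alpha}$ per coordinate; optimizing $\varepsilon$ produces exactly the two competing terms in the definition of $\varepsilon_n$. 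Uniformity over the bounded set $\Theta$ (Assumption~\ref{assumption: thm 2}(2)) follows from a covering argument in $\theta$, using that $\theta\mapsto R_n(\theta,\nu_0)$ is Lipschitz with a constant controlled by $\sup_x|h(x)|$ on the discretized support. \textbf{I expect this partition-function estimate to be the main obstacle}: it requires a quantitative large-deviation bound for a multi-marginal permutation ensemble, a discretization tied to the covering numbers, and uniformity in $\theta$, all at once.

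For part (2) I would split
\[
\frac1n\log\frac{g(M;\theta_0,\nu_0)}{g(M;\theta,\nu)}=\bigl[R_n(\theta_0,\nu_0)-R_n(\theta,\nu_0)\bigr]+\bigl[R_n(\theta,\nu_0)-R_n(\theta,\nu)\bigr]+\frac1n\sum_{t,i}\log\frac{r_i(x_i(t);\nu_0)}{r_i(x_i(t);\nu)},
\]
so that the first bracket is bounded by part (1) and yields the $\varepsilon_n$ term. The last sum has mean $\sum_i D(r_i(\cdot;\nu_0),r_i(\cdot;\nu))$, the Kullback--Leibler part of $\tilde D(\nu_0,\nu)$, and its fluctuation over $\nu\in\mathrm{S}$ is controlled, via the log-Lipschitz assumption on $r_i$ and the same covering argument, by a term absorbed into $\varepsilon_n$. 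It remains to bound the $\nu$-perturbation of $R_n$; since $R_n(\theta,\nu)=\frac1n\log\{Z_n(\theta;M)/(n!)^d\}-\frac1n\sum_{t,i}a_i(x_i(t);\theta,\nu)-\psi(\theta,\nu)$ with $Z_n(\theta;M):=\sum_{\tilde\pi}e^{\sum_t\theta^\top h((M\circ\tilde\pi)(t))}$ free of $\nu$, this difference equals a difference of adjusting functions and potentials between $\nu_0$ and $\nu$, which I would bound by $\sum_i\|r_i(\cdot;\nu_0)-r_i(\cdot;\nu)\|_1$---the total-variation part of $\tilde D$---through a stability estimate for the Schr\"odinger/Sinkhorn potentials of the minimum information model under perturbation of the input marginals. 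The log-Lipschitz control keeps $a_i(\cdot;\theta,\nu)$ uniformly Lipschitz so that part~(1) applies uniformly over $\nu\in\mathrm{S}$. Combining the three contributions gives the bound $C\{\sup_{\nu\in\mathrm{S}}\tilde D(\nu_0,\nu)+\varepsilon_n\}$. The delicate new ingredient here is the total-variation stability of the potentials, which I would derive by perturbing the defining equations (\ref{eq:marginal-condition})--(\ref{eq:zero-mean-condition}) and invoking the strict convexity of $\psi$ in $\theta$ noted after the definition.
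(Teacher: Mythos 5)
Your proposal follows essentially the same route as the paper: reduce the problem to the marginal likelihood ratio $\frac1n\log\{g(M;\theta_0,\nu_0)/g(M;\theta,\nu)\}$, discretize via an $\varepsilon$-net governed by the covering numbers, convert the permutation sum to a contingency-table sum whose normalized logarithm is identified (via Stirling/Gibbs variational arguments) with an entropic optimal transport value, control the empirical marginals by multinomial deviation inequalities and the potentials by marginal-stability of the entropic transport problem, and optimize $\varepsilon$ to obtain $\varepsilon_n$. The step you flag as the main obstacle—the quantitative partition-function estimate—is exactly what the paper's Steps 2--3 carry out with the Stirling and log-sum-exp bounds, so your plan is sound and matches the published argument.
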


The proof of the theorem is given in Appendix C.3 in the appendix. 
The main ingredients of the proof are
the approximation using $\varepsilon$-net,
the Stirling approximation (e.g.,\ \cite{Robbins1955}), 
the $\ell_{1}$ and the Kullback--Leibler deviation inequalities for the multinomial distribution (\cite{Weissman2003} and \cite{Agrawal2020}),
 the recent quantitative stability result for the entropic optimal transport with respect to the marginals (e.g., \cite{EcksteinNutz_arXiv}),
and the Pythagorean theorem for the minimum information dependence model (Theorem S.1 in Appendix B.2 of the appendix).

Theorem \ref{theorem:negligible} implies that the marginal order statistics $M$ are almost ancillary for the dependence parameter $\theta$, that is, they contain little information about $\theta$:
\begin{align*}
\Ep\left[ \sup_{\theta\in \Theta}\left|\frac{1}{n}\log \frac{g(M;\theta_{0},\nu_{0})}{g(M;\theta,\nu_{0})}
\right|
\right]
\to 0.
\end{align*}
In the literature, there are several results related to Theorem \ref{theorem:negligible}.
In the bivariate Gaussian model, 
the sample variances are shown to be almost ancillary for the correlation parameter; see Example 2.30 of \cite{CoxHinkley}.
In contingency tables,
\cite{choi2015elucidating} and \cite{Little1989} have also shown that the marginal frequency is an almost ancillary statistic for the dependence parameter.

Henceforth, fix the marginal order statistics $M$ and denote
$ h_*(\pi) = \sum_{t=1}^n h((M\circ\pi)(t)) \in\mathbb{R}^K$ 
for simplicity.
The conditional likelihood is then expressed as
\begin{align}
 f(\pi|M;\theta) = \frac{e^{\theta^\top h_*(\pi)}}{\sum_{\tilde\pi\in\mathbb{S}_n^d}e^{\theta^\top h_*(\tilde\pi)}}
  \label{eq:conditional-likelihood}
\end{align}
and forms an exponential family with a canonical parameter $\theta$ and a sufficient statistic $h_*(\pi)$.
Exponential families on permutations without conditioning and their limiting behavior were investigated by \cite{Mukherjee2016}, where the limit of the model was shown to be a minimum information copula model under suitable conditions.

\begin{definition}
The conditional maximum likelihood estimate (CLE) $\hat\theta$ is a maximizer of the conditional likelihood (\ref{eq:conditional-likelihood}).
\end{definition}

Using Theorem \ref{theorem:negligible}, we obtain the consistency of CLE.
To show this, we make an additional assumption.

\begin{assumption}
\label{assumption: cor 4}
There exists a $p(\cdot;\theta_{0},\nu_{0})$-square-integrable function $A(\cdot)$ such that
\[\left|\sum_{i=1}^{d}a_{i}(x_{i};\theta,\nu_{0})-\sum_{i=1}^{d} a_{i}(x_{i};\theta',\nu_{0})\right|<A(x)\|\theta-\theta'\|\quad\text{for} \quad \theta\ne \theta' \in \Theta.\]
\end{assumption}

This assumption ensures that the log-likelihood of the minimum information dependence model forms Glivenko--Cantelli class (e.g., \cite{vanderVaartandWellner}),
and is expected by the quantitative stability result for the entropic optimal transport with respect to the cost (e.g., \cite{EcksteinNutz_arXiv}).

Then, we obtain the following consistency result of CLE $\hat{\theta}$.
\begin{corollary}
\label{cor:clemle-negligible}
 Under Assumptions \ref{assumption: thm 2} and \ref{assumption: cor 4},
 we have $\hat{\theta}\to \theta_{0}$ in probability.
\end{corollary}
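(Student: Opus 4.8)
The plan is to run the standard M-estimation (argmax) consistency argument: show that the normalized conditional log-likelihood ratio converges uniformly over $\theta\in\Theta$ to a nonrandom limit that is uniquely, and well-separatedly, maximized at $\theta_{0}$, and then invoke the argmax theorem (e.g., Theorem~5.7 of \cite{vanderVaartandWellner}). Concretely, set $\mathbb{M}_{n}(\theta):=\frac{1}{n}\log f(\pi\mid M;\theta)-\frac{1}{n}\log f(\pi\mid M;\theta_{0})$, so that the CLE $\hat\theta$ maximizes $\mathbb{M}_{n}$, and let $\ell_{\theta}(x):=\log\{p(x;\theta_{0},\nu_{0})/p(x;\theta,\nu_{0})\}$. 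The goal is to prove $\sup_{\theta\in\Theta}|{-\mathbb{M}_{n}(\theta)}-D(\theta_{0},\theta)|\to 0$ in probability, where $D(\theta_{0},\theta):=\Ep_{\theta_{0},\nu_{0}}[\ell_{\theta}(X)]$ is the Kullback--Leibler divergence, and that $D(\theta_{0},\cdot)$ has a well-separated minimum at $\theta_{0}$.

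First I would treat the full-likelihood side and establish the uniform law of large numbers $\sup_{\theta\in\Theta}|\frac{1}{n}\sum_{t}\ell_{\theta}(x(t))-D(\theta_{0},\theta)|\to 0$. This is precisely where Assumption~\ref{assumption: cor 4} enters: combined with the Lipschitz continuity of $h$ (Assumption~\ref{assumption: thm 2}(3)) and the boundedness of $\Theta$, the map $\theta\mapsto\ell_{\theta}(x)$ is Lipschitz in $\theta$ with a square-integrable (hence $p(\cdot;\theta_{0},\nu_{0})$-integrable) envelope governed by $A(x)$ and $L_{h}$. Therefore $\{\ell_{\theta}:\theta\in\Theta\}$ has finite bracketing numbers over the bounded set $\Theta$ and is a Glivenko--Cantelli class, which yields the uniform LLN with population limit $D(\theta_{0},\theta)$.

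Next I would transfer this bound to the conditional likelihood and identify the limit. Inequality (\ref{eq: validity of conditional likelihood ratio}) of Theorem~\ref{theorem:negligible}(1) bounds the expectation of the $\sup_{\theta}$ difference between the full and conditional log-likelihood-ratio processes by $C\varepsilon_{n}\to 0$, so Markov's inequality shows this difference tends to $0$ in probability; combined with the uniform LLN above, $\mathbb{M}_{n}$ converges uniformly in probability to $\mathbb{M}(\theta):=-D(\theta_{0},\theta)$. To locate the maximizer I would compute $D$ explicitly. Because $p(\cdot;\theta,\nu_{0})$ and $p(\cdot;\theta_{0},\nu_{0})$ share the marginals $r_{i}(\cdot;\nu_{0})$, the zero-mean condition (\ref{eq:zero-mean-condition}) forces $\Ep_{\theta_{0},\nu_{0}}[\sum_{i}a_{i}(X_{i};\theta,\nu_{0})]=\sum_{i}\int a_{i}(x_{i};\theta,\nu_{0})r_{i}(x_{i};\nu_{0})\mathrm{d}x_{i}=0$ for every $\theta\in\Theta$, so both adjusting-function terms in $D$ vanish; using $\nabla_{\theta}\psi(\theta_{0},\nu_{0})=\Ep_{\theta_{0},\nu_{0}}[h(X)]$ this reduces to
\[
D(\theta_{0},\theta)=\psi(\theta,\nu_{0})-\psi(\theta_{0},\nu_{0})-\nabla_{\theta}\psi(\theta_{0},\nu_{0})^{\top}(\theta-\theta_{0}),
\]
the Bregman divergence of the potential. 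By the strict convexity of $\psi$ in $\theta$, this is nonnegative and vanishes only at $\theta=\theta_{0}$, and continuity on the bounded set $\Theta$ gives $\inf_{\|\theta-\theta_{0}\|\ge\delta}D(\theta_{0},\theta)>0$ for each $\delta>0$. The argmax theorem then delivers $\hat\theta\to\theta_{0}$ in probability.

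The main obstacle will be the uniform-in-$\theta$ control rather than the pointwise behavior: verifying the Glivenko--Cantelli property from Assumption~\ref{assumption: cor 4} and, more delicately, correctly coupling the two sources of randomness---the marginal order statistics $M$ and the ranks $\pi$---when passing from the full to the conditional objective through Theorem~\ref{theorem:negligible}. Once the uniform convergence is in hand, the identification of $D$ as a Bregman divergence (via the shared marginals and the zero-mean condition) and the resulting well-separation are comparatively routine.
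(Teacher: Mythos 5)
Your proposal is correct and follows essentially the same route as the paper: both are argmax-consistency arguments in which Theorem~\ref{theorem:negligible}(1) bridges the conditional and full log-likelihood ratios, Assumption~\ref{assumption: cor 4} (with the Lipschitz/bounded $h$ from Assumption~\ref{assumption: thm 2}) supplies the uniform law of large numbers for the adjusting-function part, and the limit criterion is shown to be uniquely maximized at $\theta_{0}$ before invoking the van der Vaart--Wellner consistency theorem. The only cosmetic differences are that the paper sandwiches the CLE against the MLE to show it nearly maximizes the full-likelihood criterion rather than transferring the uniform LLN to the conditional criterion directly, and it identifies the unique maximizer via Jensen's inequality plus linear independence of $h$ modulo additive functions rather than via your (equally valid) Bregman-divergence representation of the Kullback--Leibler divergence.
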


We then consider the asymptotic variance of $\hat{\theta}$.
Let $\Psi(\theta)$ be the potential function of the conditional likelihood (\ref{eq:conditional-likelihood}) as an exponential family, that is,
$\Psi(\theta)
 = \Psi(\theta|M)
 = \log\left(\sum_{\tilde\pi\in\mathbb{S}_n^d} e^{\theta^\top h_*(\tilde\pi)}\right)$.
Then the conditional likelihood (\ref{eq:conditional-likelihood}) can be written as $f(\pi|M)=\exp(\theta^\top h_*(\pi)-\Psi(\theta))$.
Denote the derivative with respect to the parameter as $\partial_j=\partial/\partial\theta_j$.
The expectation parameter and Fisher information matrix are
\begin{align}
     \mu_j(\theta) &= \partial_j\Psi(\theta) = E[h_{*j}(\pi)|M] \ \text{and}
     \label{eq:conditional-mu}\\
    G_{jk}(\theta) &= \partial_j\partial_k\Psi(\theta) = E[\{h_{*j}(\pi)-\mu_j(\theta)\}\{h_{*k}(\pi)-\mu_k(\theta)\}|M], \label{eq:conditional-Fisher}
 \end{align}
 respectively, where $h_{*j}(\pi)$ is the $j$-th element of $h_*(\pi)$ and the expectation is taken with respect to $f(\pi|M;\theta)$.

 For 2 by 2 contingency tables, CLE was shown to be asymptotically normal and efficient by \cite{harkness1965properties}, based on the asymptotic form of the non-central hypergeometric distribution (\cite{hannan1963normal}); see also \cite{kou1996asymptotics}. For general contingency tables, the following theorem holds.

 \begin{theorem}[\cite{haberman1977analysis}, Theorem 4.1] \label{theorem:asymptotic-normality}
   Suppose that $\mathcal{X}_1,\ldots,\mathcal{X}_d$ are finite.
   Then, the asymptotic distribution of $\sqrt{n}(\hat\theta-\theta)$ is ${\rm N}(0,g^{jk}(\theta))$, where $g^{jk}(\theta)$ is the inverse of the Fisher information matrix $g_{jk}(\theta)$ with respect to $\theta$ of the model (\ref{eq:min-info}).
   The conditional Fisher information matrix in (\ref{eq:conditional-Fisher}) satisfies $G_{jk}(\theta)/n\to g_{jk}(\theta)$ in probability.
 \end{theorem}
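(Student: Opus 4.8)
The plan is to treat the conditional likelihood (\ref{eq:conditional-likelihood}) as a regular exponential family on the finite configuration space $\mathbb{S}_n^d$ (equivalently, on the set of contingency tables whose one-way margins are fixed by $M$), with canonical parameter $\theta$, sufficient statistic $h_*(\pi)$, potential $\Psi(\theta)$, mean $\mu(\theta)$ from (\ref{eq:conditional-mu}), and Fisher information $G(\theta)$ from (\ref{eq:conditional-Fisher}). Since $\mathcal{X}_1,\ldots,\mathcal{X}_d$ are finite, $h_*(\pi)$ takes finitely many values and $\Psi$ is smooth and strictly convex, so the CLE $\hat\theta$ solves the score equation $h_*(\pi)=\mu(\hat\theta)$. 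The argument is then the standard two-step M-estimation scheme: Taylor-expand the score around $\theta$, supply a central limit theorem for $n^{-1/2}(h_*(\pi)-\mu(\theta))$ together with a law of large numbers for $n^{-1}G(\theta)$, and assemble them by the delta method. Consistency of $\hat\theta$, needed to localize the expansion, is already available from Corollary \ref{cor:clemle-negligible}.

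For the law of large numbers I would first note that the one-way margins concentrate: writing $N(x)$ for the cell counts and $N_i(x_i)$ for the margins, the multinomial law of large numbers gives $N_i(x_i)/n\to r_i(x_i;\nu_0)$. Conditionally on $M$, the centered count array $(N(x)-\mathbb{E}[N(x)\mid M])$ has fluctuations of order $n^{1/2}$, and its covariance is the full multinomial covariance corrected by the fixed-margin constraints. Passing $h$ through this constrained covariance removes exactly the variance along the marginal (i.e.\ $\nu$-score) directions; by the orthogonality of $\theta$ and $\nu$ in the Fisher metric (Appendix B), the residual effective variance of $h$ is precisely the $\theta$-block $g_{jk}(\theta)$ of the full Fisher information, giving $G_{jk}(\theta)/n\to g_{jk}(\theta)$ in probability. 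The same fixed-margin array drives the central limit theorem $n^{-1/2}(h_*(\pi)-\mu(\theta))\to\mathrm{N}(0,g(\theta))$ under the conditional law, a CLT for a linear statistic of counts under the generalized-hypergeometric (fixed-margin) distribution, with the limiting covariance identified as $g(\theta)$ by the same orthogonality argument.

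Combining, the expansion of the score at the CLE gives $0=(h_*(\pi)-\mu(\theta))-G(\bar\theta)(\hat\theta-\theta)$ for some $\bar\theta$ between $\hat\theta$ and $\theta$, hence $\sqrt{n}(\hat\theta-\theta)=(G(\bar\theta)/n)^{-1}\,n^{-1/2}(h_*(\pi)-\mu(\theta))$; consistency places $\bar\theta$ near $\theta$, so $(G(\bar\theta)/n)^{-1}\to g(\theta)^{-1}$ and $\sqrt{n}(\hat\theta-\theta)\to\mathrm{N}(0,g^{jk}(\theta))$ by Slutsky. The main obstacle is the conditional central limit theorem and the identification of its variance with $g(\theta)$, since fixing the margins introduces dependence among the cell counts. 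I would discharge this either by invoking Haberman's Theorem 4.1 directly, after checking that the conditional model is an instance of his exponential response models once the fixed margins are absorbed into the reference measure, or by the full-model route: establish joint asymptotic normality of the unconditional MLE $(\hat\theta_{\mathrm{full}},\hat\nu_{\mathrm{full}})$ for the regular exponential family (\ref{eq:min-info}), extract the $\mathrm{N}(0,g^{jk})$ marginal for $\hat\theta_{\mathrm{full}}$ via orthogonality, and then use Theorem \ref{theorem:negligible} to show that the $\theta$-dependence of the marginal factor $g(M;\theta,\nu)$ in Lemma \ref{lem: decomposition} contributes only an $o_P(\sqrt{n})$ score, so that $\hat\theta$ and $\hat\theta_{\mathrm{full}}$ are asymptotically equivalent. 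The delicate point in this second route is upgrading the uniform near-flatness of $n^{-1}\log g$ in $\theta$, supplied by Theorem \ref{theorem:negligible}, to control of its derivative, which for finite spaces follows from smoothness and convexity on the bounded parameter set $\Theta$.
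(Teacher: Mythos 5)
Your primary route is essentially the paper's own: Theorem~\ref{theorem:asymptotic-normality} is not proved in the paper but imported directly from Haberman's Theorem~4.1, and the only original content is the verification in Appendix~B.1 that $g_{jk}$ (the Hessian of $\psi$, equal to ${\rm Cov}_\theta[(I-P_{\mathcal{A}})h_j,(I-P_{\mathcal{A}})h_k]$) coincides with Haberman's limiting information matrix --- which is exactly the projection-onto-additive-functions identification you give for the limiting covariance. So your first branch (check the conditional model is an instance of Haberman's setting, then cite his theorem) together with the orthogonality argument reproduces what the paper does.

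One caution about your fallback route via Theorem~\ref{theorem:negligible}: for finite spaces one has $\alpha=0$, so $\varepsilon_n\asymp n^{-1/2}$, and the bound \eqref{eq: validity of conditional likelihood ratio} only controls the per-sample log-marginal-likelihood ratio uniformly at the $n^{-1/2}$ scale. A uniform $O(n^{-1/2})$ bound on a smooth convex function of $\theta$ does not yield $o(n^{-1/2})$ control of its derivative (it yields roughly $O(n^{-1/2}/\delta)$ over a $\delta$-neighborhood), so you cannot conclude from Theorem~\ref{theorem:negligible} alone that the score contribution of $g(M;\theta,\nu)$ is $o_P(\sqrt{n})$; that branch as written does not close, and you should rely on the Haberman citation.
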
 
 
  In Appendix B.1, we show that the Fisher information matrix $g_{jk}$ is the Hessian matrix of the potential function $\psi$ and actually coincides with that given by \cite{haberman1977analysis}.
  We also provide a useful expression for $g_{jk}$ using the back-fitting algorithm (\cite{buja1989linear}).
  Furthermore, in Appendix B.2, the canonical parameter $\theta$ and the marginal parameter $\nu$ are shown to be orthogonal to each other.
  
 We expect that the same property as Theorem~\ref{theorem:asymptotic-normality} holds for infinite sample spaces.
 A numerical study supporting this claim is provided in Section~\ref{section:numerical}.
 Note that even the $\sqrt{n}$-consistency of the maximum likelihood estimator (MLE) for the exponential families on permutations has not been shown in the literature. \cite{Mukherjee2016} showed the MLE to be consistent, but obtained a $\sqrt{n}$-consistency result only for the pseudo-likelihood estimator of Besag.
 %Note that even the $\sqrt{n}$-consistency of the maximum likelihood estimator (MLE) for the exponential families on permutations was not known in the literature (e.g.,\,\cite{Mukherjee2016}).
 Although the asymptotic normality in general cases is not proved, it is reasonable to use the inverse of the conditional information matrix $G_{jk}(\hat\theta)/n$ as an estimate of the asymptotic covariance of $\hat\theta$. We also suggest using the likelihood ratio (or score/Wald) test and Akaike's information criterion assuming asymptotic normality.

From the theory of discrete exponential families (e.g.,\, \cite{Rinaldo2009EJS}), we obtain the following lemma.

\begin{lemma} \label{lemma:cmle-existence}
Let $P={\rm conv}(\{h_*(\tilde\pi)\mid\tilde\pi\in \mathbb{S}_n^d\})\subset\mathbb{R}^K$ be the convex hull of the range of the sufficient statistic.
Suppose that the interior of $P$ is nonempty.
Then, CLE $\hat\theta$ exists if and only if $h_*(\pi)\in {\rm int}(P)$.
The estimator is unique and satisfies
$\mu_j(\hat\theta) = h_{*j}(\pi)$,
whenever it exists.
\end{lemma}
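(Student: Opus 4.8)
The plan is to treat the conditional likelihood (\ref{eq:conditional-likelihood}) as a minimal discrete exponential family on the finite set $\mathbb{S}_n^d$ with canonical parameter $\theta\in\mathbb{R}^K$, sufficient statistic $h_*(\pi)$, and potential function $\Psi$, and to analyze the concave objective $\ell(\theta)=\theta^\top h_*(\pi)-\Psi(\theta)$ directly. The hypothesis $\mathrm{int}(P)\neq\emptyset$ says precisely that the support points $\{h_*(\tilde\pi)\}$ affinely span $\mathbb{R}^K$, equivalently that the family is minimal, and this minimality is what I would exploit throughout.

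First I would record strict concavity and uniqueness. The Hessian of $\ell$ equals $-\nabla^2\Psi(\theta)=-G(\theta)$ with $G(\theta)=\Cov_\theta(h_*(\pi))$ the Fisher information (\ref{eq:conditional-Fisher}). Because every $\tilde\pi$ carries strictly positive probability under every $\theta$, the linear form $u^\top h_*(\pi)$ has zero variance under $f(\cdot\mid M;\theta)$ only if the support $\{h_*(\tilde\pi)\}$ lies in a hyperplane orthogonal to $u$; minimality excludes this, so $G(\theta)$ is positive definite and $\ell$ is strictly concave. Hence a maximizer, if one exists, is unique, and any maximizer is an interior critical point solving $\nabla\ell(\hat\theta)=h_*(\pi)-\nabla\Psi(\hat\theta)=0$, which is exactly the moment equation $\mu_j(\hat\theta)=h_{*j}(\pi)$ since $\mu=\nabla\Psi$ by (\ref{eq:conditional-mu}).

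The crux is the geometric description of the mean map $\mu=\nabla\Psi$. On one hand, $\mu(\theta)=\sum_{\tilde\pi}f(\tilde\pi\mid M;\theta)\,h_*(\tilde\pi)$ is a convex combination of the generating points of $P$ with strictly positive weights, and such a combination always lies in the relative interior of $P$; since $P$ is full-dimensional this is $\mathrm{int}(P)$, so $\mu(\mathbb{R}^K)\subseteq\mathrm{int}(P)$. This already yields necessity: as $h_*(\pi)\in P$ always, if $h_*(\pi)\in\partial P$ then $\mu(\theta)=h_*(\pi)$ has no solution, so $\ell$ has no critical point and, being differentiable on all of $\mathbb{R}^K$, no maximizer. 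For sufficiency I would use the support function $s_P(u)=\max_{\tilde\pi}u^\top h_*(\tilde\pi)$: along a ray $\theta=tu$ one has $\Psi(tu)=t\,s_P(u)+O(1)$, hence $\ell(tu)=t\,(u^\top h_*(\pi)-s_P(u))+O(1)$. When $h_*(\pi)\in\mathrm{int}(P)$ the coefficient $u^\top h_*(\pi)-s_P(u)$ is strictly negative for every $u\neq 0$, so by concavity $\ell(\theta)\to-\infty$ as $\|\theta\|\to\infty$; its upper level sets are then compact and the maximum is attained, so the estimator exists.

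I expect the main obstacle to be this geometric step---establishing that the image of the mean map is exactly $\mathrm{int}(P)$ and carrying out the ray analysis of $\Psi$. The rest is bookkeeping with the potential function, and the whole argument is an instance of the standard existence theory for minimal discrete exponential families (e.g.,\ \cite{Rinaldo2009EJS}), which I would invoke once the reduction to (\ref{eq:conditional-likelihood}) is in place rather than reproving coercivity in full detail.
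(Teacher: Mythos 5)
Your proof is correct and is essentially the same argument the paper relies on: the paper gives no proof of this lemma but simply invokes the standard existence/uniqueness theory for minimal discrete exponential families (citing \cite{Rinaldo2009EJS}), and your strict-concavity, mean-map, and support-function steps are precisely that standard theory carried out explicitly for the family (\ref{eq:conditional-likelihood}) on $\mathbb{S}_n^d$.
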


The condition ${\rm int}(P)\neq\emptyset$ in the lemma is difficult to check because it requires the computation of all possible values of $h_*$. We provide a tractable sufficient condition in Subsection~\ref{subsection:Besag}.

Since any exponential family is log-concave with respect to the canonical parameter,
we can use, in principle, any convex programming solver to obtain CLE. However, a critical issue here is to calculate the normalizing constant in the denominator of (\ref{eq:conditional-likelihood}).
To overcome this difficulty, we propose a sampling approach and a pseudo likelihood approach in the following subsections.

\subsection{Estimation via Monte Carlo} \label{subsection:sampling}

To perform the conditional inference,
we need to compute the expectations of several statistics under the conditional distribution (\ref{eq:conditional-likelihood}).
We propose a sampling method in the Metropolis--Hastings manner described in Table \ref{tab: exchange algorithm}, which is quite easy to implement.
We call the method the exchange algorithm.
Note that the method is essentially the same as those for contingency tables (e.g.,\ \cite{Diaconis1998}).

\begin{table}[tb]
 \caption{\label{tab: exchange algorithm}Exchange algorithm.}
 \centering
 \fbox{%
\begin{tabular}{l}
\textbf{Input}: An initial permutation $\pi^{(0)}\in\mathbb{S}_n^d$ and the number of samples $L$.\\
\textbf{Output}: $L$ samples $(\pi^{(l)}\in \mathbb{S}_n^d )_{l=1}^{L}$ of permutations.\\
Step 1: Initialize $\pi\leftarrow\pi^{(0)}$ and $l=1$.\\
Step 2: Select $1\leq i\leq d$ and $1\leq s<t\leq n$ uniformly at random.\\
Let $\tau_{st}^i\in\mathbb{S}_n^d$ be the transposition between $s$ and $t$ with respect to the $i$-th variable.\\
Compute the conditional likelihood ratio $\rho = \frac{f(\pi\circ\tau_{st}^i|M;\theta)}{f(\pi|M;\theta)}
 = \frac{e^{H_s(\pi\circ\tau_{st}^i)+H_t(\pi\circ\tau_{st}^i)}}{e^{H_s(\pi)+H_t(\pi)}}$,
\\
where $H_t(\pi)=\theta^\top h((M\circ\pi)(t))$.\\
Step 3: Generate a random number $u$ uniformly distributed on $[0,1]$.\\
If $u\leq\min(1,\rho)$, then update $\pi$ to $\pi\circ\tau_{st}^i$.\\
Step 4: Let $\pi^{(l)}\leftarrow\pi$ and $l\leftarrow l+1$.\\
Go to Step 2 if $l\leq L$, and output $(\pi^{(l)})_{l=1}^L$ otherwise.
\end{tabular}}
\end{table}

Note that the marginal order statistics $M$ are preserved during the procedure.
The state space of the Markov chain is $\mathbb{S}_n^d$.
The following lemma is obtained immediately from the construction, and the proof is omitted.

\begin{lemma}
 The chain $(\pi^{(l)})_{l=1}^\infty$ from the exchange algorithm
 is ergodic with its stationary distribution $f(\pi|M;\theta)$ in (\ref{eq:conditional-likelihood}).
\end{lemma}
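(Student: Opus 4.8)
The plan is to verify the three standard ingredients that, for a Markov chain on the finite state space $\mathbb{S}_n^d$, guarantee a unique stationary law and convergence to it: reversibility (detailed balance) with respect to $f(\pi\mid M;\theta)$, irreducibility, and aperiodicity. Throughout I abbreviate $f(\pi)=f(\pi\mid M;\theta)$ and use the exponential-family form $f(\pi)=e^{\theta^\top h_*(\pi)}/\sum_{\tilde\pi}e^{\theta^\top h_*(\tilde\pi)}$ from (\ref{eq:conditional-likelihood}).

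First I would establish detailed balance. The proposal kernel draws a pair $(i,\{s,t\})$ uniformly and hence proposes the move $\pi\mapsto\pi\circ\tau_{st}^i$ with probability $q=(d\binom{n}{2})^{-1}$; since $\tau_{st}^i$ is an involution, the reverse move $\pi\circ\tau_{st}^i\mapsto\pi$ is proposed with the same probability, so the proposal is symmetric. For a symmetric proposal the Metropolis--Hastings acceptance rule $\min(1,\rho)$ with $\rho=f(\pi\circ\tau_{st}^i)/f(\pi)$ yields, for neighbouring states $\pi'=\pi\circ\tau_{st}^i$, the elementary identity $f(\pi)\,q\,\min(1,f(\pi')/f(\pi))=q\,\min(f(\pi),f(\pi'))=f(\pi')\,q\,\min(1,f(\pi)/f(\pi'))$, which is exactly $f(\pi)P(\pi,\pi')=f(\pi')P(\pi',\pi)$. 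Summing over neighbours shows $f$ is invariant for the transition kernel.

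Second I would argue irreducibility. Within each coordinate $i$ the transpositions $\{\tau_{st}^i:1\le s<t\le n\}$ generate the full symmetric group $\mathbb{S}_n$, and because the coordinate may be chosen freely, finite products of the proposed moves reach every element of $\mathbb{S}_n^d$. Moreover $f(\pi)>0$ for every $\pi$, so each single-transposition move has strictly positive acceptance probability; hence any state communicates with any other and the chain is irreducible on $\mathbb{S}_n^d$.

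The main obstacle, and the only point that is not entirely routine, is aperiodicity. Whenever $\theta^\top h_*$ is non-constant on $\mathbb{S}_n^d$ there are neighbouring states with $f(\pi')<f(\pi)$, so $\min(1,\rho)<1$ and the rejection branch of Step~3 endows the state with a positive holding probability; a finite irreducible chain carrying a self-loop is aperiodic, completing the argument. The delicate case is the degenerate target where $\theta^\top h_*$ is constant (in particular $\theta=0$): every proposal is then accepted and the chain reduces to the random-transposition walk on $\mathbb{S}_n^d$, which flips the total parity $\sum_{i}\mathrm{sgn}(\pi_i)$ at each step and therefore has period $2$. I would dispose of this case by noting that it corresponds to the independence model, for which sampling is trivial, or by passing to the lazy version of the kernel, which restores aperiodicity without altering the stationary law. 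Combining reversibility, irreducibility, and aperiodicity, the chain $(\pi^{(l)})$ is ergodic with stationary distribution $f(\pi\mid M;\theta)$.
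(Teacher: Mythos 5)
Your proposal is correct and follows exactly the standard Metropolis--Hastings argument (symmetric transposition proposal, detailed balance, irreducibility from the fact that transpositions generate $\mathbb{S}_n$ coordinatewise, positivity of $f$); the paper itself omits the proof entirely, stating that the lemma ``is obtained immediately from the construction,'' and your write-up is precisely the argument being alluded to. The one place where you go beyond the paper is the aperiodicity discussion: you correctly observe that when $\theta^\top h_*$ is constant on $\mathbb{S}_n^d$ (e.g.\ $\theta=0$) every proposal is accepted, the total parity $\prod_i \mathrm{sgn}(\pi_i)$ flips deterministically, and the raw chain has period $2$, so strict ergodicity (convergence of the marginal law) fails in that degenerate case unless one passes to the lazy kernel or reads ``ergodic'' in the time-average sense. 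This is a genuine, if minor, caveat that the paper's ``immediate'' dismissal glosses over, and your handling of it is sound.
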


\begin{remark}\label{remark:sampling algorithm for the min-info}
 It is natural to consider a sampling method for $p(x;\theta,\nu)$.
 A naive method is just to generate $n$ random elements $(x_i(t))_{t=1}^n$ according to the marginal distribution $r_i(x_i;\nu)$ for each $i=1,\ldots,d$, independently, and then to employ the exchange algorithm.

 However, the method is not exact. Indeed, if $n=1$, then the procedure generates a sample from the independent model $\prod_{i=1}^d r_i(x_i;\nu)$, not from the correct distribution.
 It is expected that the distribution of the sample generated in this way converges to the correct distribution as $n\to\infty$.
 This observation is supported by Theorem~\ref{theorem:negligible} because the target marginal density $g(M;\theta,\nu)$ and the independent counterpart $g(M;0,\nu)$ are asymptotically equivalent as $n\to\infty$.
\end{remark}

We can compute CLE $\hat\theta$ via MCMC in line with \cite{geyer1992constrained}.
More specifically, let $\theta$ be a current estimate of $\hat\theta$ and let $\{\pi^{(l)}\}_{l=1}^L$ denote a sample from the conditional likelihood in (\ref{eq:conditional-likelihood}) obtained by the exchange algorithm.
The quantities $\mu_j(\theta)$ and $G_{jk}(\theta)$ are approximated by
$\check\mu_j
= \sum_{l=1}^L h_{*j}(\pi^{(l)})/L$
and
$\check{G}_{jk} = \sum_{l=1}^L (h_{*j}(\pi^{(l)})-\check{\mu}_j)(h_{*k}(\pi^{(l)})-\check{\mu}_k)/L$.
Then, the estimate is updated by Fisher's scoring method:
$\theta_j\leftarrow \theta_j + \sum_{k=1}^K\check{G}^{jk}\{h_{*k}(\pi)-\check{\mu}_k\}$,
where $\check{G}^{jk}$ is the inverse matrix of $\check{G}_{jk}$.
The procedure is repeated until convergence.
As noted by \cite{geyer1992constrained}, the MCMC samples can be reused at every step by importance sampling.

The standard error of $\hat\theta_j$ is estimated by $(\check{G}^{jj}/n)^{1/2}$.
Hypothesis testing and model selection based on the likelihood ratio statistic (or score/Wald statistic) are also available, where the statistic is computed from the MCMC samples.

\subsection{Besag's pseudo likelihood} \label{subsection:Besag}

An alternative to CLE is Besag's pseudo likelihood estimator for permutations (see \cite{Mukherjee2016}), which does not require the normalizing constant in (\ref{eq:conditional-likelihood}).
The {\em pseudo likelihood estimator} (PLE) is defined as a maximizer of
\[
 \prod_{i=1}^d\prod_{1\leq s<t\leq n}
 f(\pi_i(s),\pi_i(t)\mid (\pi_j(u))_{(u,j)\neq (s,i),(t,i)}, M; \theta),
 \]
 where $\pi=(\pi_i(t))\in\mathbb{S}_n^d$ is the observed rank statistic and
 \begin{align*}
 f(\pi_i(s),\pi_i(t)\mid (\pi_j(u))_{(u,j)\neq (s,i),(t,i)},M;\theta)
 &= \frac{1}{1+e^{\theta^\top(h_*(\pi\circ\tau_{st}^i)-h_*(\pi))}}.
\end{align*}
Recall that $\tau_{st}^i$ denotes the transposition between $s$ and $t$ with respect to the $i$-th variable (see Subsection~\ref{subsection:sampling}).
A similar form of PLE for copula models was proposed by \cite{chen_sei2022} in the framework of scoring rules.

 The expression of the pseudo likelihood coincides with the likelihood of a logistic regression model, where the explanatory variable is $u_{st}^i=h_*(\pi)-h_*(\pi\circ\tau_{st}^i)\in\mathbb{R}^K$ and the response variable is $y_{st}^i=1$ for all $i$ and $(s,t)$.
 Hence, any software package for logistic regression can be utilized.

 As a consequence of regarding PLE as MLE of a logistic regression model, we obtain a condition for the existence of PLE.

 \begin{lemma}[\cite{AlbertAnderson1984}]
  Let $Q={\rm conv}(\{h_*(\pi\circ\tau_{st}^i)\mid 1\leq i\leq d,1\leq s<t\leq n\})$.
  Suppose that ${\rm int}(Q)\neq\emptyset$.
  Then, PLE exists if and only if $h_*(\pi)\in{\rm int}(Q)$.
 \end{lemma}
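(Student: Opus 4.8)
The plan is to leverage the identification, already made in the surrounding text, of the pseudo likelihood with the likelihood of a logistic regression in which every covariate vector is $u_{st}^i=h_*(\pi)-h_*(\pi\circ\tau_{st}^i)$ and every response equals $1$, so that the log pseudo likelihood is
\[
\ell(\theta)=-\sum_{i,\,s<t}\log\bigl(1+e^{-\theta^\top u_{st}^i}\bigr).
\]
Existence of PLE is then exactly the question of when this concave objective attains its supremum, which is precisely the setting of \cite{AlbertAnderson1984}. The shortest route is to invoke their dichotomy between (complete or quasi-complete) separation and overlap, and then to reduce the resulting separation condition to the stated interiority condition by a translation argument; for completeness I would also supply the self-contained recession argument below.

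Concretely, I would first note that $\ell$ is concave and bounded above by $0$. Along a ray $\theta=tv$ with $t\to\infty$, each summand tends to $0$ if $v^\top u_{st}^i>0$, stays constant if $v^\top u_{st}^i=0$, and tends to $-\infty$ if $v^\top u_{st}^i<0$; hence the recession function is
\[
\ell^\infty(v)=\sum_{(s,t,i):\,v^\top u_{st}^i<0} v^\top u_{st}^i\le 0.
\]
By concavity, $\ell$ attains its maximum iff $\ell^\infty(v)<0$ for every $v\neq0$, i.e.\ iff for every $v\neq0$ there is some $(s,t,i)$ with $v^\top u_{st}^i<0$ (the degenerate case $\ell^\infty(v)=0$ being ruled out by $\mathrm{int}(Q)\neq\emptyset$, as explained next). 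By the separating-hyperplane theorem, the negation---existence of $v\neq0$ with $v^\top u_{st}^i\ge0$ for all $(s,t,i)$---holds precisely when $0\notin\mathrm{int}\,\mathrm{conv}(\{u_{st}^i\})$. Thus PLE exists iff $0\in\mathrm{int}\,\mathrm{conv}(\{u_{st}^i\})$, and I would finish by using $\mathrm{conv}(\{u_{st}^i\})=h_*(\pi)-Q$ together with $\mathrm{int}(h_*(\pi)-Q)=h_*(\pi)-\mathrm{int}(Q)$ to rewrite this as $h_*(\pi)\in\mathrm{int}(Q)$. Uniqueness then follows from strict concavity of $\ell$, which holds once $\{u_{st}^i\}$ spans $\mathbb{R}^K$ (automatic once $0$ is an interior point of the convex hull).

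The main obstacle is the careful treatment of the boundary (quasi-complete separation) case and the role of the hypothesis $\mathrm{int}(Q)\neq\emptyset$. When $0\in\partial\,\mathrm{conv}(\{u_{st}^i\})$, the recession function vanishes in a supporting direction $v$, so $\ell^\infty(v)=0$, and I must still show the supremum is not attained: concavity forces $\ell(\theta+tv)$ to be nondecreasing in $t$ for every $\theta$, and it is \emph{strictly} increasing provided at least one $u_{st}^i$ lies strictly off the supporting hyperplane. Here $\mathrm{int}(Q)\neq\emptyset$ is exactly what excludes the degenerate possibility that all $u_{st}^i$ lie in the hyperplane through $0$ (which would make the convex hull lower-dimensional); it also guarantees that whenever PLE exists the vectors $u_{st}^i$ span $\mathbb{R}^K$, so that the Hessian is positive definite and the maximizer is unique. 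Reconciling these two uses of full-dimensionality---boundedness of the superlevel sets in the overlap case and strict monotonicity along $v$ in the separated case---is the delicate point; everything else is routine once the logistic-regression identification is in hand.
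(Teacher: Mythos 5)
Your proposal is correct and follows essentially the same route as the paper: the paper proves this lemma simply by identifying the pseudo likelihood with the likelihood of a logistic regression whose covariates are $u_{st}^i=h_*(\pi)-h_*(\pi\circ\tau_{st}^i)$ and whose responses are all $1$, and then citing the Albert--Anderson separation/overlap criterion for existence of the MLE, which with all responses equal to $1$ reduces to the nonexistence of $v\neq 0$ with $v^\top u_{st}^i\geq 0$ for all $(s,t,i)$, i.e.\ $0\in\mathrm{int}\,\mathrm{conv}(\{u_{st}^i\})=h_*(\pi)-\mathrm{int}(Q)$. Your self-contained recession-function argument, including the careful handling of the quasi-separated boundary case via the full-dimensionality guaranteed by $\mathrm{int}(Q)\neq\emptyset$, is a valid proof of exactly the cited criterion, so nothing is missing.
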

 
  The condition $h_*(\pi)\in{\rm int}(Q)$ is a sufficient condition for $h_*(\pi)\in{\rm int}(P)$ in Lemma~\ref{lemma:cmle-existence} because $Q\subset P$.
  In other words, we obtain the following theorem that is practically useful since we first try to find PLE, and if that succeeds, we can then proceed to computing CLE.
  
  \begin{theorem} \label{theorem:existence-sufficient}
   If PLE exists, then CLE exists.
  \end{theorem}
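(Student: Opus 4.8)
The plan is to derive the statement directly from the two existence criteria already in hand---Lemma~\ref{lemma:cmle-existence} characterizing when CLE exists and the Albert--Anderson lemma characterizing when PLE exists---by exploiting the inclusion $Q\subset P$ between the two convex hulls. The whole argument is a containment-plus-monotonicity observation, and the text preceding the theorem already isolates the crucial fact $Q\subset P$, so my task is merely to upgrade this from sets to interiors and then chain the two lemmas together.

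First I would verify $Q\subset P$ explicitly. The generators of $Q$ are the points $h_*(\pi\circ\tau_{st}^i)$ for $1\le i\le d$ and $1\le s<t\le n$. Since each transposed permutation $\pi\circ\tau_{st}^i$ is again an element of $\mathbb{S}_n^d$, every one of these generators already appears among the points $h_*(\tilde\pi)$, $\tilde\pi\in\mathbb{S}_n^d$, whose convex hull defines $P$. Taking convex hulls of a subset of points gives $Q\subset P$. Next I would promote this to the interiors: if a point $x$ admits an open ball $B(x,\varepsilon)\subset Q$, then $B(x,\varepsilon)\subset P$ as well, so $x\in\mathrm{int}(P)$; hence $\mathrm{int}(Q)\subset\mathrm{int}(P)$.

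Finally I would invoke the hypothesis. If PLE exists, the Albert--Anderson lemma guarantees both $\mathrm{int}(Q)\neq\emptyset$ and $h_*(\pi)\in\mathrm{int}(Q)$. Combining these with the interior inclusion just established yields $\mathrm{int}(P)\neq\emptyset$ together with $h_*(\pi)\in\mathrm{int}(P)$---which are precisely the two conditions required by Lemma~\ref{lemma:cmle-existence}. Applying that lemma then produces the existence (and indeed uniqueness) of CLE, completing the argument.

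There is no substantively hard step here; the only point deserving care is that the interiors appearing in the two source results are both taken in the same ambient Euclidean space $\mathbb{R}^K$ rather than as relative interiors within differing affine hulls. As long as that is the case---which it is, since both lemmas phrase their nondegeneracy hypotheses as $\mathrm{int}(\cdot)\neq\emptyset$ in $\mathbb{R}^K$---the elementary implication $Q\subset P\Rightarrow\mathrm{int}(Q)\subset\mathrm{int}(P)$ applies verbatim and the proof goes through without further work.
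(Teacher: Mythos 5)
Your argument is correct and is essentially identical to the paper's own justification, which likewise observes that the generators of $Q$ form a subset of those of $P$, so $Q\subset P$, and hence $h_*(\pi)\in\mathrm{int}(Q)$ implies $h_*(\pi)\in\mathrm{int}(P)$, allowing Lemma~\ref{lemma:cmle-existence} to be applied. Your explicit verification of the containment and of the monotonicity of interiors simply spells out what the paper leaves as a one-line remark.
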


  Further, we obtain the consistency of PLE.
  For $i=1,\ldots,d$, and for $(s,t)\in\{1,\ldots,n\}^{2}$ with $s\ne t$, let $u_{st}^i=h_*(\pi)-h_*(\pi\circ\tau_{st}^i)\in\mathbb{R}^K.$
  \begin{theorem}
  \label{theorem:consistencyofPL}
In addition to Assumptions \ref{assumption: thm 2} (2)--(3), we assume that 
\begin{itemize}
\item for any $v\in \mathbb{R}^{K}$ and $\theta\in\Theta$,
$\Ep \left[
\sum_{i=1}^{d}\{v^{\top}u^{i}_{12}\}^{2}
\{\cosh(\theta^{\top}u^{i}_{12} /2)\}^{-2}
\right]>0$, and
\item for each $i=1,\ldots,d$, there exists $x^{0}_{i}$ such that $\Ep d_{i}^{2}(X_{i}(1),x^{0}_{i})<\infty$.
\end{itemize}
Then,
PLE converges to $\theta_{0}$ in probability.
  \end{theorem}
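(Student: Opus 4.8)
The plan is to treat the PLE as an argmax estimator of U-statistic type and run the standard consistency argument: show that the normalized pseudo log-likelihood converges uniformly over $\theta\in\Theta$ to a deterministic limit $\Phi(\theta)$, and that $\Phi$ has a well-separated unique maximizer at $\theta_{0}$. The starting point is the pairwise structure. Because $h_{*}(\pi)=\sum_{t}h((M\circ\pi)(t))$ and $\tau_{st}^{i}$ swaps only the $i$-th coordinates at positions $s,t$, the increment $u_{st}^{i}=h_{*}(\pi)-h_{*}(\pi\circ\tau_{st}^{i})$ depends on the data only through the two observations $x(s),x(t)$; writing $x(s)=(M\circ\pi)(s)$, one has $u_{st}^{i}=g_{i}(x(s),x(t))$ with
\[
g_{i}(x,x')=h(x)+h(x')-h(x^{(i\leftarrow x'_{i})})-h(x'^{(i\leftarrow x_{i})}),
\]
where $x^{(i\leftarrow x'_{i})}$ is $x$ with its $i$-th coordinate replaced by $x'_{i}$. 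The log pseudo likelihood is $-\sum_{i=1}^{d}\sum_{s<t}\log(1+e^{-\theta^{\top}g_{i}(x(s),x(t))})$, so after dividing by $\binom{n}{2}$ it is a second-order U-statistic indexed by $\theta$, call it $\Phi_{n}(\theta)$, whose natural limit is
\[
\Phi(\theta)=-\sum_{i=1}^{d}\Ep\left[\log(1+e^{-\theta^{\top}g_{i}(X,X')})\right],
\]
with $X,X'$ i.i.d.\ from $p(\cdot;\theta_{0},\nu_{0})$.

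Next I would establish uniform convergence $\sup_{\theta\in\Theta}|\Phi_{n}(\theta)-\Phi(\theta)|\to 0$ in probability. Since $h$ is $L_{h}$-Lipschitz (Assumption \ref{assumption: thm 2}(3)) and $x,x^{(i\leftarrow x'_{i})}$ differ only in coordinate $i$, we get $\|g_{i}(x,x')\|\le 2L_{h}d_{i}(x_{i},x'_{i})$; the second bullet assumption $\Ep d_{i}^{2}(X_{i}(1),x_{i}^{0})<\infty$ then makes each $g_{i}$ square-integrable, which supplies an integrable envelope for the class $\{\theta\mapsto\log(1+e^{-\theta^{\top}g_{i}}):\theta\in\Theta\}$ (using boundedness of $\Theta$, Assumption \ref{assumption: thm 2}(2)). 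Because each summand is Lipschitz in $\theta$ with constant at most $\|g_{i}\|$, a finite $\delta$-net over the totally bounded set $\Theta$, together with the U-statistic law of large numbers at the net points and control of the oscillation by the Lipschitz constant times the (convergent) average of $\|g_{i}\|$, yields the uniform limit. Note that only parts (2)--(3) of Assumption \ref{assumption: thm 2}, not the covering-number part (1), are needed here.

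The conceptual crux is to show that $\theta_{0}$ uniquely maximizes $\Phi$, which splits into $\nabla\Phi(\theta_{0})=0$ and strict concavity. For the gradient, the key is a symmetry of the model: let $T_{i}$ swap the $i$-th coordinates of the pair $(X,X')$. Then $g_{i}\circ T_{i}=-g_{i}$, and since swapping a single coordinate between two points leaves $\prod_{j}r_{j}$, $\sum_{j}a_{j}$ and $\psi$ invariant while changing $\theta_{0}^{\top}h$ summed over the pair by exactly $-\theta_{0}^{\top}g_{i}$, the density ratio is $p\circ T_{i}/p=e^{-\theta_{0}^{\top}g_{i}}$. Substituting $T_{i}$ into the gradient integrand $\Ep[g_{i}(1+e^{\theta_{0}^{\top}g_{i}})^{-1}]$ and using $g_{i}\circ T_{i}=-g_{i}$ shows this expectation equals its own negative, hence vanishes; summing over $i$ gives $\nabla\Phi(\theta_{0})=0$, i.e.\ the pseudo score is an unbiased estimating function at $\theta_{0}$. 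For strict concavity, using $\tfrac{d^{2}}{dz^{2}}\log(1+e^{-z})=\tfrac14\cosh^{-2}(z/2)$ gives
\[
v^{\top}\nabla^{2}\Phi(\theta)v=-\frac14\Ep\left[\sum_{i=1}^{d}\{v^{\top}u_{12}^{i}\}^{2}\cosh^{-2}(\theta^{\top}u_{12}^{i}/2)\right]<0
\]
for every $v\neq 0$ by the first bullet assumption. Thus $\Phi$ is strictly concave on $\Theta$ with a stationary point at $\theta_{0}$, so $\theta_{0}$ is the unique maximizer and, by compactness, a well-separated one. Combined with the uniform convergence, the standard argmax consistency theorem (e.g.\ \cite{vanderVaartandWellner}) yields that the PLE converges to $\theta_{0}$ in probability.

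I expect the delicate part to be the unbiasedness $\nabla\Phi(\theta_{0})=0$: making the coordinate-swap change of variables rigorous—tracking the density ratio $e^{-\theta_{0}^{\top}g_{i}}$, the antisymmetry $g_{i}\circ T_{i}=-g_{i}$, and justifying the interchange of differentiation and expectation via the square-integrable envelope—is where the real content lies. The uniform U-statistic law of large numbers and the concavity computation are comparatively routine given Assumption \ref{assumption: thm 2}(2)--(3) and the stated moment and positivity conditions.
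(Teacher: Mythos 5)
Your proposal is correct and follows the same overall skeleton as the paper's proof (U-statistic representation of the pseudo log-likelihood, uniform law of large numbers over $\Theta$ using the Lipschitz-in-$\theta$ envelope $2L_h\sum_i\{d_i(x_i,x_i^0)+d_i(\tilde x_i,x_i^0)\}$, strict concavity of the limit from the $\cosh^{-2}$ Hessian formula and the first bullet assumption, then the argmax theorem from \cite{vanderVaartandWellner}). The one genuinely different ingredient is how you establish that $\theta_0$ maximizes the limit criterion $\Phi$. The paper proves the global inequality $\Ep^2[\mathrm{PL}(\theta)]\le\Ep^2[\mathrm{PL}(\theta_0)]$ directly (``Fisher consistency'') by writing the difference as an expected Kullback--Leibler divergence between the Bernoulli conditionals $\sigma^i(x,\tilde x;\theta_0)$ and $\sigma^i(x,\tilde x;\theta)$, i.e.\ by exploiting that the logistic loss is a proper scoring rule for the conditional event probabilities. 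You instead prove first-order stationarity $\nabla\Phi(\theta_0)=0$ by the coordinate-swap change of variables $T_i$, using the antisymmetry $g_i\circ T_i=-g_i$ and the density ratio $p(x^{(i)})p(\tilde x^{(i)})/\{p(x)p(\tilde x)\}=e^{-\theta_0^\top g_i}$ (which holds because the marginals and adjusting functions are invariant under the swap), and then invoke strict concavity to upgrade the stationary point to the unique global maximizer. Both arguments rest on exactly the same density-ratio identity — it is the identity the paper uses to write $p(x)p(\tilde x)=\sigma^i(x,\tilde x;\theta_0)\{p(x)p(\tilde x)+p(x^{(i)})p(\tilde x^{(i)})\}$ — so the content is equivalent; what the paper's route buys is the global inequality $\Phi(\theta)\le\Phi(\theta_0)$ without any appeal to concavity (concavity is then only needed for strictness/well-separation), whereas your route is the standard ``unbiased estimating equation plus concavity'' argument and is arguably shorter to verify. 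Your change-of-variables computation is rigorous as sketched (the swap preserves the product reference measure, and the square-integrable envelope justifies differentiating under the expectation), and your observation that only parts (2)--(3) of Assumption \ref{assumption: thm 2} are needed matches the theorem statement. One small imprecision: $\Theta$ is only assumed bounded, not compact, so ``well-separated by compactness'' should be replaced by the concavity argument (for $\|\theta-\theta_0\|\ge\delta$, concavity forces $\Phi(\theta)\le\max_{\|e\|=1}\Phi(\theta_0+\delta e)<\Phi(\theta_0)$); this is exactly how the paper obtains $\sup_{\theta\notin G}\Ep^2[\mathrm{PL}(\theta)]<\Ep^2[\mathrm{PL}(\theta_0)]$.
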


The proof is given in Appendix C.5 and employs the theory of $U$-statistics (in particular, the theory of $M_{m=2}$-estimators); see \citealp{PenaGine,BoseChatterjee_2018}).
The assumptions in the theorem are quite mild. The first additional assumption ensures the global uniqueness of the expected pseudo likelihood; \cite{Hyvarinen2006} discusses a similar assumption for PLE in Boltzman machines. The second additional assumption ensures the moments of marginals.

Note that in Boltzman machines, PLE can be regarded as the contrastive divergence learning \citep{Hinton2002} that is a surrogate of MLE via MCMC; see \cite{Hyvarinen2006}. The same argument is applicable to our case, which, together with Theorems \ref{theorem:existence-sufficient}--\ref{theorem:consistencyofPL}, suggests that PLE is adopted as the initial value of Fisher's scoring method.

\begin{remark}[Computation time]
Generally, as the computation of CLE includes MCMC iteration in each optimization step, the computation of PLE is faster than that of CLE. This aspect is confirmed in the subsequent simulation studies.
\end{remark}

Finally, by applying the theory of $M_{m=2}$-estimators (Theorem 2.3 of \cite{BoseChatterjee_2018}; see also Appendix C.6), we can estimate the asymptotic variance of PLE $\hat{\theta}_{\mathrm{PLE}}$ by the sandwich estimator $(4/n)\hat{J}^{-1}_{\mathrm{PLE}}\hat{I}_{\mathrm{PLE}}\hat{J}^{-1}_{\mathrm{PLE}}$, where 
\begin{align*}
\hat{I}_{\mathrm{PLE}}&=
\frac{1}{n}
\sum_{s=1}^{n}
\left(
\frac{1}{n}\sum_{t\ne s, t=1}^{n}
\sum_{i=1}^{d}\frac{u^{i}_{st}}{1+e^{\hat{\theta}_{\mathrm{PLE}}^{\top}u^{i}_{st} }}
\right)
\left(
\frac{1}{n}\sum_{t\ne s, t=1}^{n}
\sum_{i=1}^{d}\frac{(u^{i}_{st})^{\top}}{1+e^{\hat{\theta}_{\mathrm{PLE}}^{\top}u^{i}_{st} }}
\right)\,\,\text{and}
\\
\hat{J}_{\mathrm{PLE}}&=
\frac{2}{n(n-1)}
\sum_{1\le s < t\le n}
\sum_{i=1}^{d}\frac{u^{i}_{st}(u^{i}_{st})^{\top}}{\{1+e^{\hat{\theta}_{\mathrm{PLE}}^{\top}u^{i}_{st}}\}^{2}}.
\end{align*}

\section{Simulation studies} \label{section:numerical}

In this section, we provide several numerical studies for the inference based on CLE and PLE.

\subsection{Gaussian cases}

We first examine the performance of CLE developed in Section~\ref{section:conditional} by applying it to the Gaussian model.
We generated a random sample $\{x(t)\}_{t=1}^n$ of size $n=50$ from the $4$-dimensional centered Gaussian distribution with the covariance matrix $\sigma_{ij}=(1/2)^{|i-j|}$.
For estimation, we assumed the full Gaussian model, which is a minimum information dependence model (Example~\ref{example:Gaussian}).
The parameter of interest is
$\theta=(\theta_k)_{k=1}^6=(-\sigma^{12},-\sigma^{13},-\sigma^{14},-\sigma^{23},-\sigma^{24},-\sigma^{34})$,
where $\sigma^{ij}$ denotes the inverse of $\sigma_{ij}$.
The true value of $\theta$ is set to $\theta=(2/3,0,0,2/3,0,2/3)$.
The tolerance for solving the conditional likelihood equation 
was set to $10^{-2}$ and the MCMC length $L$ was adaptively increased from an initial value $L=150n=7500$.
We used PLE as an initial value for the Fisher scoring method on CLE.
Furthermore, the scoring method is restarted if a component of $\theta$ becomes a huge value at some step due to variability of MCMC.
These rules are practically effective and adopted in subsequent examples as well.

\begin{table}[htb]
 \caption{\label{tab:Gaussian}Root mean square errors of three estimators.
 The model is Gaussian. The true parameter value is $\theta=(2/3,0,0,2/3,0,2/3)$.
 }
 \centering
 \fbox{%
 \begin{tabular}{c|cccccc}
 & $\theta_1$& $\theta_2$& $\theta_3$& $\theta_4$& $\theta_5$& $\theta_6$\\
 \hline
 CLE& 0.283& 0.253& 0.225& 0.314& 0.251& 0.285\\ 
MLE& 0.285& 0.254& 0.226& 0.313& 0.251& 0.286\\ 
PLE& 0.300& 0.262& 0.235& 0.337& 0.260& 0.306 
\end{tabular}}
\end{table}

We repeated the same experiment 1,000 times. Table~\ref{tab:Gaussian} shows the root mean square errors of  CLE and MLE together with those of PLE defined in Section~\ref{subsection:Besag}.
CLE has almost the same performance as MLE. PLE is also competitive but slightly worse.
Note that the values on $\theta_5$ and $\theta_6$ are close to those on $\theta_2$ and $\theta_1$, respectively, by symmetry of the covariance structure.

\begin{table}[htb]
 \caption{\label{tab:Gaussian-more}
 The root mean squares of the norm $\|\hat\theta-\theta\|$ for the three estimators.
 The model is Gaussian. The covariance structures are $\sigma_{ij}=\rho^{|i-j|}$ for the auto-regressive model and $\sigma_{ij}=(1-\rho)\delta_{ij}+\rho$ for the exchangeable model.
 }
 \centering
 \fbox{%
 \begin{tabular}{c|cccc|ccc}
  & \multicolumn{4}{|c|}{Auto-regressive model} & \multicolumn{3}{|c}{Exchangeable model}\\
  & $\rho=0$& $\rho=1/4$& $\rho=1/2$& $\rho=3/4$& $\rho=1/4$& $\rho=1/2$& $\rho=3/4$ \\
  \hline
  CLE & 0.415 & 0.507 & 0.647 & 1.410 & 0.477 & 0.688 & 1.392 \\ 
  MLE & 0.416 & 0.509 & 0.653 & 1.422 & 0.479 & 0.693 & 1.398 \\ 
  PLE & 0.433 & 0.529 & 0.694 & 1.494 & 0.507 & 0.734 & 1.516 \\ 
\end{tabular}}
\end{table}

We also examined other covariance matrices of the form $\sigma_{ij}=\rho^{|i-j|}$ for $\rho\in\{0,1/4,1/2,3/4\}$
and $\sigma_{ij}=(1-\rho)\delta_{ij}+\rho$ for $\rho\in\{1/4,1/2,3/4\}$. The root mean squares of the norm $\|\hat\theta-\theta\|$ are summarized in Table~\ref{tab:Gaussian-more}, where the number of experiments is 200 in each case.

The mean computational time (resp.\ standard deviation) for CLE and PLE per each experiment was 6.2 (3.3) and 2.0 (0.2) in seconds, respectively. We find that PLE is faster and (numerically) more stable.

\subsection{Three-dimensional interaction}

We next consider a three-dimensional interaction model
\[ h(x_1,x_2,x_3) = (x_1x_2, x_1x_3, x_2x_3,  x_1x_2x_3)^\top
\]
on $\mathcal{X}=\prod_{i=1}^3\mathcal{X}_i=\mathbb{R}^3$.
The true parameter value of $\theta$ is set to $\theta=(1,0,0,-1)$ and the true marginal densities are set to the standard normal distribution.
For the simulation, we first generate a ``population'' of size $N=10^3$ by the sampling algorithm in Remark~\ref{remark:sampling algorithm for the min-info} with the number of iterations $L=150N$. Then a sample of size $n=100$ is randomly selected from the population without replacement.
The tolerance for CLE and the step length of MCMC are set to the same values as the Gaussian case.
We repeated the same experiment 1,000 times.

\begin{table}[htb]
    \caption{
    \label{tab:3-dim}Root mean square error of two estimators and  coverage probabilities of the 95\% confidence intervals based on CLE.
    The model is the three-dimensional interaction model and the true parameter value is $\theta=(1,0,0,-1)$.
    }
    \centering
    \fbox{
    \begin{tabular}{c|cccc}
         & $\theta_1$& $\theta_2$& $\theta_3$& $\theta_4$  \\\hline
        error of CLE & 0.214& 0.159& 0.153& 0.168 \\
        error of PLE & 0.260& 0.190& 0.189& 0.250 \\\hline
        coverage & 0.953& 0.952& 0.965& 0.968\\
    \end{tabular}}
\end{table}

Table~\ref{tab:3-dim} shows the root mean square errors of CLE and PLE together with the coverage probability of the 95\% confidence intervals constructed from CLE.
As expected, CLE has smaller error than PLE.
The confidence intervals are almost exact or slightly conservative.

\begin{table}[htb]
 \caption{\label{tab:3dim-more}
 The root mean squares of the norm $\|\hat\theta-\theta\|$ for CLE and PLE.
 The model is the three-dimensional interaction model. The true parameter values are $\theta=(a,0,0,-a)$, where $a\in\{0,1,2\}$.
 }
 \centering
 \fbox{%
 \begin{tabular}{c|ccc}
  & $a=0$& $a=1$& $a=2$\\
  \hline
CLE & 0.211 & 0.353 & 0.556 \\ 
  PLE & 0.230 & 0.459 & 0.738 \\ 
%%% The following is the previous version of the table.
%%% The sample size was incorrectly set to n=50.
%CLE & 0.316 & 0.551 & 0.822 \\ 
%  PLE & 0.356 & 0.711 & 1.083 \\
\end{tabular}}
\end{table}

We also examined other parameter values $\theta=(a,0,0,-a)$ for $a\in \{0,1,2\}$.
The root mean squares of the norm $\|\hat\theta-\theta\|$ are summarized in Table~\ref{tab:3dim-more}.

\subsection{Mixed variables}

Finally, we study a case with continuous and discrete variables.
As in Example~\ref{example:mixed}, the marginal distributions are set to Beta(10, 10) and Poisson(3), respectively. The canonical statistic is
$h(x_1,x_2) = x_1/(1+x_2)$ and the true parameter values are set to  $\theta\in\{0,10,100\}$. The sampling method is the same as the preceding subsection except that the sample size is $n=50$.
The tolerance for solving CLE was set to $10^{-5}$.

Table~\ref{tab:mixed} shows the root mean square errors, biases and standard deviations of CLE and PLE. CLE has better performance than PLE.

\begin{table}[htb]
    \caption{
    \label{tab:mixed}
    Root mean square errors, biases and standard deviations of the two estimators.
    The model is $h(x_1,x_2)=x_1/(1+x_2)$ with the beta and Poisson marginals. The true parameter value is set to $\theta\in\{0,10,100\}$.
    }
    \centering
    \fbox{
    \begin{tabular}{r|ccc|ccc|ccc}
         & \multicolumn{3}{c}{$\theta=0$} &\multicolumn{3}{|c}{$\theta=10$}& \multicolumn{3}{|c}{$\theta=100$}\\
         & RMSE& bias& SD
         & RMSE& bias& SD
         & RMSE& bias& SD\\
         \hline
CLE & 8.19 & 0.08 & 8.18 & \hphantom{0}9.40 & 1.14 & \hphantom{0}9.33& 29.26 & 4.96 & 28.84 \\ 
PLE& 9.03 & 0.12 & 9.03 & 10.34 & 1.89 & 10.17 & 32.34 & 7.31 & 31.51 \\ 
%%% The following is the previous version of the table.
%  CLE & 8.294 & \hphantom{0}9.416 & 29.850 \\ 
%  PLE & 9.479 & 10.291 & 33.450 \\ 
    \end{tabular}}
\end{table}

\section{Future directions} \label{section:discussion}

We here address potential future directions of our work.

The interpretability of the canonical parameter $\theta$ have to be more clarified. 
In Section \ref{section:model}, we have clarified the connection between the total correlation and $\theta$.
In the application to Earthquake data in Appendix D.2,
we have demonstrated that the estimation of $\theta$ not only provided the existence of the dependence but also identified the pattern of the dependence. For clearer interpretation of $\theta$, the connection to the partial correlation has to be more investigated.

The asymptotic properties of CLE have to be clarified. The asymptotic normality and efficiency for contingency tables are described in Theorem~\ref{theorem:asymptotic-normality}. We expect that the same properties are valid even for infinite sample spaces.
The limiting behavior of the approximate sampling algorithm in Remark~\ref{remark:sampling algorithm for the min-info} as $n\to\infty$ is also under investigation.

In Theorem~\ref{theorem:existence-sufficient}, we found that CLE of $\theta$ exists under suitable conditions.
However, if the dimension of $\theta$ is high, the estimator may not exist and some regularization will be necessary.
A simple method of regularization is to assume a (conditional) conjugate prior density
\[
p(\theta|M) = \exp(\lambda_0\mu_0^\top\theta-\lambda_0\Psi(\theta))
\]
for the exponential family (\ref{eq:conditional-likelihood}), where 
$\Psi(\theta)$ is the potential function of the conditional likelihood in (\ref{eq:conditional-likelihood}),
and
$\mu_0\in\mathbb{R}^K$ and $\lambda_0>0$ are hyper-parameters.
The maximum a posteriori estimator
exists if $\mu_0$ is an interior point of $P$ defined in Lemma~\ref{lemma:cmle-existence}.
In practice, we can select $\mu_0$ as the sample mean of randomly generated vectors in $h_*(\mathbb{S}_n^d)$.
The hyper-parameter $\lambda_0$ may be set to $1/n$ as a rule of thumb.
Investigating the PLE with regularization would be another important issue.

In our estimation procedure, we need to perform MCMC sampling as stated in Subsection~\ref{subsection:sampling}. The proposed algorithm was to exchange two elements of permutations. This is just one particular move. A mover producing a shorter mixing time should be found. Parallel algorithms will also be valuable.

We focused on estimation of  the canonical parameter $\theta$ representing the dependence.
However, estimation of the marginal parameters $\nu$ is also important in some cases. For example, if our goal is to predict future observations, then an estimator of the marginal parameters is necessary. A simple procedure to estimate $\nu$ is to perform the maximum likelihood estimation assuming the independent model $\prod_{j=1}^d r_j(x_j;\nu)$. The obtained estimator $\hat\nu$ is consistent by Theorem~\ref{theorem:negligible}. However, $\hat\nu$ is not asymptotically efficient in general, as observed in the bivariate Gaussian model with a common variance.

In our analysis, we assumed that the data were completely observed. In practice, handling data missing is a common challenge. To deal with missing data, we can extend the domain $\mathcal{X}_j$ to $\mathcal{X}_j\cup\{{\sf NA}\}$ for each $j=1,\ldots,d$, where {\sf NA} indicates missing. Then, the proposed method is applicable whenever we specify the canonical statistic $h(x)$ including the missing indicator. For example, $h(x_1,x_2)=x_1I_{\{x_2={\sf NA}\}}$ represents a missing effect of the second variable $x_{2}$ on the first variable $x_1$. A missing data analysis is important future work.

Finally, there are two mathematical open problems.
First, the properties of the potential function $\psi(\theta)$ in the model (\ref{eq:min-info}) are  unknown except for convexity. We conjecture that $\psi(\theta)$ is analytic at every interior point, by analogy with the standard theory of exponential families.  
However, $\psi(\theta)$ is defined by functional equations, which makes the problem complicated.
The second open problem is the equivalence of the three feasibility conditions discussed in Subsection~\ref{subsection:feasible}.
The problem is related to the closedness of sum spaces of $L_1$-functions; see \cite{Ruschendorf1993}.

\section*{Acknowledgements} \label{section:acknowledgements}

We thank the AE and three referees for their constructive comments that have improved the quality of this paper.
We thank Yici Chen, Hironori Fujisawa, Masayuki Kano, Hisahiko Kubo,
Michiko Okudo, and Akifumi Okuno
for helpful discussions.
We used the earthquake catalog provided by the Japan Meteorological Agency (\cite{JMA})
and used GMT software package (\cite{WesselSmith1998}) to create the maps.
This work is supported by JST CREST (JPMJCR1763), JSPS KAKENHI (19K20222, 21H05205, 21K11781, 21K12067), and MEXT (JPJ010217).
The {\sf{R}} and Python codes are available at \url{https://doi.org/10.5281/zenodo.8012980}.

\appendix
\renewcommand{\theequation}{S.\arabic{equation}}
\renewcommand{\thetheorem}{S.\arabic{theorem}}
\renewcommand{\thelemma}{S.\arabic{lemma}}
\renewcommand{\thecorollary}{S.\arabic{corollary}}
\renewcommand{\theexample}{S.\arabic{example}}
\renewcommand{\thetable}{S.\arabic{table}}
\renewcommand{\thefigure}{S.\arabic{figure}}
  \bigskip
  \bigskip
  \bigskip
  \begin{center}
    {\LARGE\bf Supplementary material for ``Minimum information dependence modeling''}
\end{center}
  \medskip

The supplementary material provide 
(A) more examples of the minimum information dependence model,
(B) useful properties of the model, including information geometry, and relationship to the optimal transport and Schr\"{o}dinger problems,
(C) proofs of the main results,
and
(D) application to real data.
Hereafter, the numbering for the main manuscript should be like Theorem 1, and the numbering for this supplement should be like Theorem S.1, respectively.

\section{More examples of the minimum information dependence model}\label{sec: more examples}

Here, we provide more examples of the minimum information dependence model.

\begin{example}[total positivity]
 Let $d=2$ and $\mathcal{X}_1=\mathcal{X}_2=\mathbb{R}$.
 The marginal density functions $r_i$ are arbitrary.
 Suppose that $f_i(x_i)$ is a strictly increasing function and $\int f_i(x_i)^2r_i(x_i){\rm d}x_i<\infty$ for each $i$.
 Then $H_\theta(x_1,x_2)=\theta f_1(x_1)f_2(x_2)$ is feasible since the condition of Theorem~\ref{theorem:feasible} is satisfied by $b_i(x_i)=|\theta|f_i(x_i)^2/2$.
 If $\theta>0$, then the density function has total dependence, that is,
 $ p(x_1,x_2)p(y_1,y_2) - p(x_1,y_2)p(y_1,x_2)>0 $
 for all pairs $(x_1,x_2)$ and $(y_1,y_2)$ with $x_1<y_1$ and $x_2<y_2$.
 See \cite{HollandWang1987} and \cite{Kurowicka2015} for totally dependent continuous distributions.
\end{example}

\begin{example}[circular model]
 Let ${\rm d}x_i$ be the Lebesgue measure on the interval $[0,2\pi)$, which represents a unit circle. Let $r_i(x_i)=1/(2\pi)$ be the uniform density and define $H_\theta(x_1,x_2)=\theta \cos(x_1-x_2)$.
 Then $H_\theta$ is feasible for any $\theta\in\mathbb{R}$.
 Indeed, we can find the exact solution $a_i(x_i)=0$ and $\psi(\theta)=\log Z(\theta)$, where $Z(\theta)=\int_0^{2\pi}e^{\theta\cos u}{\rm d}u/(2\pi)$.
 More generally, if $H_\theta(x_1,x_2)=\theta f(x_1-x_2)$ for a given function $f$, then the exact solution is $a_i(x_i)=0$ and $\psi (\theta) = \log\{\int_0^{2\pi} e^{\theta f(u)}{\rm d}u/(2\pi)\}$.
 The model is a particular class of circulas as discussed by \cite{Jones_et_al2015}.
\end{example}

If all domains are finite, the model reduces to the log-linear model as follows.

\begin{example}[log-linear model] \label{example:log-linear}
 Consider a discrete space $\mathcal{X}=\prod_{i=1}^d \mathcal{X}_i$, where $\mathcal{X}_i=\{0,\ldots,I_i-1\}$ ($I_i\geq 2$) is equipped with the counting measure ${\rm d}x_i$.
 We denote $x_s=(x_j)_{j\in s}$ for each subset $s\subset[d]=\{1,\ldots,d\}$.
 Consider the minimum information dependence model
 \[
  p(x;\theta) = \exp\left(
  \sum_{s\subset[d],|s|\geq 2}\theta_s^\top h_s(x_s) - \sum_{i=1}^d a_i(x_i;\theta,\nu)-\psi(\theta,\nu)
  \right)\prod_{i=1}^d r_i(x_i;\nu),
 \]
 where $h_s=(\delta_{y_s})_{y_s\in\mathcal{Y}_s}$, $\mathcal{Y}_s=\prod_{j\in s}\{1,\ldots,I_j-1\}$ and $\delta_{y_s}(x_s)$ is the Kronecker delta.
 The model is log-linear except on the main effects $a_i(x_i;\theta,\nu)$s that have a one-to-one correspondence with the marginal distributions $r_i(x_i;\nu)$ for a given $\theta$ by Theorem~\ref{theorem:feasible}. The model reduces to the log-linear model in this sense.
 The parameterization is explained in terms of information geometry by \cite{Amari2001}, where the pair of parameters $\theta$ and $(r_i(\cdot;\nu))_{i=1}^d$ are called the mixed coordinate system.
\end{example}

\section{Properties of the model}
\label{section: Supplementary properties}

This supplement presents properties of the minimum information dependence model:
expressions for
the potential function
and
Fisher information matrix (Appendix \ref{subsection:potential}),
the information-geometrical structure (Appendix \ref{subsection:dual-foliation}),
and the relationship with the entropic transport and Schr\"{o}dinger problems (Appendix  \ref{subsection:transport}).

\subsection{Properties of the potential function} \label{subsection:potential}

Consider the minimum information dependence model $p(x;\theta,\nu)$ determined by (\ref{eq:min-info}), (\ref{eq:marginal-condition}), and (\ref{eq:zero-mean-condition}).
In this subsection, we fix the marginal parameters $\nu$ and omit them from expressions as
\begin{align}
 p_\theta(x) = e^{\theta^\top h(x)-\sum_i a_i(x_i;\theta) -\psi(\theta)}p_0(x),
 \label{eq:min-info-without-nuisance}
\end{align}
where $p_0(x)=\prod_{i=1}^d r_i(x_i)$.
Denote the Kullback--Leibler divergence by
\[
D(p,q)=\int p(x)\log\{p(x)/q(x)\}{\rm d}x.
\]
Let $\mathcal{M}$ be the set of density functions $p$ with the given marginals $r_i$.

\begin{lemma} \label{lemma:potential}
The potential function $\psi(\theta)$ can be written as
\begin{align}
 \psi(\theta) = \sup_{p\in\mathcal{M}}\left\{\theta^\top \int h(x)p(x){\rm d}x - D(p,p_0)\right\},
 \label{eq:psi}
\end{align}
where the supremum is attained by (\ref{eq:min-info-without-nuisance}).
In particular, $\psi(\theta)$ is convex.
\end{lemma}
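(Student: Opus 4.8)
The plan is to establish the variational formula by a direct two-sided comparison, exploiting a Gibbs-type identity that expresses the gap $\psi(\theta)-F(p)$ as a Kullback--Leibler divergence. Write $F(p):=\theta^\top\int h(x)p(x)\,\mathrm{d}x - D(p,p_0)$ for $p\in\mathcal{M}$. First I would evaluate $F$ at the model density $p_\theta$ in (\ref{eq:min-info-without-nuisance}). Expanding $D(p_\theta,p_0)=\int p_\theta\left[\theta^\top h-\sum_i a_i(x_i;\theta)-\psi(\theta)\right]\mathrm{d}x$ and invoking the zero-mean condition (\ref{eq:zero-mean-condition}), the adjusting-function term vanishes, which gives $D(p_\theta,p_0)=\theta^\top\int h\,p_\theta\,\mathrm{d}x-\psi(\theta)$ and hence $F(p_\theta)=\psi(\theta)$.

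The key step is then to compare an arbitrary $p\in\mathcal{M}$ against $p_\theta$ through the nonnegative divergence $D(p,p_\theta)$. Decomposing $D(p,p_\theta)=\int p\log(p/p_0)\,\mathrm{d}x+\int p\log(p_0/p_\theta)\,\mathrm{d}x$ and substituting $\log(p_0/p_\theta)=-\theta^\top h+\sum_i a_i(x_i;\theta)+\psi(\theta)$ yields
\[
D(p,p_\theta)=D(p,p_0)-\theta^\top\int h\,p\,\mathrm{d}x+\int\sum_i a_i(x_i;\theta)\,p(x)\,\mathrm{d}x+\psi(\theta).
\]
Here the crucial observation is that, because each $a_i$ depends on $x_i$ alone and every $p\in\mathcal{M}$ shares the marginals $r_i$, the integral $\int\sum_i a_i(x_i;\theta)p(x)\,\mathrm{d}x=\sum_i\int a_i(x_i;\theta)r_i(x_i)\,\mathrm{d}x_i$ takes the \emph{same} value for all $p\in\mathcal{M}$, and equals $0$ by the marginal form of the zero-mean condition noted just after (\ref{eq:zero-mean-condition}). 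This collapses the identity to $\psi(\theta)=F(p)+D(p,p_\theta)$.

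With this identity in hand the lemma follows at once: since $D(p,p_\theta)\ge 0$ with equality iff $p=p_\theta$ almost everywhere, we obtain $F(p)\le\psi(\theta)$ for every $p\in\mathcal{M}$ while $F(p_\theta)=\psi(\theta)$, so the supremum in (\ref{eq:psi}) equals $\psi(\theta)$ and is attained, uniquely, at $p_\theta$. Convexity is then free of charge: for each fixed $p$ the map $\theta\mapsto\theta^\top\int h\,p\,\mathrm{d}x-D(p,p_0)$ is affine in $\theta$, and $\psi$ is the pointwise supremum of this family of affine functions, hence convex.

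I expect the only genuine obstacle to be bookkeeping of integrability rather than anything structural: one must ensure that $\int h\,p$, $D(p,p_0)$, and $\int\sum_i a_i\,p$ are all well defined, so that the formula is interpreted on the subset of $\mathcal{M}$ with $D(p,p_0)<\infty$, and that the additive splitting of $D(p,p_\theta)$ is legitimate, which requires the adjusting functions to be integrable against the common marginals --- precisely the (strong) feasibility setting underlying the model. Under those conditions the argument is clean and needs no further analysis.
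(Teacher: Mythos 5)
Your proof is correct, but it takes a genuinely different route from the paper's. The paper rewrites the objective in (\ref{eq:psi}) as $-\int u(x)k(x)\log u(x)\,\mathrm{d}x$ with $k=p_0e^{\theta^\top h}$ and $u=p/k$, recognizes this as the entropy-minimization problem of Lemma~\ref{lemma:Borwein} (the Borwein et al.\ result already used to prove Theorem~\ref{theorem:feasible}), concludes from that lemma that the unique optimizer over $\mathcal{M}$ has product form and hence coincides with $p_\theta$, and then simply evaluates the objective at $p_\theta$ to obtain $\psi(\theta)$. You instead establish the two-sided bound directly through the Gibbs-type identity $\psi(\theta)=F(p)+D(p,p_\theta)$, which is the same computation that underlies the paper's Pythagorean theorem (Theorem~\ref{theorem:Pythagorean}). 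Your route is more elementary --- it needs only nonnegativity of the Kullback--Leibler divergence rather than the external variational machinery --- and it delivers an exact expression for the gap, hence uniqueness of the maximizer for free. What it costs is precisely the integrability caveat you flag: the cancellation $\int\sum_i a_i(x_i;\theta)\,p(x)\,\mathrm{d}x=\int\sum_i a_i(x_i;\theta)\,p_\theta(x)\,\mathrm{d}x=0$ for every competitor $p\in\mathcal{M}$ requires the adjusting functions to be integrable against the common marginals (the strong-feasibility setting), whereas the paper's argument only uses that $\sum_i a_i$ is $p_\theta$-integrable, which is what the proof of Theorem~\ref{theorem:feasible} actually guarantees; note the paper's own proof of Theorem~\ref{theorem:Pythagorean} makes the same implicit assumption, so this is a shared rather than a new gap. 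Your evaluation $F(p_\theta)=\psi(\theta)$ via the zero-mean condition (\ref{eq:zero-mean-condition}) and your convexity argument ($\psi$ as a pointwise supremum of affine functions of $\theta$) match the paper exactly.
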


\begin{proof}
 The objective function in (\ref{eq:psi}) is equal to
 \[
  - \int p(x)\log\frac{p(x)}{p_0(x)e^{\theta^\top h(x)}}{\rm d}x
  = -\int u(x)k(x)\log u(x){\rm d}x,
 \]
 where $k(x)=p_0(x)e^{\theta^\top h(x)}$ and $u(x)=p(x)/\{p_0(x)e^{\theta^\top h(x)}\}$.
 This is the objective function of Lemma~\ref{lemma:Borwein} in the proof of Theorem~\ref{theorem:feasible} except for the sign.
 From the proof of Theorem~\ref{theorem:feasible}, the unique optimizer $p_\theta$ of the right hand side of (\ref{eq:psi}) is given by (\ref{eq:min-info-without-nuisance}).
 The maximum value is
 \begin{align*}
  -\int p_\theta(x)\log\frac{p_\theta(x)}{p_0(x)e^{\theta^\top h(x)}}{\rm d}x
  &= \int p_\theta(x)\left(\sum_i a_i(x_i;\theta)+\psi(\theta)\right){\rm d}x\\
  &= \psi(\theta),
 \end{align*}
 which proves the equality in (\ref{eq:psi}).
 Since $\psi$ can be expressed as the supremum of affine functions, it is convex.
\end{proof}

Denote the domain of the potential function by
\[
 \mathcal{N} = \{\theta\in\mathbb{R}^K\mid \psi(\theta)<\infty\}.
\]
Suppose that ${\rm int}\mathcal{N}\neq\emptyset$.
Hereafter, we assume smoothness of $\psi$ and $a_i(x_i;\theta)$ as well as interchangeability between integration and differentiation.
These assumptions are satisfied if the sample space $\mathcal{X}$ is finite.
We use subscripts $\alpha,\beta,\ldots$ for $\theta$ and abbreviate derivatives using $\partial_\alpha=\partial/\partial\theta_\alpha$.
We also write
$c_\theta(x) = \sum_{i=1}^d a_i(x_i;\theta) + \psi(\theta)$.
The following lemma is easily proved by the identities $\int \partial_\alpha p_\theta(x){\rm d}x=0$ and $\int \partial_\alpha\partial_\beta p_\theta(x){\rm d}x=0$.

\begin{lemma} \label{lemma:psi-derivatives}
The first two derivatives of $\psi$ are
\begin{align*}
    \partial_\alpha\psi(\theta)
    &= \int h_\alpha(x)p_\theta(x){\rm d}x,
    \\
    \partial_\alpha\partial_\beta\psi(\theta)
    &= -\int p_\theta(x)\partial_\alpha\partial_\beta\log p_\theta(x){\rm d}x
    \\
    &= \int (h_\alpha(x)-\partial_\alpha c_\theta(x)) (h_\beta(x)-\partial_\beta c_\theta(x))
    p_\theta(x){\rm d}x
\end{align*}
In particular, $\psi(\theta)$ is strictly convex whenever $\{h_\alpha(x)\}_{\alpha=1}^K$ are linearly independent modulo additive functions.
\end{lemma}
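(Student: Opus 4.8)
The plan is to derive all three formulas from the two differentiation identities recorded just before the lemma, namely $\int \partial_\alpha p_\theta(x)\,{\rm d}x=0$ and $\int \partial_\alpha\partial_\beta p_\theta(x)\,{\rm d}x=0$, which come from differentiating $\int p_\theta(x)\,{\rm d}x=1$ under the assumed interchangeability of integration and differentiation. The one computational fact I would record first is that, from (\ref{eq:min-info-without-nuisance}), $\partial_\alpha\log p_\theta(x)=h_\alpha(x)-\partial_\alpha c_\theta(x)$ with $c_\theta(x)=\sum_i a_i(x_i;\theta)+\psi(\theta)$, so that $\partial_\alpha p_\theta=p_\theta\,(h_\alpha-\partial_\alpha c_\theta)$. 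I would also extract a sharper version of the first identity from the marginal constraint (\ref{eq:marginal-condition}): since $r_i(x_i)$ does not depend on $\theta$, differentiating $\int p_\theta\,{\rm d}x_{-i}=r_i$ gives $\int \partial_\alpha p_\theta(x)\,{\rm d}x_{-i}=0$ for every $x_i$, not merely after integrating out all coordinates.

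For the first derivative, integrating $\partial_\alpha p_\theta=p_\theta(h_\alpha-\partial_\alpha c_\theta)$ and using $\int\partial_\alpha p_\theta\,{\rm d}x=0$ gives $\int h_\alpha p_\theta\,{\rm d}x=\sum_i\int \partial_\alpha a_i\,p_\theta\,{\rm d}x+\partial_\alpha\psi$. It then remains to show the adjusting-function contribution vanishes. Because $\partial_\alpha a_i$ depends only on $x_i$, the sharp marginal identity lets me integrate out $x_{-i}$ first and rewrite each term as $\int \partial_\alpha a_i(x_i;\theta)\,r_i(x_i)\,{\rm d}x_i$; differentiating the equivalent form $\sum_i\int a_i(x_i;\theta)r_i(x_i)\,{\rm d}x_i=0$ of the zero-mean condition (\ref{eq:zero-mean-condition}) in $\theta_\alpha$ then shows these integrals sum to zero, yielding $\partial_\alpha\psi=\int h_\alpha p_\theta\,{\rm d}x$.

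For the second derivative I would differentiate this once more: $\partial_\alpha\partial_\beta\psi=\int h_\alpha\,\partial_\beta p_\theta\,{\rm d}x=\int h_\alpha(h_\beta-\partial_\beta c_\theta)p_\theta\,{\rm d}x$. Replacing $h_\alpha$ by $h_\alpha-\partial_\alpha c_\theta$ alters the integrand by $\partial_\alpha c_\theta\,\partial_\beta p_\theta$, whose integral vanishes by the same marginal-plus-normalization argument (integrate out $x_{-i}$ using $\int\partial_\beta p_\theta\,{\rm d}x_{-i}=0$, and use $\int\partial_\beta p_\theta\,{\rm d}x=0$ to kill the constant $\partial_\alpha\psi$ piece). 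This produces the covariance form $\int(h_\alpha-\partial_\alpha c_\theta)(h_\beta-\partial_\beta c_\theta)p_\theta\,{\rm d}x$. Its equality with $-\int p_\theta\,\partial_\alpha\partial_\beta\log p_\theta\,{\rm d}x$ is then immediate from the second identity: expanding $\partial_\alpha\partial_\beta p_\theta=p_\theta[(\partial_\alpha\log p_\theta)(\partial_\beta\log p_\theta)+\partial_\alpha\partial_\beta\log p_\theta]$ and integrating it to zero rearranges to the claim.

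For strict convexity, note $\int(h_\alpha-\partial_\alpha c_\theta)p_\theta\,{\rm d}x=\int\partial_\alpha p_\theta\,{\rm d}x=0$, so the Hessian is the genuine covariance matrix of $(h_\alpha(X)-\partial_\alpha c_\theta(X))_\alpha$ under $p_\theta$. For $\lambda\in\mathbb{R}^K$ its quadratic form equals $\int\big(\sum_\alpha\lambda_\alpha(h_\alpha-\partial_\alpha c_\theta)\big)^2 p_\theta\,{\rm d}x\ge 0$, vanishing only if $\lambda^\top h(x)=\sum_i\big(\sum_\alpha\lambda_\alpha\partial_\alpha a_i(x_i;\theta)\big)+\text{const}$ for $p_\theta$-almost every $x$; since $p_\theta$ is positive wherever $p_0$ is, this is an identity $\lambda^\top h(x)+\sum_i A_i(x_i)=0$ with each $A_i$ a function of $x_i$ alone, whence the linear-independence-modulo-additive-functions hypothesis forces $\lambda=0$. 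I expect the only real care to be needed in the bookkeeping of the adjusting-function terms, where both the marginal constraint and the zero-mean constraint are genuinely used to make the $\sum_i\int\partial_\alpha a_i\cdots$ contributions drop out; everything else is the standard exponential-family cumulant computation, valid here under the assumed interchangeability. A minor point worth flagging is the almost-everywhere-versus-everywhere gap in the last step, resolved by the positivity of the exponential factor in $p_\theta$.
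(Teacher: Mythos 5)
Your proposal is correct and follows essentially the same route as the paper, which simply asserts that the lemma follows from the identities $\int \partial_\alpha p_\theta(x)\,{\rm d}x=0$ and $\int \partial_\alpha\partial_\beta p_\theta(x)\,{\rm d}x=0$; you supply exactly the bookkeeping the paper leaves implicit, correctly using the differentiated marginal constraint $\int\partial_\alpha p_\theta\,{\rm d}x_{-i}=0$ and the equivalent form $\sum_i\int a_i r_i\,{\rm d}x_i=0$ of the zero-mean condition to kill the adjusting-function contributions. The strict-convexity step, including the reduction to linear independence modulo additive functions, is also as intended.
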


From the lemma, the mean of $h(x)$ is the same as the gradient of $\psi(\theta)$ and the Fisher information matrix $g_{\alpha\beta}(\theta)$ is equal to the Hessian of $\psi(\theta)$.
 We call $\eta_\alpha=\partial_\alpha\psi$ the expectation parameter.
 Since $\psi(\theta)$ is strictly convex, we have a  one-to-one correspondence between the coordinates $\theta_\alpha$ and $\eta_\alpha$.

 Note that $g_{\alpha\beta}(\theta)$ is different from the covariance of $h_\alpha$
 because $\partial_\alpha c_\theta(x)$ is not the expectation of $h_\alpha$.
 The following lemma provides their difference.
  
 \begin{lemma}
 Let $\mathcal{A}$ be the subspace of $L_2(p_\theta(x){\rm d}x)$ spanned by additive functions \[  \sum_{j=1}^d b_j (x_j) \] and denote the orthogonal projection to $\mathcal{A}$ by $P_{\mathcal{A}}$.
 Then we have
 \[
 \partial_\alpha c_\theta=P_{\mathcal{A}}h_\alpha.
 \]
 In particular,
  \begin{align}
  g_{\alpha\beta}
  &= {\rm Cov}_\theta[(I-P_{\mathcal{A}})h_\alpha,(I-P_{\mathcal{A}})h_\beta]
  \label{eq:Fisher-info}
  \\
  &= {\rm Cov}_\theta[h_\alpha,h_\beta]
  - {\rm Cov}_\theta\left[P_{\mathcal{A}}h_\alpha,P_{\mathcal{A}}h_\beta
  \right],
  \label{eq:variance-decomposition}
 \end{align}
 where $I$ denotes the identity operator. 
 \end{lemma}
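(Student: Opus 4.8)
The plan is to identify $\partial_\alpha c_\theta$ with $P_{\mathcal{A}}h_\alpha$ via the two defining properties of the orthogonal projection onto $\mathcal{A}$ in $L_2(p_\theta(x){\rm d}x)$: that the image lies in $\mathcal{A}$ and that the residual is orthogonal to $\mathcal{A}$. First I would observe that $\partial_\alpha c_\theta(x)=\sum_{i=1}^d \partial_\alpha a_i(x_i;\theta)+\partial_\alpha\psi(\theta)$ is, by construction, a sum of functions of single coordinates plus a constant, hence a member of $\mathcal{A}$. It then remains to show that the residual $h_\alpha-\partial_\alpha c_\theta$ is orthogonal to $\mathcal{A}$.

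For this I would differentiate the marginal constraint (\ref{eq:marginal-condition}), which in the notation of this subsection reads $\int p_\theta(x){\rm d}x_{-i}=r_i(x_i)$ and whose right-hand side is free of $\theta$. Using $\partial_\alpha\log p_\theta=h_\alpha-\partial_\alpha c_\theta$ together with the interchangeability of differentiation and integration assumed in this subsection, the derivative gives $\int (h_\alpha(x)-\partial_\alpha c_\theta(x))p_\theta(x){\rm d}x_{-i}=0$ for every $i$ and every $x_i$. Dividing by $r_i(x_i)$, this is exactly $\Ep_\theta[h_\alpha-\partial_\alpha c_\theta\mid X_i=x_i]=0$ for each $i$. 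By the tower property, for any additive $a(x)=\sum_j b_j(x_j)\in\mathcal{A}$ one then has $\Ep_\theta[(h_\alpha-\partial_\alpha c_\theta)b_j(X_j)]=\Ep_\theta[\Ep_\theta[h_\alpha-\partial_\alpha c_\theta\mid X_j]\,b_j(X_j)]=0$, so the residual is orthogonal to $\mathcal{A}$. Combined with $\partial_\alpha c_\theta\in\mathcal{A}$ and the uniqueness of the orthogonal projection, this yields $\partial_\alpha c_\theta=P_{\mathcal{A}}h_\alpha$, hence $h_\alpha-\partial_\alpha c_\theta=(I-P_{\mathcal{A}})h_\alpha$.

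For the two displayed identities I would substitute this into the Hessian formula of Lemma~\ref{lemma:psi-derivatives}, which gives $g_{\alpha\beta}=\Ep_\theta[(I-P_{\mathcal{A}})h_\alpha\cdot(I-P_{\mathcal{A}})h_\beta]$. Since the constant function lies in $\mathcal{A}$, the residual $(I-P_{\mathcal{A}})h_\alpha$ is orthogonal to $1$ and therefore centered, so this expectation is a covariance, giving (\ref{eq:Fisher-info}). For (\ref{eq:variance-decomposition}) I would expand using that $P_{\mathcal{A}}$ is idempotent and self-adjoint, obtaining $\langle (I-P_{\mathcal{A}})h_\alpha,(I-P_{\mathcal{A}})h_\beta\rangle=\langle h_\alpha,h_\beta\rangle-\langle P_{\mathcal{A}}h_\alpha,P_{\mathcal{A}}h_\beta\rangle$ for the inner product $\langle f,g\rangle=\Ep_\theta[fg]$, and then convert each $L_2$ inner product into a covariance using that $P_{\mathcal{A}}$ fixes constants and hence preserves means, so that $\Ep_\theta[P_{\mathcal{A}}h_\alpha]=\Ep_\theta[h_\alpha]$ and the two mean-product correction terms cancel.

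The delicate points I anticipate are bookkeeping rather than conceptual: ensuring the differentiation under the integral in the marginal constraint is legitimate (already granted by the standing smoothness and interchangeability assumptions of this subsection), and tracking the centering carefully so that the passage between $L_2$ inner products and covariances is exact. The genuinely load-bearing step is the differentiation of the $\theta$-free marginal constraint, which is what forces the residual to be conditionally mean-zero coordinate by coordinate and thereby orthogonal to $\mathcal{A}$.
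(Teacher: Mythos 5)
Your proposal is correct and follows essentially the same route as the paper: the load-bearing step in both is differentiating the $\theta$-free marginal identity $\int p_\theta(x)\,{\rm d}x_{-i}=r_i(x_i)$ to obtain $\int\{h_\alpha(x)-\partial_\alpha c_\theta(x)\}p_\theta(x)\,{\rm d}x_{-i}=0$, which is exactly the stationarity/orthogonality condition the paper derives variationally for the least-squares problem $\min_{B\in\mathcal{A}}\int\{h_\alpha-B\}^2 p_\theta\,{\rm d}x$. Your phrasing via the two defining properties of the orthogonal projection and the paper's phrasing via the variational condition are the same argument, and your handling of the covariance identities matches the paper's appeal to Lemma~\ref{lemma:psi-derivatives} and the variance decomposition.
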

 
\begin{proof}
 Consider the minimization problem of minimizing $\int \{h_\alpha(x)-B(x)\}^2p_\theta(x){\rm d}x$ with respect to $B\in\mathcal{A}$.
 By using the variational method, we obtain the stationary condition
 \begin{align*}
  \int \{h_\alpha(x) - B(x)\} p_\theta(x)dx_{-i} &= 0
 \end{align*}
 for all $i$ and $x_i$.
 The solution is $B(x)=\partial_\alpha c_\theta(x)$
 because the derivative of the identity $\int p_\theta(x) {\rm d}x_{-i} = r_i(x_i)$ with respect to $\theta_\alpha$ is
 \[
  \int \{h_\alpha(x) - \partial_\alpha c_\theta(x)\} p_\theta(x){\rm d}x_{-i} = 0.
 \]
 Hence we have $\partial_\alpha c_\theta=P_{\mathcal{A}}h_\alpha$.
 The identities (\ref{eq:Fisher-info}) and  (\ref{eq:variance-decomposition}) follow from Lemma~\ref{lemma:psi-derivatives} and the variance decomposition of $h_\alpha=(h_\alpha-P_{\mathcal{A}}h_\alpha) + P_{\mathcal{A}}h_\alpha$, respectively.
 \end{proof}
 
 The projection in the lemma is the same as those for additive models (\cite{buja1989linear}). One recursive algorithm for finding $P_{\mathcal{A}}f$ for a given $f$ is called the back-fitting algorithm.

 We finally check that the Fisher information matrix coincides with the limit of the conditional Fisher information $G_{\alpha\beta}/n$ in (\ref{eq:conditional-Fisher}) if the sample space $\mathcal{X}$ is finite.
 Theorem~1.1 of \cite{haberman1977analysis} states that, in our terminology, the limit of $G_{\alpha\beta}/n$ given the marginal statistic $M$ as $n\to\infty$ is
 \begin{align}
 \bm{h}_\alpha^\top (I_{\mathcal{X}}-P_{\mathcal{A}}^*)^\top D(p_\theta)(I_{\mathcal{X}}-P_{\mathcal{A}}^*)\bm{h}_\beta,
 \label{eq:Haberman-Fisher}
 \end{align}
 where $\bm{h}_\alpha=(h_\alpha(x))_{x\in\mathcal{X}}$ is considered a column vector, $I_{\mathcal{X}}$ is the identity matrix, $D(p_\theta)$ is diagonal matrix with the diagonal part $(p_\theta(x))_{x\in\mathcal{X}}$, $\mathcal{A}$ is a set of vectors of the form $(\sum_{j=1}^d b_j(x_j))_{x\in\mathcal{X}}$ and $P_{\mathcal{A}}^*$ is the orthogonal projection to $\mathcal{A}$ with respect to the metric $D(p_\theta)$. It is straightforward to see that the expression (\ref{eq:Haberman-Fisher}) is the same as $g_{\alpha\beta}$ in (\ref{eq:Fisher-info}).

\subsection{Information geometry of the minimum information dependence model} \label{subsection:dual-foliation}

Here we investigate the information-geometrical structure of the minimum information dependence model: the Pythagorean theorem and the dual foliation structure that implies orthogonality between $\theta$ and $\nu$.
The case of finite spaces was discussed in \cite{Amari2001}.
For simplicity, we assume interchangeability of integrals, sums, and derivatives.

 Let $\mathcal{M}(r_{1},\ldots,r_{d})$ denote the set of density functions with given marginals $r_i$. For a given $H:\mathcal{X}\to\mathbb{R}$, let $\mathcal{E}(H)$ denote the set of probability density functions of the form $e^{H(x)-\sum_{i=1}^{d} a_i(x_i)}$ with $a_i(x_i)$ measurable.
 The following Pythagorean theorem holds.
 
\begin{theorem}[Pythagorean theorem; e.g.,\ \cite{Csiszar1975}] \label{theorem:Pythagorean}
 Let \[\mathcal{M}=\mathcal{M}(r_1,\ldots,r_d) \ \text{and}\ \mathcal{E}=\mathcal{E}(H).\]
 Then, for any $p\in\mathcal{M}$, $q\in \mathcal{M}\cap\mathcal{E}$, and $s\in\mathcal{E}$, we have
 \[
  D(p,s) = D(p,q) + D(q,s).
 \]
\end{theorem}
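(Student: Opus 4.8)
The plan is to verify the identity directly from the definition of the Kullback--Leibler divergence, exploiting the fact that the two members $q$ and $s$ of $\mathcal{E}(H)$ carry the common factor $e^{H(x)}$, so that $\log(q/s)$ is purely additive. First I would rearrange the target relation into the vanishing of a single integral, using that the $\log p$ contributions cancel between the first two divergences:
\begin{align*}
D(p,s)-D(p,q)-D(q,s)
&=\int p(x)\log\frac{p(x)}{s(x)}\,{\rm d}x
-\int p(x)\log\frac{p(x)}{q(x)}\,{\rm d}x
-\int q(x)\log\frac{q(x)}{s(x)}\,{\rm d}x\\
&=\int \{p(x)-q(x)\}\log\frac{q(x)}{s(x)}\,{\rm d}x.
\end{align*}
Thus it suffices to show that this last integral is zero.

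The key step is the additivity of $\log(q/s)$. Writing $q(x)=e^{H(x)-\sum_i a^{q}_i(x_i)}$ and $s(x)=e^{H(x)-\sum_i a^{s}_i(x_i)}$ for measurable adjusting functions $a^q_i,a^s_i$, the common canonical part $H(x)$ cancels, leaving a function that depends on $x$ only coordinatewise:
\[
\log\frac{q(x)}{s(x)}=\sum_{i=1}^d b_i(x_i),\qquad b_i:=a^{s}_i-a^{q}_i.
\]
Since both $p$ and $q$ belong to $\mathcal{M}(r_1,\ldots,r_d)$, they share the marginals $r_i$; integrating out all coordinates but the $i$-th therefore gives $\int\{p(x)-q(x)\}b_i(x_i)\,{\rm d}x=\int b_i(x_i)\{r_i(x_i)-r_i(x_i)\}\,{\rm d}x_i=0$ for each $i$. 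Summing over $i$ shows $\int\{p-q\}\log(q/s)\,{\rm d}x=0$, which is exactly the claimed identity $D(p,s)=D(p,q)+D(q,s)$. Equivalently, one sees directly that $\int p\log(q/s)\,{\rm d}x=\sum_i\int b_i r_i\,{\rm d}x_i=\int q\log(q/s)\,{\rm d}x=D(q,s)$, since both $p$ and $q$ project to the same marginals.

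The main obstacle I anticipate is not the algebra but the integrability bookkeeping: the splitting of the three divergences and the coordinatewise marginalization of $\sum_i b_i(x_i)$ must be justified, which requires that the relevant integrals be absolutely convergent. I would invoke the running assumption of this subsection that integrals, sums, and derivatives may be interchanged, together with the hypothesis that $p\in\mathcal{M}$, $q\in\mathcal{M}\cap\mathcal{E}$, and $s\in\mathcal{E}$ have finite divergences, so that $\log(q/s)=\sum_i b_i$ is integrable against both $p$ and $q$ and Fubini applies in each coordinate. Once that is in place, the cancellation above is legitimate and the Pythagorean relation follows.
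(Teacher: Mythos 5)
Your proof is correct and follows essentially the same route as the paper's: reduce the difference of divergences to $\int\{p-q\}\log(q/s)\,{\rm d}x$, observe that $\log(q/s)$ is additive in the coordinates because the common factor $e^{H(x)}$ cancels, and conclude by the equality of marginals of $p$ and $q$. The extra remarks on integrability are a sensible supplement but do not change the argument.
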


\begin{proof}
 Let 
 \[q(x)=\exp\left(H(x)-\sum_{i=1}^{d} a_i^q(x_i)\right)p_0(x)\]
 and 
 \[s(x)=\exp\left(H(x)-\sum_{i=1}^{d} a_i^s(x_i)\right)p_0(x).
 \]
 Then we have
 \begin{align*}
     D(p,s) - D(p,q) - D(q,s)
     &= \int \{p(x)-q(x)\}\log\frac{q(x)}{s(x)}{\rm d}x
     \\
     &= \int \{p(x)-q(x)\}\left\{-\sum_{i=1}^{d} a_i^q(x_i)+\sum_{i=1}^{d} a_i^s(x_i)\right\}{\rm d}x
     \\
     &= 0
 \end{align*}
 since $p$ and $q$ have the same marginals.
\end{proof}

In terms of information geometry, $\mathcal{M}(r_1,\ldots,r_d)$ and $\mathcal{E}(H)$ provide a dual foliation (e.g.,~\cite{AmariNagaoka2000}). More specifically, the space $\mathcal{P}$ of probability densities on $\mathcal{X}$ can be decomposed as
\[
 \mathcal{P} = \bigsqcup_{r_1,\ldots,r_d}\mathcal{M}(r_1,\ldots,r_d)
\]
and
\[
 \mathcal{P} = \bigsqcup_{[H]\in\mathcal{H}/\sim} \mathcal{E}(H),
\]
where $\mathcal{H}$ is the set of measurable functions and the equivalence relation $H_1\sim H_2$ is defined by $H_1(x)-H_2(x)=\sum_{i=1}^{d} a_i(x_i)$ for some $a_i$.
Each pair of manifolds $\mathcal{M}(r_1,\ldots,r_d)$ and $\mathcal{E}(H)$ intersects at most at one point by Theorem~\ref{theorem:Pythagorean}.

We obtain the following corollary.

\begin{corollary}[Orthogonality]
 Consider the minimum information dependence model (\ref{eq:min-info}).
 The parameters $\theta$ and $\nu$ are mutually orthogonal with respect to the Fisher information metric.
\end{corollary}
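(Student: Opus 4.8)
The plan is to establish orthogonality by directly computing the two blocks of score functions and showing that the $(\theta,\nu)$ cross-block of the Fisher information vanishes. Fixing the base point and writing the log-density as
\[
\log p(x;\theta,\nu) = \theta^\top h(x) - \sum_{i=1}^d a_i(x_i;\theta,\nu) - \psi(\theta,\nu) + \sum_{i=1}^d \log r_i(x_i;\nu),
\]
First I would compute the $\theta$-score: differentiating with respect to $\theta_\alpha$ gives $\partial_{\theta_\alpha}\log p = h_\alpha(x) - \partial_{\theta_\alpha} c_\theta(x)$ with $c_\theta = \sum_i a_i + \psi$, which by the projection lemma above equals $(I - P_{\mathcal{A}})h_\alpha$, the component of $h_\alpha$ orthogonal to the subspace $\mathcal{A}$ of additive functions in $L_2(p(x;\theta,\nu)\,{\rm d}x)$.

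Second, I would compute the $\nu$-score. Differentiating with respect to a component $\nu_a$ of the marginal parameter, the term $\theta^\top h(x)$ drops out since it does not involve $\nu$, leaving
\[
\partial_{\nu_a}\log p = -\sum_{i=1}^d \partial_{\nu_a} a_i(x_i;\theta,\nu) - \partial_{\nu_a}\psi(\theta,\nu) + \sum_{i=1}^d \partial_{\nu_a}\log r_i(x_i;\nu).
\]
Each summand depends on at most one coordinate $x_i$ (the potential-derivative term being a constant, which is itself an additive function), so the entire $\nu$-score belongs to $\mathcal{A}$.

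The conclusion then follows immediately: the $(\theta,\nu)$ block of the Fisher information is
\[
\Ep[\partial_{\theta_\alpha}\log p \;\partial_{\nu_a}\log p] = \langle (I - P_{\mathcal{A}})h_\alpha,\ \partial_{\nu_a}\log p\rangle_{p(\cdot;\theta,\nu)} = 0,
\]
because $(I-P_{\mathcal{A}})h_\alpha$ lies in $\mathcal{A}^\perp$ while the $\nu$-score lies in $\mathcal{A}$. As $\alpha$ and $a$ are arbitrary, the whole cross-block vanishes and $\theta$ and $\nu$ are orthogonal.

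The main obstacle I anticipate is purely one of regularity: differentiating under the integral sign and identifying the $\theta$-score with $(I-P_{\mathcal{A}})h_\alpha$ both rely on the interchangeability of integration, summation, and differentiation already assumed at the start of this subsection, together with the projection lemma; under those standing assumptions the argument is routine. Alternatively, the same statement can be read off the dual-foliation structure of Appendix~\ref{subsection:dual-foliation}: the curves of constant $\nu$ lie within the m-flat leaf $\mathcal{M}(r_1,\ldots,r_d)$ while varying $\nu$ moves transversally between leaves, and the Fisher-metric orthogonality of the m- and e-foliations is exactly the assertion to be proved, with the score computation above making that abstract orthogonality explicit.
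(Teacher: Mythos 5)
Your proof is correct, and it is worth noting that it proceeds by a more explicit route than the paper does. The paper states the corollary as an immediate consequence of the dual foliation: varying $\theta$ at fixed $\nu$ moves along the m-flat leaf $\mathcal{M}(r_1,\ldots,r_d)$, varying $\nu$ at fixed $\theta$ moves along the e-flat leaf $\mathcal{E}(\theta^\top h)$, and the Pythagorean theorem (Theorem~\ref{theorem:Pythagorean}) encodes the orthogonality of these two families; no score computation is written out. You instead verify the vanishing of the cross-block directly: the $\theta$-score equals $(I-P_{\mathcal{A}})h_\alpha$ by the projection identity $\partial_\alpha c_\theta = P_{\mathcal{A}}h_\alpha$ of Appendix~B.1, the $\nu$-score is manifestly a sum of single-coordinate functions plus a constant and hence lies in $\mathcal{A}$, and the $L_2(p\,{\rm d}x)$ inner product of an element of $\mathcal{A}^\perp$ with an element of $\mathcal{A}$ is zero. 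The two arguments rest on the same underlying facts (the marginals do not depend on $\theta$, and the $\nu$-dependence of the log-density is purely additive), but yours makes the tangent-space orthogonality concrete and is self-contained given the projection lemma, whereas the paper's version buys the result for free from the Pythagorean identity without any differentiation. A minor remark: you do not actually need the full identification $\partial_\alpha c_\theta = P_{\mathcal{A}}h_\alpha$; differentiating the marginal constraint $\int p\,{\rm d}x_{-i}=r_i(x_i;\nu)$ in $\theta_\alpha$ already shows the $\theta$-score is orthogonal to every additive function, which is all the argument requires. The regularity caveats you flag (interchange of differentiation and integration, square-integrability of the scores) are exactly the standing assumptions of that subsection, so the argument is complete under them.
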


\subsection{Relationship with the entropic optimal transport and the Schr\"{o}dinger bridge problems} \label{subsection:transport}

Here,
we argue that
the minimum information dependence model (\ref{eq:min-info-without-nuisance}) 
with the marginal parameters fixed 
has a close relationship with the entropic optimal transport problem (e.g.,~\cite{PeyreCuturi2019}) and the Schr\"{o}dinger bridge problem (e.g.,~\cite{Leonard2012}). 
Fix $\varepsilon>0$, function $H:\mathcal{X}\to\mathbb{R}$,
and marginal densities $\{r_{i} \,\text{on}\, \mathcal{X}_{i}:i=1,\ldots,d\}$. 
The following convex optimization problem is called the (multi-marginal) entropic optimal transport problem $\mathrm{OT}_{\varepsilon}(H,\{r_{i}:i=1,\ldots,d\})$:
\begin{align}
\begin{split}
    & {\rm Minimize}\ \ 
    \mathcal{F}_{H,\varepsilon}(p):=
    -\int
    H(x)p(x){\rm d}x
    +
    \varepsilon\int p(x)\log\frac{p(x)}{\prod_{i=1}^{d}r_{i}(x_{i})}{\rm d}x,
    \\
    & {\rm subject\ to}\ \ p\in 
    \mathcal{M}(r_1,\ldots,r_d ).
\end{split}
        \label{eq:OT-entropic}
\end{align}
Given $s\in\mathcal{E}(H/\varepsilon)$,
the following is called the  Schr\"{o}dinger bridge problem $\mathrm{SB}_{\varepsilon}(H,\{r_{i}:i=1,\ldots,d\})$:
\begin{align}
\begin{split}
    & {\rm Minimize}\ \ 
     D( p, s),
    \\
    & {\rm subject\ to}\ \ p\in 
    \mathcal{M}(r_1,\ldots,r_d ).
\end{split}
        \label{eq:Schrodinger}
\end{align}
\begin{lemma}
\label{lem: relation to EOT}
A density $p_{\theta}$ in the minimum information dependence model (\ref{eq:min-info-without-nuisance})
with the canonical parameter $\theta$, the canonical statistics $h(x)$, and marginals $r_{i}$s is characterized by
the solution of both $\mathrm{OT}_{1} ( \theta^{\top}h, \{r_{i} : i=1 , \ldots , d \})$ and $\mathrm{SB}_{1}(\theta^{\top}h, \{r_{i}:i=1,\ldots,d\})$.
\end{lemma}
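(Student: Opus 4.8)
The plan is to treat the two characterizations separately, reducing each to a result already established in this appendix. No new analytic machinery is needed: the optimal transport part falls out of the potential-function representation (Lemma~\ref{lemma:potential}), and the Schr\"odinger bridge part falls out of the Pythagorean theorem (Theorem~\ref{theorem:Pythagorean}).

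For the entropic optimal transport problem I would first rewrite the objective with $\varepsilon=1$ and $H=\theta^\top h$. Using $\int p(x)\log\{p(x)/\prod_i r_i(x_i)\}\,{\rm d}x = D(p,p_0)$ with $p_0=\prod_i r_i$, the objective becomes
\[
\mathcal{F}_{\theta^\top h,1}(p) = -\theta^\top\int h(x)p(x)\,{\rm d}x + D(p,p_0),
\]
which is exactly the negative of the expression inside the supremum in Lemma~\ref{lemma:potential}. Hence minimizing $\mathcal{F}_{\theta^\top h,1}$ over $\mathcal{M}=\mathcal{M}(r_1,\ldots,r_d)$ is the same as maximizing $\theta^\top\int h\,p\,{\rm d}x - D(p,p_0)$ over $\mathcal{M}$, whose optimal value is $\psi(\theta)$ and whose maximizer is $p_\theta$ by Lemma~\ref{lemma:potential}; uniqueness follows from strict convexity of $D(\cdot,p_0)$. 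This settles the OT characterization, with optimal value $-\psi(\theta)$.

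For the Schr\"odinger bridge problem the first step is to verify that $p_\theta$ itself lies in $\mathcal{E}(\theta^\top h)$. Writing $r_i(x_i)=e^{\log r_i(x_i)}$ and folding the scalar $\psi(\theta)$ into the additive part, I would express
\[
p_\theta(x) = \exp\left(\theta^\top h(x) - \sum_{i=1}^d \tilde a_i(x_i)\right), \qquad \tilde a_i(x_i) = a_i(x_i;\theta) - \log r_i(x_i) + \frac{\psi(\theta)}{d},
\]
which is precisely the form defining $\mathcal{E}(\theta^\top h)$. Since $p_\theta$ also meets the marginal constraints $\int p_\theta\,{\rm d}x_{-i}=r_i$, it lies in $\mathcal{M}\cap\mathcal{E}(\theta^\top h)$. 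I would then apply Theorem~\ref{theorem:Pythagorean} with $q=p_\theta$: for every $p\in\mathcal{M}$ and the prescribed $s\in\mathcal{E}(\theta^\top h)$,
\[
D(p,s) = D(p,p_\theta) + D(p_\theta,s).
\]
As $D(p_\theta,s)$ does not depend on $p$ and $D(p,p_\theta)\ge 0$ with equality exactly when $p=p_\theta$, the density $p_\theta$ is the unique minimizer, giving the SB characterization.

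I expect no serious analytic obstacle, since both halves are immediate once Lemma~\ref{lemma:potential} and Theorem~\ref{theorem:Pythagorean} are invoked. The one point requiring care is the membership $p_\theta\in\mathcal{E}(\theta^\top h)$: the definition of $\mathcal{E}(H)$ carries no separate base measure, so I must absorb both $\log r_i(x_i)$ and the constant $\psi(\theta)$ into the additive functions $\tilde a_i$, which uses that each $\log r_i$ is measurable. Beyond this bookkeeping, the argument relies only on the interchangeability and integrability conventions already assumed throughout this appendix.
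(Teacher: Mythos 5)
Your proof is correct and follows exactly the same route as the paper, which simply cites Lemma~\ref{lemma:potential} for the optimal transport characterization and Theorem~\ref{theorem:Pythagorean} for the Schr\"odinger bridge characterization; you have merely filled in the bookkeeping (sign flip of the objective, membership of $p_\theta$ in $\mathcal{M}\cap\mathcal{E}(\theta^\top h)$) that the paper leaves implicit.
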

\begin{proof}
From Lemma~\ref{lemma:potential} in Subsection~\ref{subsection:potential}, we conclude that the density is given by
$\mathrm{OT}_{1} ( \theta^{\top} h , \{r_{i} : i=1 ,  \ldots , d \} )$.
From Theorem \ref{theorem:Pythagorean}, 
we conclude that
the density is given by 
$\mathrm{SB}_{1}(\theta^{\top}h, \{r_{i}:i=1,\ldots,d\})$.
\end{proof}

In the past several years, 
the theory and applications of these problems have gathered attention in many fields; for details, see \cite{PeyreCuturi2019}.
A computationally efficient iteration algorithm called the Sinkhorn--Knopp algorithm (\cite{SinkhornKnopp1967}) is used to find the optimizer of these problems. 
The number of marginal densities $r_i$ is typically $d=2$,
and
the case $d\geq 3$ is referred to as the multi-marginal optimal transport (e.g.\ \cite{Ruschendorf2002} and \cite{kim2014general}) or multi-marginal Schr\"{o}dinger problem.
Recently, the relationship between the  multi-marginal optimal transport and graphical models was pointed out by \cite{haasler2021}.

The optimizer of $\mathrm{OT}_{\varepsilon}(H,\{r_{i}:i=1,\ldots,d\})$ converges to that of the optimal transport problem as $\varepsilon\to 0$ under certain conditions (e.g.,~\cite{PeyreCuturi2019}):
\begin{align}
\begin{split}
    & {\rm Maximize}\ \ \int H(x)p(x){\rm d}x,
    \\
    & {\rm subject\ to}\ \ p\in \mathcal{M}(r_1,\ldots,r_d).
    \label{eq:OT-Monge}
\end{split}
\end{align}
From the convergence result, we get the range of expectation parameter $\eta_\theta=\int h(x)p_\theta(x){\rm d}x$ of the model (\ref{eq:min-info}).
Consider this in a one-parameter case $H(x)=\theta h(x)$.
By dividing the objective function in (\ref{eq:OT-Monge}) by $\theta$,
the convergence result implies that
$\eta_{\theta}$ is bounded by the optimal value of (\ref{eq:OT-Monge})
and can be numerically obtained by the entropic optimal transport $\mathrm{OT}_{1/\theta}(h, \{r_{i}:i=1,\ldots,d\})$
for sufficiently large $|\theta|$.
Such an argument for obtaining the range of the expectation for copulas was discussed in \cite{BedfordWilson2014}.
To compute Table~\ref{tab:Poisson} in Example~\ref{example:count}, the domain $\mathcal{X}_i$ was truncated to $\{0,\ldots,20\}$ and the Sinkhorn--Knopp algorithm was applied to find $p(x_1,x_2;\theta,\nu)$ for $\theta=\pm 10$.
In fact, the result in Table~\ref{tab:Poisson} can also be obtained by the exact solution of the optimal transport problem that maximizes (resp.\ minimizes) the correlation coefficient. Here, the exact solution has a monotone (resp.\ anti-monotone) support, which means $p(x)p(\tilde{x})>0$ only if $(\tilde{x}_1-x_1)(\tilde{x}_2-x_2)\geq 0$ (resp.\ $(\tilde{x}_1-x_1)(\tilde{x}_2-x_2)\leq 0$).

\section{Proof of the results}
\label{section:proofs}

This supplement provides the proofs of the theoretical results.

\subsection{Proof of Theorem~\ref{theorem:feasible}} \label{subsection:feasible-proof}

We rely on the following lemma. Recall that the space $\mathcal{X}=\prod_{i=1}^d\mathcal{X}_i$ is equipped with a measure ${\rm d}x=\prod_{i=1}^d {\rm d}x_i$.
For the proof, see Corollary 3.2 of \cite{Csiszar1975}, and Corollary 3.6 and Theorem 4.2 of \cite{Borwein_et_al1994}.

\begin{lemma} \label{lemma:Borwein}
Consider a function $k\in L_1({\rm d}x)$ that is positive almost everywhere with respect to ${\rm d}x$ (a.e.\ ${\rm d}x$).
Consider the following optimization problem:
\begin{align*}
\mbox{Minimize} &\ \ \int_{\mathcal{X}} u(x)\{\log u(x)-1\} k(x){\rm d}x
\\
\mbox{subject\ to} &\ \ \int_{\mathcal{X}_{-i}} u(x)k(x){\rm d}x_{-i} = r_i(x_i)\ \mbox{a.e.}\ {\rm d}x_i\ \mbox{for\ each}\ i,
\\
&\ \ u\in L_1(k(x){\rm d}x),\ \ u(x)\geq 0\ \mbox{a.e.}\ {\rm d}x,
\end{align*}
with the convention $0\log 0=0$.
If there exists a function $u(x)>0$ (a.e.\ $k(x)dx$) that satisfies the constraints and has a finite objective function value, then the problem has a unique optimal solution $u_*$ that is written as
\begin{align}
 u_*(x) = \prod_{i=1}^d f_i(x_i)
 \label{eq:Borwein-optimal}
\end{align}
for some positive measurable functions $f_1,\ldots,f_d$.
\end{lemma}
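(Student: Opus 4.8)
The plan is to recognize the problem as an information projection and then to invoke infinite-dimensional convex duality. First I would substitute $p(x)=u(x)k(x)$, so that the marginal constraints become $\int p(x)\,{\rm d}x_{-i}=r_i(x_i)$ and $p$ ranges over the set $\mathcal{M}(r_1,\ldots,r_d)$ of densities with the prescribed marginals. Under this substitution the objective becomes
\[
\int u(x)\{\log u(x)-1\}k(x)\,{\rm d}x
=\int p(x)\log\frac{p(x)}{k(x)}\,{\rm d}x-\int p(x)\,{\rm d}x.
\]
Because every feasible $p$ has total mass $\int p(x)\,{\rm d}x=\int r_i(x_i)\,{\rm d}x_i$, which is constant on $\mathcal{M}(r_1,\ldots,r_d)$, the last term plays no role, and the problem is equivalent to minimizing the relative entropy $\int p\log(p/k)\,{\rm d}x$ over $\mathcal{M}(r_1,\ldots,r_d)$; that is, to computing the $I$-projection of $k$ onto the linear family $\mathcal{M}(r_1,\ldots,r_d)$.

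Uniqueness would then be immediate from strict convexity: the scalar map $t\mapsto t(\log t-1)$ is strictly convex on $(0,\infty)$, the constraints are linear in $u$ so the feasible set is convex, and the hypothesis furnishes a feasible point of finite value; hence a minimizer, if it exists, is unique. For existence together with the product form I would form the Lagrangian, attaching a multiplier $\lambda_i(x_i)$ to each marginal constraint. Since $\frac{{\rm d}}{{\rm d}u}\{u(\log u-1)\}=\log u$, the stationarity condition reads $\log u_*(x)=\sum_{i=1}^d\lambda_i(x_i)$ at every $x$ with $k(x)>0$, which yields exactly the claimed separable form $u_*(x)=\prod_{i=1}^d e^{\lambda_i(x_i)}=\prod_{i=1}^d f_i(x_i)$ with $f_i:=e^{\lambda_i}$.

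The hard part is to justify this Lagrangian heuristic rigorously in the infinite-dimensional setting, and this is precisely what the cited results of \cite{Csiszar1975} and \cite{Borwein_et_al1994} supply. The difficulty is twofold: the constraints form a continuum, one equation for each value of each coordinate, so the multipliers must be produced as genuine functions $\lambda_i$ rather than finitely many scalars; and one must rule out a duality gap and guarantee that the dual optimum is attained so that the first-order relation holds. The assumption that some strictly positive feasible $u$ has finite objective acts as a Slater-type constraint qualification that secures strong duality and the existence of the primal minimizer, after which the strict convexity established above pins down the optimizer and forces the representation. I would therefore not reprove these duality facts but cite them, having reduced the lemma to the standard entropy-minimization framework to which they apply.
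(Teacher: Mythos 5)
Your proposal is correct and follows essentially the same route as the paper: the paper itself offers no self-contained proof of this lemma, but simply defers to Corollary 3.2 of Csisz\'ar (1975) and Corollary 3.6 and Theorem 4.2 of Borwein et al.\ (1994), which is exactly the entropy-minimization/$I$-projection framework you reduce to before citing those same results. Your added scaffolding (the substitution $p=uk$, the strict-convexity argument for uniqueness, and the Lagrangian heuristic for the product form) is accurate and consistent with how the paper uses the lemma.
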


\begin{proof}[Proof of Theorem \ref{theorem:feasible}]
Suppose that $H(x)$ is integrable with respect to $p_0(x){\rm d}x$ and satisfies (\ref{eq:feasible-equivalent}).
Define $k(x)$ and $u(x)$ by $k(x)=p_0(x)e^{H(x)-\sum_{i=1}^{d}b_i(x_i)}$
and $u(x)=e^{-H(x)+\sum_{i=1}^{d}b_i(x_i)}$.
Then $k(x)$ is integrable due to (\ref{eq:feasible-equivalent}) and $u(x)$ satisfies all the requirements in Lemma~\ref{lemma:Borwein}.
Indeed, $u(x)k(x)$ has the marginal densities $r_i(x_i)$, $u(x)$ is positive almost everywhere, and the objective function
\[
 \int u(x)\{\log u(x)-1\}k(x){\rm d}x = \int p_0(x)\left(-H(x) + \sum_{i=1}^{d} b_i(x_i) - 1\right){\rm d}x
\]
is finite by assumption on $H$ and $b_i$.
Using the optimal solution $u_*(x)$ in (\ref{eq:Borwein-optimal}), define
\begin{align*}
 p_*(x) := k(x)u_*(x) 
 = p_0(x)
 \exp\left(
 H(x)-\sum_{i=1}^{d} b_i(x_i)
 +\sum_{i=1}^{d} \log f_i(x_i)
 \right).
\end{align*}
This is of the form (\ref{eq:min-info}).
Finally, the function $\sum_{i=1}^{d}\log f_i(x_i)$ belongs to $L_1(p_*(x){\rm d}x_i)$ because
\begin{align*}
\int u_*(x)\{\log u_*(x)-1\}k(x){\rm d}x
= \int p_*(x)\sum_{i=1}^{d}\log f_i(x_i){\rm d}x - 1
\end{align*}
is finite. Setting $\psi=\int p_*(x)\sum_{i=1}^{d}\{b_i(x_i)-\log f_i(x_i)\}{\rm d}x$ and $a_i(x_i)=b_i(x_i)-\log f_i(x_i)-\psi/d$, we deduce that $H(x)$ is feasible.

The uniqueness of both $\sum_{i=1}^{d} a_i(x_i)$ and $\psi$ follows from the uniqueness of $p_*(x)$. Indeed, $\psi$ is the expectation of
$-\log(p_*(x)/p_0(x))+H(x)$ with respect to $p_*(x)$
and $\sum_{i=1}^{d} a_i(x_i)$ is the residual.
\end{proof}

\subsection{Proofs of Corollaries \ref{corollary:polynomial}--\ref{corollary:Lipschitz}}
\label{subsection:feasible-examples-proof}

\begin{proof}[Proof of Corollary \ref{corollary:polynomial}]
 Since the set of moderately feasible functions is convex, it is enough to prove that a monomial $H(x)=c x_1^{\alpha_1}\cdots x_d^{\alpha_d}$ is moderately feasible, where the $\alpha_i$'s are non-negative integers and $c$ is a real number.
 By the arithmetic-geometric mean inequality,
 %\[
 % |x_1^{\alpha_1}\cdots x_d^{\alpha_d}|
 % \leq \frac{1}{d}(|x_1|^{\alpha_1 d}+\cdots+|x_d|^{\alpha_d d}),
 %\]
  we deduce that
 \[
  \int e^{H(x)-|c|(|x_1|^{\alpha_1 d}+\cdots|x_d|^{\alpha_d d})/d}p_0(x){\rm d}x \leq \int p_0(x){\rm d}x < \infty.\]
 Since $|x_i|^{\alpha_i d}$ is integrable with respect to $r_i(x_i){\rm d}x_i$ by assumption,
 $H$ is moderately feasible.
\end{proof}

\begin{proof}[Proof of Corollary \ref{corollary:Lipschitz}]
From the Lipschitz continuity of $H(x)$, we have
\[
H(x)\le H(x^{0}) 
    +Ld_{(p)}(x,x^{0})
    \le H(x^{0}) 
    +L\sum_{i=1}^{d}d_{i}(x_{i},x^{0}_{i}),
\]
which implies 
\[\int e^{
    H(x)-
    H(x^{0})-L\sum_{i=1}^{d}d_{i}(x_{i},x^{0}_{i})
    }p_0(x){\rm d}x \leq \int p_0(x){\rm d}x < \infty.
\]
By the integrability assumption on $d_{i}(x_{i},x^{0}_{i})$,
we obtain the conclusion.
\end{proof}

\subsection{Proof of Theorem~\ref{theorem:negligible}}
\label{subsection:negligible-proof}

Here, we give a proof of Theorem \ref{theorem:negligible}.
%In the proof,
%we drop the dependence on $\nu_{0}$ 
%in $p(x;\theta,\nu_{0})$,
%$\{a_{j}(x_{j};\theta,\nu_{0}):j=1,\ldots,d\}$,
%$\psi(\theta,\nu_{0})$,
%$g(M;\theta,\nu_{0})$,
%$\{r_{j}(x_{j};\nu_{0}):j=1,\ldots,d\}$,
%and 
%$p_{0}(x;\nu_{0})$
%because $\nu_{0}$ is fixed.

\begin{proof}[Proof of Theorem \ref{theorem:negligible}]

It suffices to show
for any subset $\mathrm{S}\subset \mathrm{N}$ containing $\nu_{0}$,
we have
\begin{align*}
    \Ep\left[\sup_{\theta\in\Theta,\nu\in\mathrm{S}}\frac{1}{n}
    \left|
    \log \frac{\prod_{t=1}^{n}p(x(t);\theta_{0},\nu_{0})}{\prod_{t=1}^{n}p(x(t);\theta,\nu)}
    -\log \frac{f(\pi\mid M;\theta_{0})}{f(\pi\mid M;\theta)}
    \right|
    \right]
    \le C
    \left\{
    \sup_{\nu\in\mathrm{S}}
    \tilde{D}(\nu_{0},\nu)
    +\varepsilon_{n}
    \right\}
\end{align*}
with some $C>0$ not depending on $n$.

The proof follows four steps.
We first construct an approximation of $g(M;\theta,\nu)$ using an $\varepsilon$-net,
second approximate the marginal density by the Stirling approximation,
third evaluate the approximated marginal density by the approximated Pythagorean relation,
and lastly evaluate the residuals.

\textbf{Preparation: lemmas.}
Before proving the theorem, 
we introduce lemmas for the proof:
\begin{enumerate}
\item[(1)] The first lemma shows that 
the Lipschitz continuity of $h(x)$ implies the Lipschitz continuity of the $a_{j}$'s;
\item[(2)] The second lemma shows that the boundedness of $h(x)$ implies the boundedness of the $a_{j}$'s;
\item[(3)] The third lemma states the finite sample evaluation of the Stirling approximation;
\item[(4)] The fourth lemma states the inequalities related to the log-sum-exp function;
\item[(5)] The fifth lemma states the deviation inequality in the $\ell_{1}$ distance;
\item[(6)] The sixth lemma presents the sensitivity of the entropic optimal transport problems with respect to marginals in the Kullback--Leibler divergence.
\end{enumerate}

\begin{lemma}
\label{lem: Lipschitz}
The Lipschitz continuity of $h(x)$ implies the Lipschitz continuity of the $a_{j}(x_{j};\theta,\nu)$'s with Lipschitz constant $\xi L_{h}$, where $\xi:=\sup_{\theta\in\Theta}\|\theta\|_{2}$.
\end{lemma}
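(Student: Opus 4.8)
The plan is to turn the marginal constraint (\ref{eq:marginal-condition}) into an explicit representation of $a_j$ as the logarithm of a partial integral, and then to control the increment $a_j(x_j;\theta,\nu)-a_j(x_j';\theta,\nu)$ by bounding the associated ratio of integrals. Fix $\theta\in\Theta$ and $\nu$, abbreviate $H(x)=\theta^\top h(x)$, and note first that $H$ is Lipschitz in the $j$-th coordinate alone with constant $\xi L_h$: by the Cauchy--Schwarz inequality and Assumption~\ref{assumption: thm 2}~(3),
\[
|H(x)-H(y)|=|\theta^\top(h(x)-h(y))|\le\|\theta\|_2\,\|h(x)-h(y)\|_2\le \xi L_h\, d_{\mathcal{X}}(x,y),
\]
and if $x$ and $y$ differ only in the $j$-th coordinate then $d_{\mathcal{X}}(x,y)=d_j(x_j,y_j)$, so the per-coordinate Lipschitz bound $\xi L_h$ holds uniformly over $\Theta$.

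Next I would isolate $a_j$. Integrating the density (\ref{eq:min-info}) over $x_{-j}$ and invoking (\ref{eq:marginal-condition}), the factors $e^{-a_j(x_j;\theta,\nu)-\psi(\theta,\nu)}r_j(x_j;\nu)$ do not depend on $x_{-j}$ and can be pulled out of the integral, giving (at points where $r_j(x_j;\nu)>0$)
\[
a_j(x_j;\theta,\nu)=\log\!\int_{\mathcal{X}_{-j}}\exp\!\Big(H(x_j,x_{-j})-\textstyle\sum_{i\neq j}a_i(x_i;\theta,\nu)\Big)\prod_{i\neq j}r_i(x_i;\nu)\,{\rm d}x_{-j}-\psi(\theta,\nu).
\]
Feasibility (Theorem~\ref{theorem:feasible}, available here because $H$ is Lipschitz and hence moderately feasible by Corollary~\ref{corollary:Lipschitz}) guarantees that this integral is finite and strictly positive, so the logarithm is well defined.

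Finally, for $x_j,x_j'\in\mathcal{X}_j$ the difference $a_j(x_j;\theta,\nu)-a_j(x_j';\theta,\nu)$ equals the logarithm of the ratio of the two integrals above: the constant $\psi(\theta,\nu)$ cancels, and the term $\sum_{i\neq j}a_i(x_i;\theta,\nu)$ is \emph{identical} in the numerator and denominator since only the $j$-th argument of $H$ is altered. Applying the pointwise bound $H(x_j,x_{-j})\le H(x_j',x_{-j})+\xi L_h\,d_j(x_j,x_j')$ inside the numerator integrand shows the numerator is at most $e^{\xi L_h d_j(x_j,x_j')}$ times the denominator, whence $a_j(x_j;\theta,\nu)-a_j(x_j';\theta,\nu)\le \xi L_h\,d_j(x_j,x_j')$; exchanging the roles of $x_j$ and $x_j'$ yields the matching lower bound and hence $|a_j(x_j;\theta,\nu)-a_j(x_j';\theta,\nu)|\le\xi L_h\,d_j(x_j,x_j')$, which is the claim. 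The estimate is essentially a one-line log-ratio argument once the representation is in hand, so there is no serious obstacle; the only points requiring care are the cancellation of $\psi(\theta,\nu)$ and of the remaining adjusting functions (which hinges on changing a single coordinate) and the well-definedness of the partial integral, both of which I would dispatch via the feasibility argument above.
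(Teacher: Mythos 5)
Your proposal is correct and follows essentially the same route as the paper: both isolate $e^{a_j(x_j;\theta,\nu)}$ via the marginal constraint (\ref{eq:marginal-condition}) as a partial integral over $x_{-j}$, bound $\theta^\top h$ by its value at a perturbed $j$-th coordinate plus $\xi L_h d_j(x_j,\tilde x_j)$, and conclude by symmetry. Your write-up adds only minor extra care (the Cauchy--Schwarz step and the well-definedness of the partial integral), which the paper leaves implicit.
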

\begin{proof}
Fix $j\in\{1,\ldots,d\}$.
From the constraint (\ref{eq:marginal-condition}),
we have
\begin{align*}
    e^{a_{j}(x_{j};\theta,\nu)}
    = \int e^{\theta^{\top} h(x)-\sum_{k\neq j}a_{k}(x_{k};\theta,\nu)-\psi(\theta,\nu)}
    \prod_{k\neq j}r_{k}(x_{k};\nu)\mathrm{d}x_{-j}.
\end{align*}
For $\tilde{x}_{j}\ne x_{j}$,
let $\tilde{x}:=(x_{1},\cdots,\tilde{x}_{j},\cdots,x_{d})$.
Let $d_{j}(\cdot,\cdot)$ be the distance in $\mathcal{X}_{j}$.
Together with the Lipschitz continuity of $h(x)$,
this implies
\begin{align*}
    e^{a_{j}(x_{j};\theta,\nu)}
    &\le 
     \int 
    e^{\xi L_{h}d_{j}(x_{j},\tilde{x}_{j})+\theta^{\top} h(\tilde{x})-\sum_{k\neq j}a_{k}(x_{k};\theta,\nu)-\psi(\theta,\nu)}
    \prod_{k\neq j}r_{k}(x_{k};\nu)\mathrm{d}x_{-j}
    \nonumber\\
    &=
    e^{\xi L_{h}d_{j}(x_{j},\tilde{x}_{j})}
    e^{a_{j}(\tilde{x}_{j};\theta,\nu)},
\end{align*}
which completes the proof.
\end{proof}

\begin{lemma}
\label{lem: boundedness of a}
The 
adjusting 
functions 
and the potential function are bounded as
\[
\left\| \sum_{j=1}^{d} a_{j} ( \cdot ; \theta,\nu ) \right\|_{\infty}\le 2d
\| \theta^{\top}h(\cdot) \|_{\infty}
\, and\, |\psi(\theta,\nu)|\le (2d+1)\| \theta^{\top}h(\cdot) \|_{\infty}.
\]

\end{lemma}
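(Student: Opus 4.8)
The plan is to bound each adjusting function through its oscillation and then assemble the two estimates. Write $B := \|\theta^\top h(\cdot)\|_\infty$ and $H(x) := \theta^\top h(x)$. First I would record the representation coming from the marginal constraint (\ref{eq:marginal-condition}),
\[
a_i(x_i;\theta,\nu) + \psi(\theta,\nu) = \log \int \exp\Bigl(H(x) - \sum_{j\neq i} a_j(x_j;\theta,\nu)\Bigr) \prod_{j\neq i} r_j(x_j;\nu)\, {\rm d}x_{-i},
\]
which is exactly the identity already exploited in the proof of Lemma~\ref{lem: Lipschitz}. It is meaningful because the model exists: $h$ is Lipschitz and $\Theta$ is bounded, so feasibility holds by Corollary~\ref{corollary:Lipschitz}.

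The first and main step is an oscillation bound. For two points $x_i$ and $x_i'$, the difference $a_i(x_i;\theta,\nu) - a_i(x_i';\theta,\nu)$ is the log-ratio of the two integrals on the right-hand side above, whose integrands differ only through the factor $e^{H(x_i,x_{-i})}$ versus $e^{H(x_i',x_{-i})}$. Since each exponent lies in $[-B,B]$, this ratio is trapped between $e^{-2B}$ and $e^{2B}$, so $|a_i(x_i;\theta,\nu) - a_i(x_i';\theta,\nu)| \le 2B$; that is, each $a_i$ oscillates by at most $2B$. Consequently $a_i$ is bounded, $\bar a_i := \int a_i(x_i;\theta,\nu) r_i(x_i;\nu)\,{\rm d}x_i$ is finite, and $|a_i(x_i;\theta,\nu) - \bar a_i| \le 2B$.

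Next I would bring in the identifiability constraint. Under (\ref{eq:marginal-condition}) the zero-mean condition (\ref{eq:zero-mean-condition}) is equivalent to $\sum_{i=1}^d \bar a_i = 0$, as noted just after its statement. Summing the pointwise bound over $i$ gives
\[
\Bigl| \sum_{i=1}^d a_i(x_i;\theta,\nu) \Bigr| = \Bigl| \sum_{i=1}^d \bigl(a_i(x_i;\theta,\nu) - \bar a_i\bigr) \Bigr| \le 2dB,
\]
which is the first claimed inequality. For the potential function, I would use the normalization $\int p(x;\theta,\nu)\,{\rm d}x = 1$, which forces $\psi(\theta,\nu) = \log \int \exp\bigl(H(x) - \sum_{i=1}^d a_i(x_i;\theta,\nu)\bigr) p_0(x)\,{\rm d}x$ with $p_0 = \prod_i r_i$. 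Since the exponent is bounded in absolute value by $B + 2dB = (2d+1)B$ and $p_0$ integrates to one, the integral lies in $[e^{-(2d+1)B}, e^{(2d+1)B}]$, so $|\psi(\theta,\nu)| \le (2d+1)B$.

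The only delicate point is the finiteness needed to pass from an oscillation bound to a genuine sup-norm bound and to invoke the integral form of (\ref{eq:zero-mean-condition}): one must know that each $a_i$ is finite almost everywhere. This follows from the existence of the density, which makes the inner integral displayed above finite and positive a.e., whence $a_i$ is finite. Everything else is a direct consequence of the two-sided squeeze $e^{-B} \le e^{H(x)} \le e^{B}$, so I expect no real obstruction beyond this bookkeeping.
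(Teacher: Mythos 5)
Your proof is correct and follows essentially the same route as the paper: both start from the representation $e^{a_j(x_j;\theta,\nu)}=\int e^{\theta^\top h(x)-\sum_{k\neq j}a_k-\psi}\prod_{k\neq j}r_k\,\mathrm{d}x_{-j}$, squeeze the integrand by $e^{\pm\|\theta^\top h\|_\infty}$, and then use the zero-mean constraint plus the normalization of the density to get the $2d$ and $2d+1$ factors. The only difference is bookkeeping: the paper centers each $a_j$ at the constant $\lambda_j$ and bounds $\sum_j\lambda_j$ via the expectation form of (\ref{eq:zero-mean-condition}), whereas you center at the marginal mean $\bar a_j$ and use the equivalent condition $\sum_j\bar a_j=0$.
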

\begin{proof}
Let $H_{\infty}:=\|\theta^{\top}h(\cdot)\|_{\infty}$.
Fix $j\in\{1,\ldots,d\}$.
From the constraint  (\ref{eq:marginal-condition}),
we have
\begin{align*}
    e^{a_{j}(x_{j};\theta,\nu)}
    &= \int e^{\theta^{\top} h(x)-\sum_{k\neq j}a_{k}(x_{k};\theta,\nu)-\psi(\theta,\nu)}
    \prod_{k\neq j}r_{k}(x_{k};\nu)\mathrm{d}x_{-j}
    \\
    &\le 
    e^{H_{\infty}+\lambda_{j}}
    \ \text{with}\ 
    e^{\lambda_{j}}:=
    \int e^{-\sum_{k\neq j}a_{k}(x_{k};\theta,\nu)-\psi(\theta,\nu)}
    \prod_{k\neq j}r_{k}(x_{k};\nu )\mathrm{d}x_{-j}.
\end{align*}
Likewise, we have
$e^{a_{j}(x_{j};\theta,\nu)}\ge e^{-H_{\infty}+\lambda_{j}}$.
Then, we obtain \[-H_{\infty}+\lambda_{j}\le a_{j}(x_{j};\theta,\nu) \le
H_{\infty}+\lambda_{j}.\]
Together with the constraint (\ref{eq:zero-mean-condition}),
taking the sum of this over $j$ and the expectation with respect to $p(x;\theta,\nu)$ yields
$-dH_{\infty}\le \sum_{j=1}^{d}\lambda_{j}\le dH_{\infty}$, which implies
\begin{align*}
-2dH_{\infty}
\le
- dH_{\infty}- \sum_{j=1}^{d}\lambda_{j}
\le 
 \sum_{j=1}^{d}a_{j}(x_{j};\theta,\nu) \le dH_{\infty}+\sum_{j=1}^{d}\lambda_{j}
 \le 2dH_{\infty}.
\end{align*}
This together with the definition of $\psi(\theta,\nu)$ completes the proof.
\end{proof}

\begin{lemma}[e.g., \cite{Robbins1955}]
\label{lem: Stirling}
For $l\in\mathbb{N}$, we have
\begin{align*}
    l^{l}
    e^{-l}(2\pi l)^{1/2}\left(1+\frac{1}{12l+1}\right)\le l! \le 
    l^{l}
    e^{-l}(2\pi l)^{1/2}\left(1+\frac{1}{12l}\right).
\end{align*}
\end{lemma}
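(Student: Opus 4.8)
Since this is the Robbins refinement of Stirling's formula, the plan is to control the remainder in $\log l!$ by a monotonicity (telescoping) argument rather than by crude estimation. First I would set
\[
a_l = \log l! - \left(l+\tfrac12\right)\log l + l,
\]
so that, after taking logarithms, the two claimed inequalities amount to sandwiching $a_l$ between $L+\tfrac{1}{12l+1}$ and $L+\tfrac{1}{12l}$ for a suitable constant $L$. The natural object to study is the increment $a_l-a_{l+1}$; a short computation collapses the factorials via $\log l!-\log(l+1)!=-\log(l+1)$ and gives the closed form $a_l-a_{l+1}=\left(l+\tfrac12\right)\log\frac{l+1}{l}-1$.

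Next I would linearize this increment. Writing $t=1/(2l+1)$, so that $l+\tfrac12=1/(2t)$ and $\frac{l+1}{l}=\frac{1+t}{1-t}$, and inserting the series $\log\frac{1+t}{1-t}=2\sum_{k\ge 0}t^{2k+1}/(2k+1)$, I obtain the clean expression $a_l-a_{l+1}=\sum_{k\ge 1}t^{2k}/(2k+1)>0$, so $(a_l)$ is decreasing. Summing the geometric majorant yields $a_l-a_{l+1}<\frac{t^2}{3}\cdot\frac{1}{1-t^2}=\tfrac{1}{12}\left(\tfrac1l-\tfrac1{l+1}\right)$, whereas retaining only the leading term gives $a_l-a_{l+1}>t^2/3=\tfrac{1}{3(2l+1)^2}$. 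The first inequality shows that $a_l-\tfrac{1}{12l}$ is \emph{increasing}, and the second---after the elementary check that $\tfrac{1}{3(2l+1)^2}\ge \tfrac{1}{12l+1}-\tfrac{1}{12(l+1)+1}$, which reduces to $24l\ge 23$---shows that $a_l-\tfrac{1}{12l+1}$ is \emph{decreasing}. Both auxiliary sequences converge to the common limit $L:=\lim_l a_l$, so $L+\tfrac{1}{12l+1}<a_l<L+\tfrac{1}{12l}$ for every $l$.

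The one genuinely external input, and what I expect to be the main obstacle, is the identification $L=\tfrac12\log(2\pi)$: the monotonicity argument pins the limit down only up to an additive constant, and evaluating that constant requires an independent computation, most cleanly via the Wallis product (equivalently, the asymptotics of $\binom{2m}{m}$). Once $L=\tfrac12\log(2\pi)$ is in hand, exponentiating the sandwich gives the Robbins bounds
\[
\sqrt{2\pi l}\,l^l e^{-l}\,e^{1/(12l+1)}\le l!\le \sqrt{2\pi l}\,l^l e^{-l}\,e^{1/(12l)},
\]
from which the displayed multiplicative factors follow as first-order surrogates through $1+x\le e^x$ (applied to $x=1/(12l+1)$ for the lower factor).
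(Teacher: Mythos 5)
The paper offers no proof of this lemma at all; it is stated with only the citation to Robbins (1955), so there is nothing internal to compare against. Your argument is exactly the standard Robbins telescoping proof, and it is correct up to and including the exponential-form bounds: the computation $a_l-a_{l+1}=\left(l+\tfrac12\right)\log\tfrac{l+1}{l}-1=\sum_{k\ge1}t^{2k}/(2k+1)$ with $t=1/(2l+1)$, the geometric majorant giving $a_l-a_{l+1}<\tfrac{1}{12}\bigl(\tfrac1l-\tfrac1{l+1}\bigr)$, the minorant $t^2/3$ together with the check $24l\ge 23$, the two monotone auxiliary sequences, and the Wallis identification $L=\tfrac12\log(2\pi)$ all check out and yield
\[
\sqrt{2\pi l}\,l^l e^{-l}\,e^{1/(12l+1)}\le l!\le \sqrt{2\pi l}\,l^l e^{-l}\,e^{1/(12l)}.
\]

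The gap is in your last sentence. The substitution $e^x\ge 1+x$ converts the \emph{lower} bound correctly, but for the \emph{upper} bound it points the wrong way: from $l!\le \sqrt{2\pi l}\,l^l e^{-l}e^{1/(12l)}$ and $1+\tfrac{1}{12l}\le e^{1/(12l)}$ you cannot conclude $l!\le \sqrt{2\pi l}\,l^l e^{-l}\bigl(1+\tfrac{1}{12l}\bigr)$, since the claimed majorant is the \emph{smaller} of the two quantities. In fact no proof can close this gap, because the upper bound as printed in the lemma is false: at $l=1$ one has $e^{-1}\sqrt{2\pi}\,(1+\tfrac{1}{12})\approx 0.9990<1=1!$, and the discrepancy persists for all $l$ since $l!/(\sqrt{2\pi l}\,l^le^{-l})=e^{1/(12l)-1/(360l^3)+\cdots}>1+\tfrac{1}{12l}$. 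The statement should read $e^{1/(12l)}$ on the right (and may as well read $e^{1/(12l+1)}$ on the left); your proof establishes precisely that corrected version. The error is harmless where the lemma is invoked in the proof of Theorem 2, since there the factors $1+\tfrac{1}{12\tilde N(i)}$ are only used to absorb everything into a bound of the form $(C_1 n)^{\sum_j I_j(\varepsilon)/2}$, and $e^{1/(12l)}\le e^{1/12}$ serves equally well.
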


\begin{lemma}
\label{lem: log sum exp}
For $l\in\mathbb{N}$, $\{x_{j}\in\mathbb{R}:j=1,\ldots,l\}$, and $\beta>0$, we have
\begin{align*}
    \max_{j=1,\ldots,l}\{x_{j}\}
    \le 
    \frac{1}{\beta}\log \sum_{j=1}^{l}\exp(\beta x_{j})
    \le
    \max_{j=1,\ldots,l}\{x_{j}\}+\frac{1}{\beta}\log l.
\end{align*}
\end{lemma}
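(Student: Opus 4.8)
The plan is to prove the two inequalities separately by sandwiching the sum $\sum_{j=1}^{l}\exp(\beta x_j)$ between a single term and $l$ copies of the largest term, then applying the increasing map $t \mapsto \frac{1}{\beta}\log t$ (which is monotone precisely because $\beta > 0$). Throughout, write $m = \max_{j=1,\ldots,l} x_j$ and let $j^{*}$ be an index achieving the maximum.

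For the lower bound, I would observe that every summand is strictly positive, so dropping all but the maximal term gives
\[
\sum_{j=1}^{l}\exp(\beta x_j) \;\ge\; \exp(\beta x_{j^{*}}) \;=\; \exp(\beta m).
\]
Applying $\frac{1}{\beta}\log(\cdot)$ to both sides, and using that it is increasing, yields $\frac{1}{\beta}\log\sum_{j=1}^{l}\exp(\beta x_j) \ge m$, which is the left-hand inequality.

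For the upper bound, I would instead bound each summand from above by $\exp(\beta m)$, since $x_j \le m$ and $t\mapsto \exp(\beta t)$ is increasing. Summing over the $l$ indices gives $\sum_{j=1}^{l}\exp(\beta x_j) \le l\,\exp(\beta m)$, and then
\[
\frac{1}{\beta}\log\sum_{j=1}^{l}\exp(\beta x_j) \;\le\; \frac{1}{\beta}\log\bigl(l\,\exp(\beta m)\bigr) \;=\; m + \frac{1}{\beta}\log l,
\]
where the last equality uses $\log(ab)=\log a + \log b$ and $\frac{1}{\beta}\log\exp(\beta m)=m$. This is the right-hand inequality, completing the proof.

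There is no substantive obstacle here: the statement is the classical log-sum-exp (soft-max) bound, and the only points requiring any care are that $\beta>0$ guarantees the monotonicity of both $t\mapsto\exp(\beta t)$ and $t\mapsto\frac{1}{\beta}\log t$, and that positivity of the summands justifies retaining a single term for the lower bound. The lemma is stated with these elementary bounds precisely because the proof of Theorem~\ref{theorem:negligible} later needs explicit, nonasymptotic control of a normalizing constant of the form $\frac{1}{\beta}\log\sum_j \exp(\beta x_j)$ in terms of its maximum, with the $\frac{1}{\beta}\log l$ gap being exactly the error term that the $\varepsilon$-net argument must absorb.
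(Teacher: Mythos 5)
Your proof is correct and follows essentially the same route as the paper's: the lower bound by retaining the maximal summand and the upper bound by majorizing each summand with $\exp(\beta\max_j x_j)$, then applying the increasing map $t\mapsto\frac{1}{\beta}\log t$. Your write-up is in fact slightly more careful than the paper's (which omits the factor $\beta$ in its one-line justification of the lower bound), but there is no substantive difference.
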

\begin{proof}
The left-hand side relation follows 
from $\exp (\max_{j=1,\ldots,l}\{x_{j}\})\le 
\sum_{j=1}^{l}\exp(x_{j})$.
The right-hand side relation follows
from $\sum_{j=1}^{l}\exp(\beta x_{j})\le l \exp(\beta\max_{j=1,\ldots,l}\{x_{j}\})$.
\end{proof}

\begin{lemma}
\label{lem: L1 and KL deviation inequality}
Let $l$ be a positive integer greater than 2 and  $Y=(Y_{j})_{j=1}^{l}$ be a random vector following the multinomial distribution with $n$ trials and event probabilities $p=(p_{j})_{j=1}^{l}$. 
Let $D(p^{1},p^{2})$ be the Kullback--Leibler divergence from $p^{1}$ to $p^{2}$, where $p^{1}$ and $p^{2}$ are probability densities on $[l]=\{1,\ldots,l\}$.
Then, 
for $\delta>(6l/n)^{1/2}$, we have
\begin{align*}
    \Pr\left(\|Y/n-p\|_{1}\ge \delta\right) &\le 2\exp(-n\delta^{2}/(2l)) \ \text{and}\\
    \Pr\left(D(Y/n,p)\ge \delta^{2}\right) 
    &\le 2\exp(-n\delta^{2}/(2l)).
\end{align*}
\end{lemma}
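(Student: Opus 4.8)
Both tail bounds fit the same template, so I would treat them in parallel. The plan is to first invoke a known finite-sample concentration inequality for the multinomial whose right-hand side factorises as a prefactor, independent of $n$ and at most exponential in the alphabet size $l$, times an exponential decaying in $n$; the $\ell_{1}$ inequality of \cite{Weissman2003} and the relative-entropy inequality of \cite{Agrawal2020} are exactly of this shape. I would then use the standing hypothesis $\delta>(6l/n)^{1/2}$, equivalently $n\delta^{2}>6l$, to absorb the $e^{O(l)}$ prefactor into the exponent. This simultaneously sharpens the decay rate to $e^{-n\delta^{2}/(2l)}$ and leaves only the universal constant $2$ in front.

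\textbf{The $\ell_{1}$ inequality.} The bound of \cite{Weissman2003} gives $\Pr(\|Y/n-p\|_{1}\ge\delta)\le(2^{l}-2)e^{-n\delta^{2}/2}\le 2^{l}e^{-n\delta^{2}/2}$. (An elementary alternative that avoids citing the exact constant: replacing one of the $n$ observations alters $\|Y/n-p\|_{1}$ by at most $2/n$, so McDiarmid's inequality, together with the mean bound $\Ep\|Y/n-p\|_{1}\le\sum_{j=1}^{l}\sqrt{p_{j}(1-p_{j})/n}\le\sqrt{l/n}$ obtained by Jensen and Cauchy--Schwarz, yields $e^{-n(\delta-\sqrt{l/n})^{2}/2}$, which a short computation reduces to the same form under the hypothesis.) It then remains to check $2^{l}e^{-n\delta^{2}/2}\le 2e^{-n\delta^{2}/(2l)}$. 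Taking logarithms, this is equivalent to $(l-1)\log 2\le(n\delta^{2}/2)(l-1)/l$, that is, to $n\delta^{2}\ge 2l\log 2$, which holds because $n\delta^{2}>6l>2l\log 2$.

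\textbf{The relative-entropy inequality.} Here I would apply the relative-entropy concentration bound of \cite{Agrawal2020}, which controls $\Pr(D(Y/n,p)\ge\gamma)$ by $e^{-n\gamma}$ up to a factor that does not grow with $n$ and is at most exponential in $l$. Setting $\gamma=\delta^{2}$ and repeating the absorption step of the previous paragraph turns this into $2e^{-n\delta^{2}/(2l)}$ whenever $n\delta^{2}>6l$. The reason for invoking a sharp estimate rather than the elementary method of types is precisely this absorption: the method of types only furnishes the prefactor $(n+1)^{l-1}$, whose logarithm grows like $(l-1)\log n$ and can never be dominated by $n\delta^{2}/2$ for a fixed $\delta$ of the critical order $\sqrt{l/n}$, so the clean target form would be out of reach.

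\textbf{Main obstacle.} The $\ell_{1}$ half is routine once a standard multinomial deviation bound is available, and is even self-contained through bounded differences because $\|Y/n-p\|_{1}$ is $(2/n)$-Lipschitz in each observation. The relative-entropy half is the delicate one: $D(Y/n,p)$ is neither bounded nor Lipschitz in the data, so McDiarmid-type arguments do not apply, and the entire difficulty lies in securing an $n$-independent prefactor so that the single threshold $n\delta^{2}>6l$ is enough to reach the stated rate. Verifying that the constant $6$ is large enough to absorb the $e^{O(l)}$ prefactor in both cases (it suffices that $6>2\log 2$) is the only quantitative point, and it is comfortable in each instance.
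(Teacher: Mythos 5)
Your overall route is the same as the paper's: quote the Weissman et al.\ $\ell_1$ bound and the Agrawal relative-entropy bound, then use the hypothesis $n\delta^2>6l$ to absorb the prefactor into the exponent and sharpen the rate to $e^{-n\delta^2/(2l)}$. The $\ell_1$ half is correct and complete; your reduction to $n\delta^2\ge 2l\log 2$ is in fact slightly cleaner than the chain of inequalities the paper uses, and the McDiarmid alternative you sketch also goes through for $l\ge 3$.

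The KL half, however, contains a genuine gap: you describe the Agrawal bound as $e^{-n\gamma}$ times ``a factor that does not grow with $n$ and is at most exponential in $l$,'' and you conclude that the only quantitative point is $6>2\log 2$. Neither claim is right. The actual bound (Theorem I.2 of Agrawal, as used in the paper) is
\begin{equation*}
\Pr\left(D(Y/n,p)\ge \delta^{2}\right)\le \exp\left(-n\delta^{2}+(l-1)\log \frac{en\delta^{2}}{l-1}\right),
\end{equation*}
whose prefactor $\left(en\delta^{2}/(l-1)\right)^{l-1}$ grows polynomially in $n\delta^{2}$ — for fixed $\delta$ it grows like $n^{l-1}$, the same order as the method-of-types prefactor you dismiss. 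What saves the argument is not that the prefactor is $n$-independent but that it is a function of $x:=n\delta^{2}/(l-1)$ alone, so the linear term $n\delta^{2}=(l-1)x$ dominates the logarithmic term $(l-1)\log(ex)$ once $x$ is large enough; concretely, the paper uses $x>2+2\log x$ for $x>6$, which is exactly where the constant $6$ in the hypothesis $\delta>(6l/n)^{1/2}$ is consumed (the threshold for $x>2+2\log x$ is about $5.36$, so $6>2\log 2$ is far from sufficient here). With that substitution — $n\delta^{2}-(l-1)\log\{en\delta^{2}/(l-1)\}>n\delta^{2}/2-\log 2>n\delta^{2}/(2l)-\log 2$ for $n\delta^{2}/(l-1)>6$ — your absorption step closes and the proof matches the paper's.
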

\begin{proof}
From Theorem 2.1 in \cite{Weissman2003}, we have
\begin{align*}
    \Pr(\|Y/n-p\|_{1}\ge \delta)
    \le 2^{l}e^{-n\delta^{2}\varphi(\pi_{p})/4},
\end{align*}
where $\varphi(a):=\{1/(1-2a)\}\log\{(1-a)/a\}$ for $a\in[0,1/2)$ and $\varphi(1/2)=2$,
and $\pi_{p}$ is some point in $[0,1/2]$.
Since $\varphi(a)\ge 2$ for $a\in [0,1/2]$, we have
\begin{align*}
    \Pr(\|Y/n-p\|_{1}\ge \delta)
    \le 2^{l}e^{-n\delta^{2}/2}.
\end{align*}
Observe
that for $\delta\ge (4l/n)^{1/2}$, we have
\begin{align*}
    (l-1)\log 2 \le l\le \frac{1}{4}n\delta^{2} 
    \le \frac{1}{2}(1-1/l)n\delta^{2}.
\end{align*}
This implies 
\begin{align*}
    \Pr(\|Y/n-p\|_{1}\ge \delta)
    \le 2^{l}e^{-n\delta^{2}/2}
    \le 2e^{-n\delta^2 / (2l)}.
\end{align*}

From Theorem I.2 in \cite{Agrawal2020},
we have
\begin{align*}
    \Pr(D(Y/n,p)\ge \delta^{2})\le 
    e^{-n\delta^{2}+(l-1)\log \{(en\delta^{2})/(l-1)\}}.
\end{align*}
Observe that for $\delta \ge (6l/n)^{1/2}$, we have
\begin{align*}
    n\delta^{2} - (l-1)\log\frac{en\delta^{2}}{l-1}
    >\frac{n\delta^{2}}{2} -\log 2
    >\frac{1}{2}\frac{n\delta^{2}}{l-1} -\log 2
    >
    \frac{1}{2}\frac{n\delta^{2}}{l} -\log 2
\end{align*}
from the inequality
$x > 2+2\log x \ \text{for}\ x>6$.
So, we get
\begin{align*}
    \Pr(D(Y/n,p)\ge \delta^{2})\le 
    2e^{-n\delta^{2}/(2l)},
\end{align*}
which completes the proof.
\end{proof}

\begin{lemma}
\label{lem: Stability}
Let $I=\prod_{j=1}^{d}I_{j}$ be a finite set
and let $H$ be a function on $I$.
Let $D(\cdot,\cdot)$ be the Kullback--Leibler divergence between probability densities on $I$.
For a probability density $\pi$ on $I$,
probability densities $\{\pi_{j}\}$ on $I_{j}$s,
and 
a function H on $I$,
define
\begin{align*}
    \mathcal{S}(\pi;\{\pi_{j}\},H)
    :=\sum_{i\in I}H(i)\pi(i)-\sum_{i\in I}\pi(i)\log \frac{\pi(i)}{\prod_{j=1}^{d}\pi_{j}(i_{j})}.
\end{align*}
Let $\{p_{j}:j=1,\ldots,d\}$ and $\{q_{j}:j=1,\ldots,d\}$
be probability densities on $I_{j}$s, respectively.
Let $p$ be a density maximizing
$\mathcal{S}(\pi;\{p_{j}\},H)$
with respect to $\pi$ of which
the $j$-th marginal density is $p_{j}$ for each $j$.
Let $q$ be a density maximizing
$\mathcal{S}(\pi;\{q_{j}\},H)$
with respect to $\pi$ of which the $j$-th marginal density is $q_{j}$ for each $j$.
Then, we have
\begin{align*}
    D(p,q)
\le
(4d+2)
\max_{i\in I}|H(i)|\sum_{j=1}^{d}\|p_{j}-q_{j}\|_{1}
+
    \sum_{j=1}^{d}D(p_{j},q_{j}).
\end{align*}
\end{lemma}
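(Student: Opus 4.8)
The plan is to use the explicit exponential‑family form of the two maximizers and reduce the bound to three controllable pieces, so that no transport construction is needed. Since $I$ is finite, both variational problems attain their maxima, and by Lemma~\ref{lemma:potential} (applied with base measures $\prod_j p_j$ and $\prod_j q_j$, and with $\theta^{\top}h$ replaced by $H$) the maximizers are minimum information dependence models: there are functions $a^p_j,a^q_j$ and constants $\psi^p=\mathcal{S}(p;\{p_j\},H)$, $\psi^q=\mathcal{S}(q;\{q_j\},H)$ with
\[
 p(i)=\exp\Big(H(i)-\sum_{j=1}^d a^p_j(i_j)-\psi^p\Big)\prod_{j=1}^d p_j(i_j),\qquad
 q(i)=\exp\Big(H(i)-\sum_{j=1}^d a^q_j(i_j)-\psi^q\Big)\prod_{j=1}^d q_j(i_j),
\]
normalized by the zero‑mean condition (\ref{eq:zero-mean-condition}), i.e. $\sum_{j}\sum_{i_j}p_j(i_j)a^p_j(i_j)=0$ and likewise for $q$. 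Write $\langle f,\mu\rangle=\sum_x f(x)\mu(x)$ and $\|H\|_\infty=\max_{i\in I}|H(i)|$.

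First I would record an exact decomposition. Substituting the two forms into $D(p,q)=\sum_i p(i)\log(p(i)/q(i))$ and using that the $j$‑th marginal of $p$ equals $p_j$, the $a^p$‑terms vanish by the zero‑mean condition and one obtains
\[
 D(p,q)=\sum_{j=1}^d\langle a^q_j,\,p_j-q_j\rangle+(\psi^q-\psi^p)+\sum_{j=1}^d D(p_j,q_j),
\]
where I also used $\sum_j\langle a^q_j,q_j\rangle=0$ to replace $\langle a^q_j,p_j\rangle$ by $\langle a^q_j,p_j-q_j\rangle$. The last term already matches the corresponding term in the claim, so only the first two need to be bounded.

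The adjusting‑function term is handled by centering: since $p_j-q_j$ sums to zero, $\langle a^q_j,p_j-q_j\rangle$ is invariant under shifting $a^q_j$ by a constant, and by the argument of Lemma~\ref{lem: boundedness of a} each $a^q_j$ has oscillation at most $2\|H\|_\infty$, so $|\langle a^q_j,p_j-q_j\rangle|\le\|H\|_\infty\|p_j-q_j\|_1$. For the potential difference I would combine two algebraic identities. Evaluating $\mathcal{S}(q;\cdot,H)$ against the two base measures gives $\mathcal{S}(q;\{q_j\},H)=\mathcal{S}(q;\{p_j\},H)-\sum_j D(q_j,p_j)$, hence $\psi^q=\mathcal{S}(q;\{p_j\},H)-\sum_j D(q_j,p_j)$. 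Inserting the Gibbs form of $p$ into $D(\,\cdot\,,\prod_j p_j)$ yields, for every coupling $\pi$ with $j$‑th marginal $\mathrm{marg}_j\pi$,
\[
 \mathcal{S}(\pi;\{p_j\},H)-\psi^p=-D(\pi,p)+\sum_{j=1}^d\langle a^p_j,\mathrm{marg}_j\pi\rangle .
\]
Taking $\pi=q$ (so $\mathrm{marg}_j\pi=q_j$), dropping $-D(q,p)\le0$ and $-\sum_j D(q_j,p_j)\le0$, and centering $a^p_j$ as before gives $\psi^q-\psi^p\le\sum_j\langle a^p_j,q_j-p_j\rangle\le\|H\|_\infty\sum_j\|p_j-q_j\|_1$. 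Together the three estimates prove the lemma; this clean route in fact yields the constant $2$ in place of $4d+2$, and the looser stated constant leaves room for the cruder global estimate $\|\sum_j a_j\|_\infty\le 2d\|H\|_\infty$ of Lemma~\ref{lem: boundedness of a}.

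The main obstacle is the potential‑difference term $\psi^q-\psi^p$: a priori this is exactly a quantitative stability statement for the value of the entropic optimal transport problem under perturbation of the marginals, which is delicate in general and is what the cited result of \cite{EcksteinNutz_arXiv} addresses. The idea that makes it elementary here is to avoid building a near‑optimal coupling with the ``wrong'' marginals and instead exploit the closed exponential‑family form of $p$: the displayed identity for $\mathcal{S}(\pi;\{p_j\},H)-\psi^p$ converts the comparison into the manifestly nonpositive term $-D(q,p)$ plus a linear functional of the dual potentials $a^p_j$, which is controlled purely by their oscillation. The only genuine subtlety is that Theorem~\ref{theorem:feasible} determines $\sum_j a_j$ but not the individual $a_j$; this is resolved by the constant‑shift invariance of $\langle a_j,p_j-q_j\rangle$ together with the oscillation bound of Lemma~\ref{lem: boundedness of a}.
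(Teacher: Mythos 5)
Your opening decomposition
\[
D(p,q)=\sum_{j=1}^d\langle a^q_j,\,p_j-q_j\rangle+(\psi^q-\psi^p)+\sum_{j=1}^d D(p_j,q_j)
\]
is correct and is exactly the paper's starting point, and your centering/oscillation bound $|\langle a^q_j,p_j-q_j\rangle|\le\|H\|_\infty\|p_j-q_j\|_1$ is valid (and sharper than the paper's crude bound $\|\sum_j a^q_j+\psi^q\|_\infty\le(4d+1)\|H\|_\infty$). The gap is in the treatment of $\psi^q-\psi^p$, which is the whole difficulty of the lemma. Your first identity has the wrong sign: since
\[
\mathcal{S}(q;\{q_j\},H)-\mathcal{S}(q;\{p_j\},H)=\sum_{i}q(i)\log\frac{\prod_j q_j(i_j)}{\prod_j p_j(i_j)}=\sum_{j=1}^d D(q_j,p_j)\;\ge 0,
\]
the correct relation is $\psi^q=\mathcal{S}(q;\{p_j\},H)+\sum_j D(q_j,p_j)$, not minus. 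Consequently the term $\sum_j D(q_j,p_j)$ appears with a \emph{positive} sign in $\psi^q-\psi^p=-D(q,p)+\sum_j\langle a^p_j,q_j-p_j\rangle+\sum_j D(q_j,p_j)$ and cannot be discarded as nonpositive. What your (corrected) argument actually proves is
\[
D(p,q)\le 2\|H\|_\infty\sum_{j=1}^d\|p_j-q_j\|_1+\sum_{j=1}^d D(p_j,q_j)+\sum_{j=1}^d D(q_j,p_j),
\]
which is not the stated bound: the extra reversed divergence $\sum_j D(q_j,p_j)$ is not controlled by the quantities on the right of the lemma, and in the paper's application (Step 3 of the proof of Theorem~\ref{theorem:negligible}, where $p_j$ is an empirical marginal $\hat p_j$ that can vanish on some cells while $q_j$ does not) it is $+\infty$. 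The nice cancellation you hoped for ($-D(q,p)$ absorbing $\sum_j D(q_j,p_j)$) does not hold in general either; one only has $D(q,p)\ge\max_j D(q_j,p_j)$ by data processing, not the sum.

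So the step you identified as the ``main obstacle'' — the one-sided stability $\psi^q-\psi^p\le\|H\|_\infty\sum_j\|p_j-q_j\|_1$ of the entropic transport value under perturbation of the marginals — is not actually dispatched by the Gibbs-form identity; that inequality is precisely what the paper imports from Theorem~3.6(ii) of \cite{EcksteinNutz_arXiv}, whose proof does construct a competitor coupling with the other marginals (the ``shadow'' construction you wanted to avoid). To repair your argument you would either have to build such a competitor $\tilde\pi\in\mathcal{M}(\{p_j\})$ close to $q$ and use $\psi^p\ge\mathcal{S}(\tilde\pi;\{p_j\},H)$, or simply cite the stability result as the paper does. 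The claimed improvement of the constant from $4d+2$ to $2$ should therefore be withdrawn until the potential-difference term is handled correctly.
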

\begin{proof}
Denote by $a^{p}_{j}(i_{j})$ the 
adjusting 
functions of $p$,
and denote by $\psi^{p}$ the potential function of $p$, respectively.
Denote by $a^{q}_{j}(i_{j})$ the 
adjusting
functions of $q$,
and denote by $\psi^{q}$ the potential function of $q$, respectively.

Observe that we have
\begin{align*}
    D(p,q)=\Ep_{p}
    \left[
    \left\{\sum_{j=1}^{d} a^{q}_{j}(i_{j})+\psi^{q}
    \right\}
    -
    \left\{\sum_{j=1}^{d} a^{p}_{j}(i_{j})+\psi^{p}
    \right\}
    \right]
    +\sum_{j=1}^{d}D(p_{j},q_{j}),
\end{align*}
where $\Ep_{p}$ denotes the expectation with respect to $p$.
From Lemma \ref{lem: boundedness of a}, the boundedness of $H$ implies
\begin{align*}
    \sum_{j=1}^{d}a^{q}_{j}(i_{j})+\psi^{q}\le
    A:=(4d+1)\max_{i\in I}|H(i)|.
\end{align*}
Then, we get
\begin{align*}
        &\Ep_{p}\left[
    \left\{\sum_{j=1}^{d} a^{q}_{j}+\psi^{q}
    \right\}
    -
    \left\{\sum_{j=1}^{d} a^{p}_{j}+\psi^{p}
    \right\}
    \right]\\
    &\le A\sum_{j=1}^{d}\|p_{j}-q_{j}\|_{1}
    +\Ep_{q}\left[
    \left\{\sum_{j=1}^{d} a^{q}_{j}+\psi^{q}
    \right\}\right]
    -\Ep_{p}\left[
    \left\{\sum_{j=1}^{d} a^{p}_{j}+\psi^{p}
    \right\}\right]\\
    &=
    A\sum_{j=1}^{d}\|p_{j}-q_{j}\|_{1}
    +
    \mathcal{S}(q;\{q_{j}\},H)-
    \mathcal{S}(p;\{p_{j}\},H),
\end{align*}
where the last equation follows since 
\begin{align*}
    \mathcal{S}(p;\{p_{j}\},H)
    =\Ep_{p}\left[\sum_{j=1}^{d}a^{p}_{j}(i_{j})+\psi^{p}\right].
\end{align*}
From Theorem 3.6 (ii) of \cite{EcksteinNutz_arXiv} stating
\begin{align*}
    |\mathcal{S}(q;\{q_{j}\},H)-
    \mathcal{S}(p;\{p_{j}\},H)|
    \le \max_{i\in I}|H(i)|
    \sum_{j=1}^{d}\|p_{j}-q_{j}\|_{1},
\end{align*}
we obtain
\begin{align*}
    \Ep_{p}\left[
    \left\{\sum_{j=1}^{d} a^{q}_{j}+\psi^{q}
    \right\}
    -
    \left\{\sum_{j=1}^{d} a^{p}_{j}+\psi^{p}
    \right\}
    \right]
    \le \left(A+\max_{i\in I}|H(i)|\right)\sum_{j=1}^{d}\|p_{j}-q_{j}\|_{1},
\end{align*}
which gives
\begin{align*}
D(p,q)
\le
\left(A+\max_{i\in I}|H(i)|\right)\sum_{j=1}^{d}\|p_{j}-q_{j}\|_{1}
+
    \sum_{j=1}^{d}D(p_{j},q_{j})
\end{align*}
and completes the proof.
\end{proof}

\textbf{First step: Approximation of the marginal likelihood of $M$}.
We begin with providing an approximation of the marginal likelihood $g(M;\theta,\nu)$ of $M$.
To do so, we prepare a partition of the sample space as follows:
Fix $\varepsilon>0$. 
\begin{enumerate}
\item For $j=1,\ldots,d$,
let $I_{j}(\varepsilon)=\mathcal{N}(\mathcal{X}_{j},d_{i},\varepsilon)$;
\item Let $I(\varepsilon):=[I_{1}(\varepsilon)]\times \cdots \times [I_{d}(\varepsilon)]
=
\{1,\ldots,I_{1}(\varepsilon)\}\times \cdots \times \{1,\ldots,I_{d}(\varepsilon)\}$;
\item
For $j=1,\ldots,d$,
let $\mathcal{X}_{j}^{\varepsilon}:=\{y_{j,k}^{\varepsilon}:k\in[I_{j}(\varepsilon)]\}$ be an $\varepsilon$-net with cardinality $I_{j}(\varepsilon)$;
\item Let $\mathcal{X}^{\varepsilon}:=\mathcal{X}_{1}^{\varepsilon}\times \cdots \times \mathcal{X}_{d}^{\varepsilon}$;
\item For $j=1,\ldots,d$,
we denote the ball in $\mathcal{X}_{j}$ with center $x_{j}\in\mathcal{X}_{j}$ and radius $r>0$ by $B_{j}(x_{j},r)$;
\item For $j=1,\ldots,d$, 
create a partition $\mathcal{P}_{j}:=\sqcup_{k=1}^{I_{j}(\varepsilon)} D_{j,k}^{\varepsilon}$ of $\mathcal{X}_{j}$ as follows:
Set $D_{j,1}^{\varepsilon}=B_{j}(y_{j,1}^{\varepsilon},\varepsilon)$.
For $k>2$, set $D_{j,k}^{\varepsilon}=B_{j}(y_{j,k}^{\varepsilon},\varepsilon)\setminus (\sqcup_{k'=1,\ldots,k-1} D_{j,k'}^{\varepsilon})$ sequentially;
\item Let $\mathcal{P}:=\sqcup_{i\in I(\varepsilon)}D_{i}^{\varepsilon}$ be a partition of $\mathcal{X}$,
where
$D_{i}^{\varepsilon}:=\prod_{j=1}^{d}D_{j,i_{j}}^{\varepsilon}$
for $i=(i_{1},\ldots,i_{d})\in I(\varepsilon)$.
\end{enumerate}

Using this partition,
we approximate the density $p(x; \theta,\nu)$ 
and the marginal density $g(M;\theta)$
as follows.
Let $H_{\theta}(x)=\theta^{\top}h(x)$ and let
\begin{align*}
    H^{\varepsilon}_{\theta}(x)&:=\sum_{i\in I(\varepsilon)}H_{\theta}(y_{i}^{\varepsilon})1_{D_{i}^{\varepsilon}}(x)
    \ \text{and}
    \\
    a^{\varepsilon}_{j}(x_{j};\theta,\nu)&:=
    \sum_{i_{j}\in [I_{j}(\varepsilon)]}a_{j}(y_{j,i_{j}}^{\varepsilon};\theta,\nu)1_{D_{j,i_{j}}^{\varepsilon}}(x_{j})
    \quad\text{for}\quad j=1,\ldots,d.
\end{align*}
Let $\psi^{\varepsilon}(\theta,\nu)$ and $p^{\varepsilon}(x;\theta,\nu)$ be given by
\begin{align*}
    \psi^{\varepsilon}(\theta,\nu)
    &=\log \int \exp\left(H^{\varepsilon}_{\theta}(x)-\sum_{j=1}^{d}a^{\varepsilon}_{j}(x_{j};\theta,\nu)\right)
    p_{0}(x)\mathrm{d}x \quad\text{and}
    \\
    p^{\varepsilon}(x;\theta,\nu)
    &=\exp\left(H^{\varepsilon}_{\theta}(x)-\sum_{j=1}^{d}a^{\varepsilon}_{j}(x_{j};\theta,\nu)-\psi^{\varepsilon}(\theta,\nu)
    \right)
    p_{0}(x),
\end{align*}
respectively.
Note that $p^{\varepsilon}(x;\theta,\nu)$ 
is a density in a minimum information dependence model but its marginal densities $\{r^{\varepsilon}_{j}(x_{j};\theta,\nu): j=1,\ldots,d\}$ are not necessarily the same as those of $p(x;\theta,\nu)$.
We denote
the corresponding marginal distribution of $M$ by $g^{\varepsilon}(M;\theta,\nu)$.

Here we show $\log g^{\varepsilon}(M;\theta,\nu)$
is a ($2L\varepsilon$)-approximation of $\log g(M;\theta,\nu)$
with $L:=(d+1)\xi L_{h}$.
Fix $j=1,\ldots,d$.
For any $x_{j}\in\mathcal{X}_{j}$, there exists $i_{j}\in [I_{j}(\varepsilon)]$ such that $d_{j}(x_{j},y_{j,i_{j}}^{\varepsilon})\le \varepsilon$.
Therefore, we have
\begin{align*}
    |a_{j}(x_{j};\theta,\nu)-a^{\varepsilon}_{j}(x_{j};\theta,\nu)|
    =|a_{j}(x_{j};\theta,\nu)-a_{j}(y_{j,i_{j}}^{\varepsilon};\theta,\nu)|
    \le \xi L_{h} d_{j}(x_{j},y_{j,i_{j}}^{\varepsilon})
    \le \xi L_{h} \varepsilon,
\end{align*}
where the first inequality follows from Lemma \ref{lem: Lipschitz}.
By a similar reasoning to the above, we have
\begin{align*}
    |H_{\theta}(x)-H^{\varepsilon}_{\theta}(x)|
    &\le 
    \sup_{\theta\in\Theta}\|\theta\|_{2}
    \|h(x)-h(y_{i}^{\varepsilon})\|_{2}
    \le 
    \xi L_{h}\varepsilon.
\end{align*}
Together with the identity
\begin{align*}
    \psi(\theta,\nu)-\psi^{\varepsilon}(\theta,\nu)
    &=\log \int e^{H_{\theta}(x)-H^{\varepsilon}_{\theta}(x)+H^{\varepsilon}_{\theta}(x)-\sum_{j=1}^{d}a_{j}+\sum_{j=1}^{d}a^{\varepsilon}_{j}-\sum_{j=1}^{d}a^{\varepsilon}_{j}}
    p_{0}\mathrm{d}x
    \\
    &
    \quad-
    \log \int e^{H^{\varepsilon}_{\theta}(x)-\sum_{j=1}^{d}a^{\varepsilon}_{j}(x_{j};\theta,\nu)}p_{0}\mathrm{d}x,
\end{align*}
these inequalities imply 
\begin{align*}
    |\psi(\theta,\nu)-\psi^{\varepsilon}(\theta,\nu)|
    \le L\varepsilon.
\end{align*}
Consequently, we have
\begin{align}
    &|\log p(x;\theta,\nu)- \log p^{\varepsilon}(x;\theta,\nu) |\nonumber\\
    &=\left|H_{\theta}(x)-H^{\varepsilon}_{\theta}(x)-\sum_{j=1}^{d}a_{j}(x_{j};\theta,\nu)+\sum_{j=1}^{d}a^{\varepsilon}_{j}(x_{j};\theta,\nu)-\psi(\theta,\nu)+\psi^{\varepsilon}(\theta,\nu)\right|
    \nonumber\\
    &\le 2L\varepsilon.
    \label{eq: difference between p and ptilde}
\end{align}
Thus, we obtain
\begin{align}
    \left|\frac{1}{n}\log \frac{g(M;\theta,\nu)}{g^{\varepsilon}(M;\theta,\nu)}\right|
    &=
    \left|\frac{1}{n}\log \frac{\sum_{\pi\in \mathbb{S}_n^d}
    \prod_{t=1}^{n}p(x(t);\theta,\nu) }{\sum_{\pi\in \mathbb{S}_n^d}
    \prod_{t=1}^{n}p^{\varepsilon}(x(t);\theta,\nu) }\right|
    \nonumber\\
    &\le \left| \frac{1}{n}\log \exp(2L\varepsilon n)\right|
    =2L\varepsilon.
    \label{eq: approximation of g}
\end{align}

\textbf{Second step: asymptotic expansion of the marginal likelihood ratio.}
We next give an asymptotic expansion of $\log g^{\varepsilon}(M;\theta_{0},\nu)/ g^{\varepsilon}(M;\theta,\nu)$ by using the following contingency table:
\begin{enumerate}
    \item For $i=(i_{j})_{j=1}^{d}\in I(\varepsilon)$, let $N(i):=|\{t\in [n]:x(t)\in\prod_{j=1}^{d}D^{\varepsilon}_{j,i_{j}}\}|$.
    \item Then, $N=(N(i))_{i\in I(\varepsilon)}$ is an $I_{1}(\varepsilon)\times I_{2}(\varepsilon)\times \cdots \times I_{d}(\varepsilon)$-frequency table.
    \item For $i=(i_{j})_{j=1}^{d}\in I(\varepsilon)$, the expected frequency of $N(i)$ is
    \[
    p^{*}(i):=\int_{D_{i}^{\varepsilon}}p(x;\theta_0, \nu_{0})\mathrm{d}x.
    \]
    \item For $j=1,\ldots,d$, we denote the marginal frequency of $i_{j}\in [I_{j}(\varepsilon)]$ by $N_{j}(i_{j})$.
    \item 
    For $j=1,\ldots,d$ and $i_{j}\in[I_{j}(\varepsilon)]$,
    the expected frequency of $N_{j}(i_{j})$ is 
    \[
    p^{*}_{j}(i_{j})
    :=\int_{D_{j,i_{j}}^{\varepsilon}} r_{j}(x_{j};\nu_{0})\mathrm{d}x_{j}
    .
    \]
    \item We denote by $\mathcal{T}(\varepsilon)$
the set of $I_{1}(\varepsilon)\times I_{2}(\varepsilon)\times \cdots \times I_{d}(\varepsilon)$-frequency tables that have the same marginal frequencies as $N$.
\end{enumerate}
Define the probability density
$p^{\varepsilon}_{\theta,\nu}$ on $I(\varepsilon)$ as
\begin{align*}
    p^{\varepsilon}_{\theta,\nu}(i):=\int_{D^{\varepsilon}_{i}}
    p^{\varepsilon}(x;\theta,\nu)
    \mathrm{d}x 
    \ \ \text{for}\ \  
    i\in I(\varepsilon).
\end{align*}
Note that we have
\begin{align*}
    p^{\varepsilon}_{\theta,\nu}(i)
    &=
    \exp\left\{H_{\theta}(y_{i}^{\varepsilon})-\sum_{j=1}^{d}
    a_{j}\left(y_{j,i_{j}}^{\varepsilon};\theta,\nu\right)-\psi^{\varepsilon}(\theta,\nu)\right\}
    \prod_{j=1}^{d}\int_{D^{\varepsilon}_{j,i_{j}}}r_{j}(x_{j};\nu)\mathrm{d}x_{j}
\end{align*}
and this forms a minimum information dependence model with the canonical statistics $h^{\varepsilon}:i\mapsto h(y_{i}^{\varepsilon})$. 
Then, observe that 
for any $x\in \mathcal{X}$,
there exists $i(x)\in I(\varepsilon)$ such that 
$x\in D_{i(x)}^{\varepsilon}$,
and
we have
\begin{align*}
    p^{\varepsilon}(x;\theta,\nu)
    &=\exp\left\{H_{\theta}(y_{i(x)}^{\varepsilon})-\sum_{j=1}^{d}
    a_{j}\left(y_{j,i_{j}(x)}^{\varepsilon};\theta,\nu\right)-\psi^{\varepsilon}(\theta,\nu)\right\}
    \prod_{j=1}^{d}r_{j}(x_{j};\nu)\\
    &=p^{\varepsilon}_{\theta,\nu}(i(x))
    \frac{\prod_{j=1}^{d}r_{j}(x_{j};\nu)}{\prod_{j=1}^{d}\int_{D^{\varepsilon}_{j,i_{j}(x)}}r_{j}(x_{j};\nu)\mathrm{d}x_{j}}.
\end{align*}
This implies
\begin{align*}
\prod_{t=1}^{n}p^{\varepsilon}(x(t);\theta,\nu)
&=\mathcal{R}_{0}(M)
\prod_{i\in I(\varepsilon)}p^{\varepsilon}_{\theta,\nu}(i)^{N(i)},
\end{align*} 
where $\mathcal{R}_{0}(M)$ is defined by
\begin{align*}
    \mathcal{R}_{0}(M;\nu)
    :=
    \prod_{t=1}^{n}
    \frac{\prod_{j=1}^{d}r_{j}(x_{j}(t);\nu)}{\prod_{j=1}^{d}\int_{D^{\varepsilon}_{j,i_{j}(x(t))}}r_{j}(x_{j};\nu)\mathrm{d}x_{j}}
\end{align*}
and depends only on $M$, $\nu$, and $\varepsilon$.
This implies 
\begin{align*}
    g^{\varepsilon}(M;\theta,\nu)
    &=\mathcal{R}_{0}(M;\nu)
    \sum_{\pi\in\mathbb{S}^{d}_{n}}\prod_{i\in I(\varepsilon)}
p^{\varepsilon}_{\theta,\nu}(i)^{N(i)}
    \\
    &=\mathcal{R}_{0}(M;\nu)\sum_{\tilde{N}\in\mathcal{T}(\varepsilon)}C(\tilde{N})\prod_{i\in I(\varepsilon) }p^{\varepsilon}_{\theta,\nu}(i)^{\tilde{N}(i)},
\end{align*}
where $C(\tilde{N})$ is the number of permutations such that the frequency table is $\tilde{N}$.
Note that we have \[C(\tilde{N})=n!\prod_{j=1}^{d}\prod_{i_{j}\in [I_{j}(\varepsilon)]}\tilde{N}_{j}(i_{j})!\Bigg/\prod_{i\in I(\varepsilon)}\tilde{N}(i)!.\]
This yields
\begin{align}
    g^{\varepsilon}(M;\theta,\nu)
    =
    \mathcal{R}_{0}(M;\nu)\sum_{\tilde{N}\in\mathcal{T}(\varepsilon)}\frac{n!\prod_{j=1}^{d}\prod_{i_{j}\in [I_{j}(\varepsilon)]}\tilde{N}_{j}(i_{j})!}{\prod_{i\in I(\varepsilon)}\tilde{N}(i)!}\prod_{i\in I(\varepsilon)}p^{\varepsilon}_{\theta,\nu}(i)^{\tilde{N}(i)}.
    \label{eq: hypergeom and tildeg}
\end{align}

We then expand $g^{\varepsilon}(M;\theta,\nu)$ by using the finite sample evaluation of the Stirling approximation (Lemma \ref{lem: Stirling}).
For $\tilde{N}\in\mathcal{T}(\varepsilon)$ with $\tilde{N}(i)\ne 0$ for all $i\in I$,
Lemma \ref{lem: Stirling} gives 
\begin{align}
    &\prod_{j,i_{j}}\tilde{N}_{j}(i_{j})!
    \le e^{\sum_{j,i_{j}}\tilde{N}_{j}(i_{j})\log \tilde{N}_{j}(i_{j})-nd}\prod_{j,i_{j}}\left(2\pi\tilde{N}_{j}(i_{j})\right)^{\frac{1}{2}}\left(1+\frac{1}{12\tilde{N}_{j}(i_{j})}\right),
    \label{eq: Ntildejij}
\end{align}
where the index $j$ ranges from $1$ to $d$ and the index $i_{j}$ ranges from 1 to $I_{j}(\varepsilon)$. 
Also, we have
\begin{align}
    \prod_{i\in I(\varepsilon)}\tilde{N}(i)!
    \ge
    e^{\sum_{i\in I(\varepsilon)}\tilde{N}(i)\log \tilde{N}(i)-n}\prod_{i\in I(\varepsilon)}\left(2\pi\tilde{N}(i)\right)^{\frac{1}{2}}\left(1+\frac{1}{12\tilde{N}(i)+1}\right).
    \label{eq: Ntildei}
\end{align}
Let $D(p,q)$ be the Kullback--Leibler divergence from a probability density $p$ on $I(\varepsilon)$ to a probability density $q$ on $I(\varepsilon)$.
Then, we have
\begin{align}
    \prod_{i\in I(\varepsilon)}p^{\varepsilon}_{\theta,\nu}(i)^{\tilde{N}(i)}
    =
    e^{-nD(\hat{p},p^{\varepsilon}_{\theta,\nu})-n\log n+\sum_{i\in I(\varepsilon)}\tilde{N}(i)\log\tilde{N}(i)
    }.
    \label{eq: expression of pep}
\end{align}
These three inequalities (\ref{eq: Ntildejij})--(\ref{eq: expression of pep}), together with the trivial inequality $0^{0}\mathrm{e}^{-0}\le 0! \le 0^{0}\mathrm{e}^{-0}$, imply
\begin{align*}
    g^{\varepsilon}(M;\theta,\nu)\le 
    n!
    e^{nd\log n-nd+n}
    \mathcal{R}_{0}(M;\nu)
    \sum_{\tilde{N}\in\mathcal{T}(\varepsilon)}
    e^{nQ(\hat{p},p^{\varepsilon}_{\theta,\nu})}f_{1}(\tilde{N}),
\end{align*}
where $Q(p,q)$ for two probability densities $p$ and $q$ on $I(\varepsilon)$, with $p$ having marginals $r_{j}$, is defined by
\begin{align*}
    Q(p,q):=-D(p,q)+\sum_{j=1}^{d}\sum_{i_{j}\in I_{j}(\varepsilon)}r_{j}(i_{j})\log r_{j}(i_{j})
\end{align*}
taking $0\log 0 = 0$,
and $f_{1}(\tilde{N})$ is defined by
\begin{align*}
    f_{1}(\tilde{N}):=\frac{\prod_{j\in [d]\,,\,i_{j}\in[I_{j}(\varepsilon)]:\tilde{N}_{j}(i_{j})>0}\left(2\pi\tilde{N}_{j}(i_{j})\right)^{1/2}\left(1+\frac{1}{12\tilde{N}_{j}(i_{j})}\right)}
    {\prod_{i\in I(\varepsilon):\tilde{N}(i)>0}\left(2\pi\tilde{N}(i)\right)^{1/2}\left(1+\frac{1}{12\tilde{N}(i)+1}\right)}.
\end{align*}
Since there exists an absolute constant $C_{1}>0$ such that
\begin{align*}
f_{1}(\tilde{N})
\le (C_{1}n)^{\sum_{j=1}^{d}I_{j}(\varepsilon)/2},
\end{align*}
we have
\begin{align}
    g^{\varepsilon}(M;\theta,\nu)\le
    \mathcal{R}(M;\nu)
    (C_{1}n)^{
    \sum_{j=1}^{d}I_{j}(\varepsilon)/2}
    \sum_{\tilde{N}\in\mathcal{T}(\varepsilon)}
    e^{nQ(\hat{p},p^{\varepsilon}_{\theta,\nu})},
    \label{eq: upper bound of gtilde}
\end{align}
where $\mathcal{R}(M;\nu)$ is defined by
\[
\mathcal{R}(M;\nu):=
n! e^{nd\log n-nd+n}
    \mathcal{R}_{0}(M;\nu).
\]
Likewise, we have, for an absolute constant $C_{2}>0$ such that
\begin{align}
    g^{\varepsilon}(M;\theta,\nu)\ge
    \mathcal{R}(M;\nu)
    (C_{2}n)^{-\sum_{j=1}^{d}I_{j}(\varepsilon)/2}\sum_{\tilde{N}\in\mathcal{T}(\varepsilon)}
    e^{nQ(\hat{p},p^{\varepsilon}_{\theta,\nu})}.
    \label{eq: lower bound of gtilde}
\end{align}

We lastly evaluate 
$(1/n)\log g^{\varepsilon}(M;\theta_{0},\nu_{0})/g^{\varepsilon}(M;\theta,\nu)$ by applying the inequality of the log-sum-exp function (Lemma \ref{lem: log sum exp}).
Lemma \ref{lem: log sum exp} gives
\begin{align*}
    \max_{\tilde{N}\in\mathcal{T}(\varepsilon)}Q(\hat{p},p^{\varepsilon}_{\theta,\nu})
    \le
    \frac{1}{n}\log \sum_{\tilde{N}\in\mathcal{T}(\varepsilon)}
    e^{nQ(\hat{p},p^{\varepsilon}_{\theta,\nu})}
    \le
    \max_{\tilde{N}\in\mathcal{T}(\varepsilon)}Q(\hat{p},p^{\varepsilon}_{\theta,\nu})
    +\frac{\log |\mathcal{T}(\varepsilon)|}{n}.
\end{align*}
Together with (\ref{eq: upper bound of gtilde}) and (\ref{eq: lower bound of gtilde}), this implies 
\begin{align}
    &\frac{1}{n}\log \frac{g^{\varepsilon}(M;\theta_{0},\nu_{0})}{g^{\varepsilon}(M;\theta,\nu)}
    \nonumber\\
    &\le
    \max_{\tilde{N}\in\mathcal{T}(\varepsilon)}Q(\hat{p},p^{\varepsilon}_{\theta_{0},\nu_{0}})
    -
    \max_{\tilde{N}\in\mathcal{T}(\varepsilon)}Q(\hat{p},p^{\varepsilon}_{\theta,\nu})
    +\frac{1}{n}\log\frac{\mathcal{R}_{0}(M;\nu_{0})}{\mathcal{R}_{0}(M;\nu)}
    +R_{1}(n,\varepsilon)
    \nonumber\\
    &=\inf_{p\in\mathcal{M}(\{\hat{p}_{j}\})}D(p,p^{\varepsilon}_{\theta,\nu})
    -\inf_{p\in\mathcal{M}(\{\hat{p}_{j}\})}D(p,p^{\varepsilon}_{\theta_{0},\nu_{0}})
    +\frac{1}{n}\log\frac{\mathcal{R}_{0}(M;\nu_{0})}{\mathcal{R}_{0}(M;\nu)}
    +R_{1}(n,\varepsilon),
    \label{eq: upper bound of marginal likelihood ratio}
\end{align}
where $\mathcal{M}(\{\hat{p}_{j}\})$ is the set of probability densities on $I(\varepsilon)$ with marginal densities $(\hat{p}_{j}=N_{j}/n)_{j=1}^{d}$
and $R_{1}(n,\varepsilon)$ is defined by
\begin{align*}
    R_{1}(n,\varepsilon)
    :=
    \frac{\log |\mathcal{T}(\varepsilon)|}{n}
    +\frac{\sum_{j=1}^{d}I_{j}(\varepsilon)}{2}\frac{\log (C_{1}C_{2}n^{2})}{n}
    .
\end{align*}
Also, we get
\begin{equation}
    \begin{split}
    \frac{1}{n}\log \frac{g^{\varepsilon}(M;\theta_{0},\nu_{0})}{g^{\varepsilon}(M;\theta,\nu)}
    &\ge
    \inf_{p\in\mathcal{M}(\{\hat{p}_{j}\})}D(p,p^{\varepsilon}_{\theta,\nu})
    -\inf_{p\in\mathcal{M}(\{\hat{p}_{j}\})}D(p,p^{\varepsilon}_{\theta_{0},\nu_{0}})\\
    &\quad\quad
    +\frac{1}{n}\log\frac{\mathcal{R}_{0}(M;\nu_{0})}{\mathcal{R}_{0}(M;\nu)}
    -R_{1}(n,\varepsilon).
    \end{split}
\label{eq: lower bound of marginal likelihood ratio}
\end{equation}

\textbf{Third step: evaluation of $\inf_{p\in\mathcal{M}(\{\hat{p}_{j}\})}D(p,p^{\varepsilon}_{\theta,\nu})$.}
We begin by measuring the $\ell_{1}$ distances between the following marginal densities on $[I_{j}(\varepsilon)]$s:
\begin{enumerate}
    \item $\hat{p}_{j}$: the marginal density of $\hat{p}$ that only depends on $N_{j}$;
    \item $p^{*}_{j}$: the marginal density of $p^{*}$;
    \item $p^{\varepsilon}_{j,\theta,\nu}$: the marginal density of $p^{\varepsilon}_{\theta,\nu}$ given by
    \[
    p^{\varepsilon}_{j,\theta,\nu}(i_{j}):=
    \sum_{
    k\neq j,
    i_{k}\in[I_{k}(\varepsilon)]
    }
    p^{\varepsilon}_{\theta,\nu}(i);
    \]
    \item $p^{\varepsilon}_{j,\theta_{0},\nu_{0}}$: the marginal density of $p^{\varepsilon}_{\theta_{0},\nu_{0}}$.
\end{enumerate}
Let $\delta>0$.
From the union bound and Lemma \ref{lem: L1 and KL deviation inequality},
we have
\begin{align}
    &\Pr\left(
    \max_{j=1,\ldots,d}\|\hat{p}_{j}-p^{*}_{j}\|_{1}\le \delta
    \ ,\ 
    \max_{j=1,\ldots,d}D(\hat{p}_{j},p^{*}_{j}) \le \delta^{2}
    \right)
    \nonumber\\
    &\ge 
    1-\sum_{j=1}^{d}\Pr\left(
    \|\hat{p}_{j}-p^{*}_{j}\|_{1}> \delta
    \right)
    -\sum_{j=1}^{d}\Pr\left(
    D(\hat{p}_{j},p^{*}_{j})> \delta^{2}
    \right)
    \nonumber\\
    &\ge 1-4\sum_{j=1}^{d}\exp\left\{-\frac{n\delta^{2}}{2I_{j}(\varepsilon)}\right\}.
    \label{eq: bound on l1 between phat and pstar}
\end{align}
Note that we have
\begin{align*}
    p^{*}_{j}(i_{j})
    =\int_{\mathcal{X}_{1}\times \cdots\times D_{j,i_{j}}^{\varepsilon}\times\cdots\times \mathcal{X}_{d}}p(x;\theta_{0},\nu_{0})\mathrm{d}x
    =
    \sum_{
    k\neq j,
    i_{k}\in[I_{k}(\varepsilon)]
    }\int_{D_{i}^{\varepsilon}}p(x;\theta_{0},\nu_{0})\mathrm{d}x.
\end{align*}
This gives
\begin{align}
\max_{j=1,\ldots,d}
    \left|\log \frac{p^{*}_{j}(i_{j})}{p^{\varepsilon}_{j,\theta_{0},\nu_{0}}(i_{j})} \right|
    &\le 2L\varepsilon
    \label{eq: bound on log between pstar and peptheta}
\end{align}
and
\begin{align*}
    |p^{*}_{j}(i_{j})-p^{\varepsilon}_{j,\theta_{0},\nu_{0}}(i_{j})|
    &\le
    p^{*}_{j}(i_{j})\max\left\{1-\mathrm{e}^{-2L\varepsilon},\mathrm{e}^{2L\varepsilon}-1\right\}.
\end{align*}
The latter inequality, together with the inequalities $\exp(x)-1<2x$ for $0<x<1$,
and $1-\exp(-x)<x$ for $0<x<1$,
gives
\begin{align}
\max_{j=1,\ldots,d}\|p^{*}_{j}-p^{\varepsilon}_{j,\theta_{0},\nu_{0}}\|_{1}
    &\le 4L\varepsilon,
    \label{eq: bound on l1 between pstar and peptheta0}
\end{align}
provided that $4L\varepsilon<1$.
Further,
letting $p^{*}_{j,\nu}(i_{j}):=\int_{D^{\varepsilon}_{j,i_{j}}}r_{j}(x_{j};\nu)\mathrm{d}x_{j}$,
a similar argument gives
\begin{align}
\max_{j=1,\ldots,d}\|p^{*}_{j}-p^{\varepsilon}_{j,\theta,\nu}\|_{1}
    &\le 4L\varepsilon+
    \max_{j=1,\ldots,d}
    \sum_{i_{j}\in I_{j}(\varepsilon)}
    \left|\int_{ D_{j,i_{j}}^{\varepsilon}}
    \{r_{j}(x_{j};\nu_{0})-r_{j}(x_{j};\nu)\}
    \mathrm{d}x_{j}
    \right|
    \nonumber\\
    &
    \le 4L\varepsilon+
    \max_{j=1,\ldots,d}
    \sum_{i_{j}\in I_{j}(\varepsilon)}
    \int_{ D_{j,i_{j}}^{\varepsilon}}
    |r_{j}(x_{j};\nu_{0})-r_{j}(x_{j};\nu)|
    \mathrm{d}x_{j}
    \nonumber\\
    &= 4L\varepsilon+
    \max_{j=1,\ldots,d}
    \|r_{j}(\cdot;\nu_{0})-r_{j}(\cdot;\nu)\|_{1},
    \label{eq: bound on l1 between pstar and peptheta}
\end{align}
provided that $4L\varepsilon<1$.

We next employ Lemma \ref{lem: Stability} to evaluate $\inf_{p\in\mathcal{M}(\{\hat{p}_{j}\})}D(p,p^{\varepsilon}_{\theta,\nu})$.
First, consider 
\[\inf_{p\in\mathcal{M}(\{\hat{p}_{j}\})}D(p,p^{\varepsilon}_{\theta_{0},\nu_{0}}).\]
Let $\check{p}_{\theta_{0}}$ be the density on $I(\varepsilon)$ maximizing 
\begin{align*}
    \sum_{i\in I(\varepsilon)}
    H^{\varepsilon}_{\theta_{0}}(y^{\varepsilon}_{i})p(i)-\sum_{i=(i_{1},\ldots,i_{d})\in I(\varepsilon)}p(i)\log \frac{p(i)}{\prod_{j=1}^{d}\hat{p}_{j}(i_{j})}
\end{align*}
such that $p\in \mathcal{M}(\{\hat{p}_{j}\})$.
From Lemma \ref{lem: relation to EOT}, we have
\begin{align}
    \inf_{p\in\mathcal{M}(\{\hat{p}_{j}\})}D(p,p^{\varepsilon}_{\theta_{0},\nu_{0}})
    &=D(\check{p}_{\theta_{0}},p^{\varepsilon}_{\theta_{0},\nu_{0}}).
    \label{eq: inf D}
\end{align}
Applying Lemma \ref{lem: Stability} to equation (\ref{eq: inf D}) gives 
\begin{align*}
    D(\check{p}_{\theta_{0}},p^{\varepsilon}_{\theta_{0},\nu_{0}})
    &\le 
    (4d+2)\max_{i\in I(\varepsilon)}
    |H^{\varepsilon}_{\theta_{0}}(y_{i}^{\varepsilon})|
    \sum_{j=1}^{d}\|\hat{p}_{j}-p^{\varepsilon}_{j,\theta_{0},\nu_{0}}\|_{1}
    +\sum_{j=1}^{d}D(\hat{p}_{j},p^{\varepsilon}_{j,\theta_{0},\nu_{0}})
    \nonumber\\
    &\le
    (4d+2)d\max_{i\in I(\varepsilon)}
    |H^{\varepsilon}_{\theta_{0}}(i)|(\delta + 4L\varepsilon)
    +d(\delta^{2}+2L\varepsilon)
\end{align*}
with probability at least 
$1-4\sum_{j=1}^{d}\exp\{-n\delta^{2}/(2I_{j}(\varepsilon))\}$,
provided that $4L\varepsilon<1$.
Since for the diameter $\mathrm{diam}(\mathcal{X})$ of $\mathcal{X}$, we have
\begin{align*}
    \max_{i\in I(\varepsilon)}
    |H^{\varepsilon}_{\theta_{0}}(y_{i}^{\varepsilon})|
    \le  \xi\{L\varepsilon + L_{h} \mathrm{diam}(\mathcal{X})\},
\end{align*}
we obtain
\begin{align}
    \Pr\left(\inf_{p\in\mathcal{M}(\{\hat{p}_{j}\})}D(p,p^{\varepsilon}_{\theta_{0},\nu_{0}})
    \le
    U
    (\delta + \delta^{2}+6L\varepsilon)
    \right)
    \ge 
    1-4\sum_{j=1}^{d}\exp\left\{-\frac{n\delta^{2}}{2I_{j}(\varepsilon)}\right\}
    \label{eq: upper bound of inf D}
\end{align}
with 
\[U:=(4d+2)d\xi
    (1/4+L_{h}\mathrm{diam}(\mathcal{X}))+d,
\]
provided that $4L\varepsilon<1$.
Second, consider $\inf_{p\in\mathcal{M}(\{\hat{p}_{j}\})}D(p,p^{\varepsilon}_{\theta,\nu})$. A similar argument gives 
\begin{align*}
    D(\check{p}_{\theta},p^{\varepsilon}_{\theta,\nu})
    &\le 
    (4d+2)\xi\{L\varepsilon + L_{h} \mathrm{diam}(\mathcal{X})\}
    \sum_{j=1}^{d}\|\hat{p}_{j}-p^{\varepsilon}_{j,\theta,\nu}\|_{1}
    +\sum_{j=1}^{d}D(\hat{p}_{j},p^{\varepsilon}_{j,\theta,\nu}).
\end{align*}
Here we have
\begin{align*}
    \sum_{j=1}^{d}\|\hat{p}_{j}-p^{\varepsilon}_{j,\theta,\nu}\|_{1}
    \le 4Ld\varepsilon+
    d\delta
    +\sum_{j=1}^{d}\|r_{j}(\cdot;\nu_{0})-r_{j}(\cdot;\nu)\|_{1}
\end{align*}
and we have
\begin{align*}
    \sum_{j=1}^{d}D(\hat{p}_{j},p^{\varepsilon}_{j,\theta,\nu})
    &\le
    2Ld\varepsilon+\sum_{j=1}^{d}
    D(\hat{p}_{j},p_{j,\nu})
    \\
    &\le 2Ld\varepsilon+\sum_{j=1}^{d}D(p^{*}_{j},p_{j,\nu})
    +d\delta^{2}
\end{align*}
with probability at least $1-4\sum_{j=1}^{d}\exp\{-n\delta^{2}/(2I_{j}(\varepsilon))\}$,
where 
\[p_{j,\nu}(i_{j}):=\int_{\mathcal{X}_{1}\times\cdots\times D^{\varepsilon}_{j,i_{j}}\times \cdots \times \mathcal{X}_{d}} p(x;\theta,\nu)\mathrm{d}x\]
and Lemma \ref{lem: L1 and KL deviation inequality} is used.
Since the data processing inequality gives
\begin{align*}
    D(p^{*}_{j},p_{j,\nu})\le D(r_{j}(\cdot;\nu_{0}),r_{j}(\cdot;\nu)),
\end{align*}
we obtain
\begin{align}
\begin{aligned}
    &\Pr\left(\inf_{p\in\mathcal{M}(\{\hat{p}_{j}\})}D(p,p^{\varepsilon}_{\theta,\nu})
    \le
    U
    \tilde{D}(\nu_{0},\nu)
    +
    U
    (\delta + \delta^{2}+6L\varepsilon)
    \right)\\
    &\ge 
    1-4\sum_{j=1}^{d}\exp\left\{-\frac{n\delta^{2}}{2I_{j}(\varepsilon)}\right\},
    \label{eq: upper bound of inf D theta nu}
\end{aligned}
\end{align}
provided that $4L\varepsilon<1$.

\textbf{Final step: evaluation of remaining terms.}
We finally combine inequalities 
(\ref{eq: upper bound of marginal likelihood ratio}), 
(\ref{eq: lower bound of marginal likelihood ratio}), and
(\ref{eq: upper bound of inf D})
to complete the proof.
Assume that $4L\varepsilon<1$.
Inequalities (\ref{eq: upper bound of marginal likelihood ratio}), 
(\ref{eq: lower bound of marginal likelihood ratio}), and
(\ref{eq: upper bound of inf D}) yield
\begin{align*}
    \left|\frac{1}{n}\log \frac{g^{\varepsilon}(M;\theta_{0},\nu_{0})}{g^{\varepsilon}(M;\theta,\nu)}\right|
    \le 
    \frac{1}{n}\log \frac{\mathcal{R}_{0}(M;\nu_{0})}{\mathcal{R}_{0}(M;\nu)}
    +
    R_{1}(n,\varepsilon) + U
    \left\{
    \tilde{D}(\nu_{0},\nu)+
    \delta + \delta^{2}+6L\varepsilon
    \right\}
\end{align*}
with probability at least 
$1-4\sum_{j=1}^{d}\exp\{-n\delta^{2}/(2I_{j}(\varepsilon))\}$.

Consider an upper bound of $(1/n)\log \mathcal{R}_{0}(M;\nu_{0})/\mathcal{R}_{0}(M;\nu)$.
Observe that, for $t=1,\ldots,n$, 
\begin{align*}
    \frac{\int_{D^{\varepsilon}_{j,i_{j}(x(t))}}
    r_{j}(\tilde{x}_{j};\nu)\mathrm{d}\tilde{x}_{j}
    }
    {\int_{D^{\varepsilon}_{j,i_{j}(x(t))}}
    r_{j}(\tilde{x}_{j};\nu_{0})\mathrm{d}\tilde{x}_{j}
    }
    &=
    \frac{\int_{D^{\varepsilon}_{j,i_{j}(x(t))}}
    r_{j}(x_{j}(t);\nu)
    \frac{r_{j}(\tilde{x}(t);\nu)}{r_{j}(x_{j}(t);\nu)}
    \mathrm{d}\tilde{x}_{j}
    }
    {\int_{D^{\varepsilon}_{j,i_{j}(x(t))}}
    r_{j}(x_{j}(t);\nu_{0})
    \frac{r_{j}(\tilde{x}(t);\nu_{0})}{r_{j}(x_{j}(t);\nu_{0})}
    \mathrm{d}\tilde{x}_{j}
    }\\
    &\le e^{2L_{r_{j}}\varepsilon}
    \frac{
    r_{j}(x_{j}(t);\nu)
    }
    {
    r_{j}(x_{j}(t);\nu_{0})
    }.
\end{align*}
Under the assumption that for $i=1,\ldots,d$, the marginal density $r_{i}(x_{i};\nu)$ is log-Lipschitz continuous with Lipschitz constant $L_{r_{i}}$, where $\sup_{\nu\in \mathrm{N}}L_{r_{i},\nu}<\infty$,
this gives
\begin{align*}
    \frac{1}{n}\log \frac{\mathcal{R}_{0}(M;\nu_{0})}{\mathcal{R}_{0}(M;\nu)}
    \le 2\sum_{j=1}^{d}L_{r_{j}}\varepsilon.
\end{align*}

Consider an upper bound of $R_{1}(n,\varepsilon)$.
The inequality $I_{j}(\varepsilon)<\kappa\varepsilon^{-\alpha}$ gives
\begin{align*}
    \frac{\log |\mathcal{T}(\varepsilon)|}{n}
    &\le \kappa^{d}\varepsilon^{-d\alpha}
    \frac{\log (n+1)}{n}
    \ \text{and}\\
    \frac{\sum_{j=1}^{d}I_{j}(\varepsilon)}{2}\frac{\log (C_{1}C_{2}n^{2})}{n}
    &\le
    C_{3}\varepsilon^{-\alpha}\frac{\log n}{n},
\end{align*}
where 
$C_{3}$ is a positive constant independent of $n$ and $\varepsilon$.
Then, 
provided that $\varepsilon^{-1}$ is upper-bounded by a polynomial of $n$,
we have
\begin{align*}
    \left|\frac{1}{n}\log \frac{g^{\varepsilon}(M;\theta_{0},\nu_{0})}{g^{\varepsilon}(M;\theta,\nu)}\right|
    \le
    C_{4}\varepsilon^{-d\alpha}\frac{\log n}{n}
    +
    U\tilde{D}(\nu_{0},\nu)
    +
    \left(6UL+2\sum_{j=1}^{d}L_{r_{j}}\right)\varepsilon+
    U(\delta+\delta^{2})
\end{align*}
with probability at least $1-4d\exp\{-n\varepsilon^{\alpha}\delta^{2}/(2\kappa)\}$,
where $C_{4}$ is a positive constant independent of $n$ and $\varepsilon$.
Setting $\varepsilon=(\log n / n)^{1/(d\alpha+1)}$
gives
\begin{align*}
    \Ep\left[\left|\frac{1}{n}\log \frac{g^{\varepsilon}(M;\theta_{0},\nu_{0})}{g^{\varepsilon}(M;\theta,\nu)}\right|\right]
    &=
    \int_{0}^{\infty} \Pr\left(
    \left|\frac{1}{n}\log \frac{g^{\varepsilon}(M;\theta_{0},\nu_{0})}{g^{\varepsilon}(M;\theta,\nu)}\right|> \tilde{\delta}
    \right)\mathrm{d}\tilde{\delta}\\
    &\le
    C_5
    \left[\tilde{D}(\nu_{0},\nu)
    + \varepsilon_{n}
    \right],
\end{align*}
where $C_{5}$ is a positive constant independent of $n$.
This completes the proof.

\end{proof}

\subsection{Proof of Corollary~\ref{cor:clemle-negligible}}
\label{subsection:clemle-proof}

\begin{proof}

We employ Corollary 3.2.3 (i) of \cite{vanderVaartandWellner} after checking that all conditions therein are satisfied.
Let $\hat{\theta}^{(m)}$ be MLE of $\theta$ given $\nu=\nu_{0}$.
Let $\mathbb{M}_{n}(\theta):=\sum_{t=1}^{n}\log p(x(t);\theta,\nu_{0})/n$
and
let $\mathbb{M}(\theta):=\Ep[\mathbb{M}_{n}(\theta)]$, where $\Ep$ is the expectation with respect to $p(x;\theta_{0},\nu_{0})$.
We denote by $\Ep_{n}$ the expectation with respect to the empirical distribution.

We first show 
\begin{align*}
    \mathbb{M}_{n}(\hat{\theta})\ge \sup_{\theta\in\Theta}\mathbb{M}_{n}(\theta)-o_{P}(1).
\end{align*}
Let $r_{n}(\theta)$ be defined as
\begin{align*}
r_{n}(\theta):=
 \frac{1}{n}
    \left\{
    \log \frac{\prod_{t=1}^{n}p(x(t);\theta,\nu_{0})}{\prod_{t=1}^{n}p(x(t);\theta_{0},\nu_{0})}
    -\log \frac{f(\pi\mid M;\theta,\nu_{0})}{f(\pi\mid M;\theta_{0},\nu_{0})}
    \right\}.
 \end{align*}
From the definition of MLE,
we have
\begin{align*}
    \frac{1}{n}
    \log \frac{\prod_{t=1}^{n}p(x(t);\hat{\theta}^{(m)},\nu_{0})}{\prod_{t=1}^{n}p(x(t);\theta_{0},\nu_{0})}
    \ge 
    \frac{1}{n}
    \log \frac{\prod_{t=1}^{n}p(x(t);\hat{\theta},\nu_{0})}{\prod_{t=1}^{n}p(x(t);\theta_{0},\nu_{0})}.
\end{align*}
From the definition of CLE,
we have
\begin{align*}
    \frac{1}{n}
    \log \frac{\prod_{t=1}^{n}p(x(t);\hat{\theta}^{(m)},
    \nu_{0})}{\prod_{t=1}^{n}p(x(t);\theta_{0},\nu_{0})}
    &=
    \frac{1}{n}\log \frac{f(\pi\mid M; \hat{\theta}^{(m)})}{f(\pi\mid M; \theta_{0})}
    +r_{n}(\hat{\theta}^{(m)})
    \\
    &\le 
    \frac{1}{n}\log \frac{f(\pi\mid M; \hat{\theta})}{f(\pi\mid M; \theta_{0})}
    +r_{n}(\hat{\theta}^{(m)})
    \\
    &=
    \frac{1}{n}
    \log \frac{\prod_{t=1}^{n}p(x(t);\hat{\theta},\nu_{0})}{\prod_{t=1}^{n}p(x(t);\theta_{0},\nu_{0})}
    -r_{n}(\hat{\theta})
    +r_{n}(\hat{\theta}^{(m)}).
\end{align*}
Combining these yields
\begin{align*}
    \left| 
    \frac{1}{n}
    \log \frac{\prod_{t=1}^{n}p(x(t);\hat{\theta},\nu_{0})}{\prod_{t=1}^{n}p(x(t);\theta_{0},\nu_{0})}
    -\frac{1}{n}
    \log \frac{\prod_{t=1}^{n}p(x(t);\hat{\theta}^{(m)},\nu_{0})}{\prod_{t=1}^{n}p(x(t);\theta_{0},\nu_{0})}
    \right|
    \le 2\sup_{\theta\in\Theta}r_{n}(\theta).
\end{align*}
This, together with Theorem \ref{theorem:negligible},
gives
\begin{align*}
    \left|\frac{1}{n}\log \frac{\prod_{t=1}^{n} p(x(t);\hat{\theta}^{(m)},\nu_{0})}
    {\prod_{t=1}^{n} p(x(t);\hat{\theta},\nu_{0})}
    \right|\to 0 \,\,\text{in probability},
\end{align*}
which implies
\begin{align*}
    \mathbb{M}_{n}(\hat{\theta})\ge \sup_{\theta\in\Theta}\mathbb{M}_{n}(\theta)-o_{P}(1).
\end{align*}
%which, together with Theorem \ref{theorem:negligible}, completes the proof.

Second, we show that $\sup_{\theta\in\Theta}|\mathbb{M}_{n}(\theta)-\mathbb{M}(\theta)|\to 0$ in probability.
Observe that
\begin{align*}
    &\mathbb{M}_{n}(\theta)-\mathbb{M}(\theta)\\
    &=
    \underbrace{
    \left\{\Ep_{n}\theta^{\top}h(X)-\Ep[\theta^{\top}h(X)]\right\}}_{=:T_{1}}
    -
    \underbrace{
    \left\{\Ep_{n}\sum_{t=1}^{d}a_{i}(X_{i};\theta,\nu_{0})
    -
    \Ep \sum_{t=1}^{d}a_{i}(X_{i};\theta,\nu_{0})\right\}
    }_{=:T_{2}}.
\end{align*}
Since $h$ is bounded, $\sup_{\theta\in\Theta}|T_{1}|\to 0$ in probability.
From Assumption \ref{assumption: cor 4} (1), we get $\sup_{\theta\in\Theta}|T_{2}|\to 0$ in probability.
Thus, we obtain
$\sup_{\theta\in\Theta}|\mathbb{M}_{n}(\theta)-\mathbb{M}(\theta)|\to 0$ in probability.

Finally, 
we shall show
\begin{align}
\Ep[\log p(X;\theta_{0},\nu_{0})]>\Ep[\log p(X;\theta,\nu_{0})]\quad\text{ for }\quad
\theta\ne\theta_{0}.
\label{eq: unique maximizer}
\end{align}
Consider 
$\theta^{*}\in\Theta$ satisfying 
$\{\Ep[\log p(X;\theta_{0},\nu_{0})]-\Ep[\log p(X;\theta^{*},\nu_{0})]\}=0$.
The Jensen inequality implies 
that for such  $\theta^{*}$, 
$\{\log p(x;\theta_{0},\nu_{0})-\log p(x;\theta^{*},\nu_{0})\}=0$ , $x$-a.e..
However,
since $h(x)$ is linearly independent modulo additive functions
and since we have
\begin{align*}
&\log p(x;\theta_{0},\nu_{0})-\log p(x;\theta,\nu_{0})\\
&=(\theta_{0}-\theta)^{\top}h(x)
-(\psi(\theta_{0},\nu_{0})-\psi(\theta,\nu_{0}))
-\sum_{i=1}^{d}\{a_{i}(x_{i};\theta_{0},\nu_{0})-a_{i}(x_{i};\theta,\nu_{0})\},
\end{align*}
such $\theta^{*}$ is equal to $\theta_{0}$, which shows (\ref{eq: unique maximizer}).
So, the maximizer of $\mathbb{M}(\theta)$ is uniquely determined.
Then, applying Corollary 3.2.3 (i) of \cite{vanderVaartandWellner}, we obtain $\hat{\theta}\to\theta_{0}$ in probability, which completes the proof.
\end{proof}

\subsection{Proof of Theorem \ref{theorem:consistencyofPL}}
\label{subsection: proof of theorem:consistencyofPL}

\begin{proof}

The proof consists of three steps:
re-writing the pseudo likelihood as a $U$-statistics (in particular, re-writing its maximizer as an $M_{m=2}$-estimator; \cite{BoseChatterjee_2018}),
deriving the Fisher consistency of the pseudo likelihood,
showing the consistency of the PLE.

\textbf{First step: the pseudo likelihood as a $U$-statistics.}
Observe that 
the log pseudo likelihood normalized by $n(n-1)/2$ is written in a form of $U$-statistics as
\begin{align}
\mathrm{PL}(\theta)&:=
\frac{2}{n(n-1)}\sum_{1\le s<t\le n}\left(
\sum_{i=1}^{d}\log 
\frac{1}
{
1+e^{\theta^{\top}(h_{*}(\pi\circ \tau^{i}_{st})-h_{*}(\pi))}
}
\right)
\nonumber\\
&=-\frac{1}{n(n-1)}\sum_{1\le s\ne t\le n}
\left(\sum_{i=1}^{d}
\log \left\{1+e^{\theta^{\top} u^{i}(X(s),X(t))}\right\}
\right),
\end{align}
where, for $x,\tilde{x}\in \mathcal{X}$,
\[u^{i}(x,\tilde{x}):=
h(x^{(i)})+h(\tilde{x}^{(i)})-h(x)-h(\tilde{x})
\] with $x^{(i)}=(x_{1},\ldots,\tilde{x}_{i},\ldots,x_{d})$,
and
$\tilde{x}^{(i)}=(\tilde{x}_{1},\ldots,x_{i},\ldots,\tilde{x}_{d})$
and we used the symmetry of $u^{i}(x,\tilde{x})$.
From this, we can employ properties of $U$-statistics to prove the consistency of the pseudo likelihood estimator.

\textbf{Second step: the Fisher consistency of the pseudo likelihood.}
Next, we shall see that the pseudo likelihood estimator has the Fisher consistency:
\begin{align*}
\Ep^{2}[\mathrm{PL}(\theta)] \le \Ep^{2}[\mathrm{PL}(\theta_{0})] \quad \text{for}\quad \theta\in\Theta,
\end{align*}
where $\Ep^{2}$ denotes the expectation with respect to $p(x;\theta_{0},\nu_{0})p(\tilde{x};\theta_{0},\nu_{0})$.
To see this, we follow the arguments in the proof for Theorem 4 of \cite{chen_sei2022}.
Denote $p(x):=p(x;\theta_{0},\nu_{0})$
and
$p_{\theta}(x):=p(x;\theta,\nu_{0})$.
Then we have
\begin{align*}
\Ep^{2}[\mathrm{PL}(\theta)]
&=-\sum_{i=1}^{d}\iint 
\log \left\{1+e^{\theta^{\top} u^{i}(x,\tilde{x})}\right\}
p(x)
p(\tilde{x})
\mathrm{d}x\mathrm{d}\tilde{x}
\\
&=-\frac{1}{2}\sum_{i=1}^{d} \iint 
\log \left\{1+e^{\theta u^{i}(x,\tilde{x})}\right\}
p(x)p(\tilde{x})
\mathrm{d}x\mathrm{d}\tilde{x}
\\
&\quad -\frac{1}{2}\sum_{i=1}^{d} \iint 
\log \left\{1+e^{\theta u^{i}(x^{(i)},\tilde{x}^{(i)})}\right\}
p(x^{(i)})p(\tilde{x}^{(i)})
\mathrm{d}x^{(i)}\mathrm{d}\tilde{x}^{(i)}.
\end{align*}
Letting 
\begin{align*}
\sigma^{i}(x,\tilde{x};\theta):=
\frac{p_{\theta}(x)p_{\theta}(\tilde{x})}{p_{\theta}(x)p_{\theta}(\tilde{x})+p_{\theta}(x^{(i)})p_{\theta}(\tilde{x}^{(i)})}
&=\frac{1}{1+e^{\theta^{\top} u^{i}(x,\tilde{x}) }},
\end{align*}
we have
\begin{align*}
\Ep^{2}[\mathrm{PL}(\theta)]
&=
\frac{1}{2}
\sum_{i=1}^{d}\iint 
\log \left\{\sigma^{i}(x,\tilde{x};\theta)
\right\}
p(x)p(\tilde{x})
\mathrm{d}x\mathrm{d}\tilde{x}
\\
&\quad 
+
\frac{1}{2}
\sum_{i=1}^{d}\iint 
\log 
\left\{
\sigma^{i}(x^{(i)},\tilde{x}^{(i)};\theta)
\right\}
p(x^{(i)})p(\tilde{x}^{(i)})
\mathrm{d}x^{(i)}\mathrm{d}\tilde{x}^{(i)}
\end{align*}
and then we get
\begin{align*}
\Ep^{2}[\mathrm{PL}(\theta_{0})]-\Ep^{2}[\mathrm{PL}(\theta)]
&=
\frac{1}{2}\sum_{i=1}^{d}\iint \log 
\left\{\frac{\sigma^{i}(x,\tilde{x};\theta_{0})}{\sigma^{i}(x,\tilde{x};\theta)}\right\} 
p(x)p(\tilde{x})
\mathrm{d}x\mathrm{d}\tilde{x}\\
&\quad +\frac{1}{2}\sum_{i=1}^{d}\iint \log 
\left\{\frac{1-\sigma^{i}(x,\tilde{x};\theta)}{1-\sigma^{i}(x,\tilde{x};\theta_{0})}
\right\}
p(x^{(i)})p(\tilde{x}^{(i)})
\mathrm{d}x^{(i)}\mathrm{d}\tilde{x}^{(i)}.
\end{align*}
Here 
observe that 
\begin{align*}
p(x)p(\tilde{x})&=\sigma^{i}(x,\tilde{x};\theta_{0})\{p(x)p(\tilde{x})+p(x^{(i)})p(\tilde{x}^{(i)})\}\quad\text{and}\\
p(x^{(i)})p(\tilde{x}^{(i)})
&=\{1-\sigma^{i}(x,\tilde{x};\theta_{0})\}\{p(x)p(\tilde{x})+p(x^{(i)})p(\tilde{x}^{(i)})\}.
\end{align*}
Together with the fact that
the integration with respect to $\mathrm{d}x\mathrm{d}\tilde{x}$ is the same as that with respect to $\mathrm{d}x^{(i)}\mathrm{d}\tilde{x}^{(i)}$,
this yields
\begin{align*}
&\Ep^{2}[\mathrm{PL}(\theta_{0})]-\Ep^{2}[\mathrm{PL}(\theta)]
\\
&=
\sum_{i=1}^{d}\iint 
\underbrace{\left\{\sigma^{i}(x,\tilde{x};\theta_{0})
\log \frac{\sigma^{i}(x,\tilde{x};\theta_{0})}{\sigma^{i}(x,\tilde{x};\theta)}
+
(1-\sigma^{i}(x,\tilde{x};\theta_{0}))
\log 
\frac{1-\sigma^{i}(x,\tilde{x};\theta_{0})}{1-\sigma^{i}(x,\tilde{x};\theta)}
\right\}}_{\ge 0} \mu_{p}(\mathrm{d}x\mathrm{d}\tilde{x})\ge 0,
\end{align*}
with $\mu_{p}(\mathrm{d}x\mathrm{d}\tilde{x}):=
\{p(x)p(\tilde{x})+p(x^{(i)})p(\tilde{x}^{(i)})\}\mathrm{d}x\mathrm{d}\tilde{x}/2
$, which proves the Fisher consistency.

\textbf{Final step: the consistency of PLE.}
We shall see that
the PLE converges to $\theta_{0}$ in probability.
Together with the Fisher consistency of the pseudo likelihood
and the equation
\begin{align*}
\nabla^{2}_{\theta}\Ep^{2}[\mathrm{PL}(\theta)]
=\frac{1}{4}\Ep^{2}\left[
\sum_{i=1}^{d}
\frac{u^{i}(X,\tilde{X})(u^{i}(X,\tilde{X}))^{\top}}
{\{\cosh(\theta^{\top} u^{i}(X,\tilde{X})/2)
\}^{2}}
\right],
\end{align*}
the first additional assumption implies 
\begin{align*}
\sup_{\theta\notin G}\Ep^{2}\mathrm{PL}(\theta)<\Ep^{2}\mathrm{PL}(\theta_{0}) 
\,\text{for any open neighbor $G$ of $\theta_{0}$}.
\end{align*}
So, from the same argument as in Corollary 3.2.3 (i) of \cite{vanderVaartandWellner}, it
suffices to show that 
\begin{align}
\sup_{\theta\in\Theta}|\Ep^{2}_{n}m_{\theta}(X,\tilde{X})-\Ep^{2}m_{\theta}(X,\tilde{X})|\to 0\quad \text{in probability},
\label{eq: GC for U}
\end{align}
where $\Ep_{n}^{2}$ denotes the expectation with respect to $\{1/(n(n-1))\}\sum_{1\le s\ne t\le n}\delta_{x(s),x(t)}$,
and
\begin{align*}
m_{\theta}(x,\tilde{x}):=
\sum_{i=1}^{d}\log \frac{1}{1+e^{\theta^{\top}u^{i}(x,\tilde{x})}}.
\end{align*}
From the Glivenko--Cantelli theorem for $U$-statistics (Corollary 5.2.5 of \cite{PenaGine}),
the sufficient condition for (\ref{eq: GC for U}) is that the bracketing number $\mathcal{N}_{[]}^{(1)}(\mathcal{H},\varepsilon)$
is finite for any $\varepsilon>0$,
where 
$\mathcal{H}:=\{m_{\theta}(\cdot,\cdot):\theta\in \Theta\}$
and
the bracketing number $\mathcal{N}_{[]}^{(1)}(\mathcal{H},\varepsilon)$ is defined as the minimum number $N$ for which there exist $f_{1}(\cdot,\cdot),\ldots,f_{N}(\cdot,\cdot)$
and $\triangle_{1}(\cdot,\cdot)\ge 0,\ldots,\triangle_{N}(\cdot,\cdot)\ge 0$
with $\Ep^{2}|f_{i}|<\infty$ and
$\Ep^{2}\triangle_{i}<\varepsilon$ for all $i\in\{1,\ldots,N\}$ such that for all $m\in \mathcal{H}$ there exists $i\in\{1,\ldots,N\}$ with $|f_{i}-m|<\triangle_{i}$.
Here, observe that 
\begin{align*}
|m_{\theta}(x,\tilde{x})-m_{\theta'}(x,\tilde{x})|
&\le 
\sup_{\theta\in\Theta}
\left\{\left\|\nabla_{\theta}m_{\theta}(x,\tilde{x})\right\|\right\} \|\theta-\theta'\|
\\
&=
\sup_{\theta\in\Theta}\left|\sum_{i=1}^{d}\frac{
\|u^{i}(x,\tilde{x})\|
}{1+e^{\theta^{\top}u^{i}(x,\tilde{x})}}\right|
\|\theta-\theta'\|
\\
&\le
\sum_{i=1}^{d}\|u^{i}(x,\tilde{x})\|
\|\theta-\theta'\|
\\
&\le 2L_{h}\sum_{i=1}^{d}d_{i}(x_{i},\tilde{x}_{i})\|\theta-\theta'\|
\\
&\le 2L_{h}\sum_{i=1}^{d}\{d_{i}(x_{i},x^{0}_{i})+d_{i}(\tilde{x}_{i},x^{0}_{i})\}\|\theta-\theta'\|
\end{align*}
where $L_{h}$ is the Lipschitz constant of $h$ as in Assumption 1 (3).
Then, 
together with Assumption 1(2) (boundedness of the parameter space),
using the second additional assumption  and Theorem 2.7.11 of \cite{vanderVaartandWellner} (bounding the bracketing number of  classes that are Lipschitz in parameter),
we have Equation (\ref{eq: GC for U}), which concludes the proof of the consistency of the pseudo likelihood estimator.
\end{proof}

\subsection{The asymptotic variance of the pseudo Likelihood estimators}
\label{appendix: PLE variance}

To calculate the asymptotic variance of PLE, 
we employ the following result on the asymptotic normality of $M_{m=2}$-estimators.
\begin{theorem}[Theorem 2.3 of \cite{BoseChatterjee_2018}]
Let $f_{\theta}(x,\tilde{x})$ be a measurable real valued function which is symmetric in $x$ and $\tilde{x}$.
Suppose the following hold:
\begin{itemize}
\item[M1] A function $f_{\theta}(x,\tilde{x})$ is convex in $\theta$;
\item[M2] The function $\Ep^{2}f_{\theta}(X,\tilde{X})$ is finite for all $\theta\in\Theta$;
\item[M3] The maximizer $\theta_{f}$ of $\Ep^{2}f_{\theta}(X,\tilde{X})$ exists uniquely;
\item[M4] For the gradient, we have $\Ep^{2}\|\nabla_{\theta}f_{\theta}(X,\tilde{X})\|^{2}<\infty$ for all $\theta\in\Theta$;
\item[M5] The Hessian $J:=\Ep^{2}[\nabla^{2}_{\theta=\theta_{f}}f_{\theta}(X,\tilde{X})]$ exists and is positive definite. 
\end{itemize}
Then, for the maximizer $\hat{\theta}_{f}$ of $\Ep^{2}_{n}[f_{\theta}(X,\tilde{X})]$, the asymptotic distribution of $n^{1/2}(\hat{\theta}_{f}-\theta_{f})$ is 
\[
N(0,4 J^{-1} K J^{-1})
\quad\text{with}\quad
K:= \Ep_{\tilde{X}}[ \Ep_{X}[ \nabla_{\theta=\theta_{f}} f_{\theta}(X,\tilde{X})\mid \tilde{X}] 
\{\Ep_{X}[ \nabla_{\theta=\theta_{f}} f_{\theta}(X,\tilde{X})\mid \tilde{X}]\}^{\top}
],
\]
where let $\Ep_{X}[\,\cdot\,\mid \tilde{X}]$ denotes the conditional expectation of $X$ given $\tilde{X}$.
\end{theorem}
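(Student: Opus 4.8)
The plan is to treat $\hat\theta_f$ as the root of a vector-valued $U$-statistic estimating equation and to reduce its fluctuations to those of an i.i.d.\ average by means of the Hoeffding (H\'ajek) decomposition. First I would record the first-order characterization. Since $f_\theta$ is convex and differentiable in $\theta$, the population extremizer $\theta_f$ satisfies $\Ep^2[\nabla_\theta f_{\theta_f}(X,\tilde X)]=0$ by M3, while the empirical extremizer satisfies $U_n(\hat\theta_f)=0$, where $U_n(\theta):=\Ep_n^2[\nabla_\theta f_\theta(X,\tilde X)]$ is a symmetric $U$-statistic of degree two with kernel $\nabla_\theta f_\theta$. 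Consistency $\hat\theta_f\to\theta_f$ in probability would be obtained exactly as in the proof of Theorem~\ref{theorem:consistencyofPL}, combining convexity (M1), finiteness (M2), uniqueness of the population extremizer (M3), and a Glivenko--Cantelli theorem for $U$-processes.

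The core of the argument is a central limit theorem for $U_n(\theta_f)$. I would apply Hoeffding's decomposition to the kernel at $\theta_f$, writing
\[
U_n(\theta_f)=\frac{2}{n}\sum_{s=1}^n g(X(s))+R_n,
\]
where $g(x):=\Ep_X[\nabla_\theta f_{\theta_f}(x,\tilde X)\mid \tilde X=x]$ is precisely the conditional expectation entering the definition of $K$, with $\Ep g=0$. By M4 the kernel is square-integrable, so the degenerate second-order remainder $R_n$ is $O_P(1/n)=o_P(n^{-1/2})$, and the classical multivariate CLT applied to the linear projection yields $\sqrt n\,U_n(\theta_f)\to N(0,4K)$ with $K=\Ep[g(X)g(X)^\top]$; the factor $4$ is the square of the degree of the $U$-statistic, i.e.\ the variance of its first H\'ajek projection.

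Next I would linearize the estimating equation. A one-term Taylor expansion gives $0=U_n(\hat\theta_f)=U_n(\theta_f)+J_n(\bar\theta)(\hat\theta_f-\theta_f)$ for some intermediate $\bar\theta$ between $\theta_f$ and $\hat\theta_f$, where $J_n(\theta):=\Ep_n^2[\nabla_\theta^2 f_\theta]$ is the Hessian $U$-statistic. Using consistency ($\bar\theta\to\theta_f$) together with a uniform (over a neighborhood of $\theta_f$) law of large numbers for the Hessian $U$-process and the positive-definiteness of $J$ from M5, I would conclude $J_n(\bar\theta)\to J$ in probability. Solving for the error and invoking Slutsky's theorem then delivers $\sqrt n(\hat\theta_f-\theta_f)=-J_n(\bar\theta)^{-1}\sqrt n\,U_n(\theta_f)+o_P(1)\to N\!\big(0,\,4J^{-1}KJ^{-1}\big)$.

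The main obstacle is the $U$-statistic structure of the estimating equation: unlike an ordinary empirical process, $U_n(\theta)$ is an average over $O(n^2)$ dependent pairs, so both the CLT (controlling the degenerate remainder $R_n$ at the $n^{-1/2}$ scale) and the uniform convergence of the Hessian (a Glivenko--Cantelli statement for the two-sample $U$-process $\{\nabla_\theta^2 f_\theta:\theta\in\Theta\}$) require the machinery of $U$-processes (e.g.\ \citealp{PenaGine}) rather than standard empirical-process tools. The delicate bookkeeping is matching the H\'ajek projection's variance to exactly $4K$ and confirming the degenerate part is asymptotically negligible, which is where the degree-two structure must be handled with care.
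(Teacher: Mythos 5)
This statement is not proved in the paper at all: it is quoted verbatim as Theorem 2.3 of Bose and Chatterjee (2018), and the paper only verifies that its hypotheses M1--M5 hold for the pseudo-likelihood kernel $m_\theta$. So there is no in-paper proof to compare against; your proposal has to stand on its own as a proof of the cited result.

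Your outline has the right main ingredients --- the first-order condition for the $U$-statistic objective, the Hoeffding/H\'ajek decomposition of the score at $\theta_f$, the observation that the degenerate remainder is $O_P(1/n)$ under M4, the factor $4$ from the degree-two projection, and the sandwich form $4J^{-1}KJ^{-1}$ --- but the linearization step has a genuine gap relative to the stated hypotheses. M5 only asserts that the Hessian of the \emph{expected} kernel exists and is positive definite \emph{at} $\theta_f$; nothing in M1--M5 gives you twice-differentiability of $f_\theta$ in a neighborhood of $\theta_f$, continuity of the Hessian, or an envelope permitting a uniform law of large numbers for the Hessian $U$-process $J_n(\theta)$ over a neighborhood. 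Consequently the expansion $0=U_n(\theta_f)+J_n(\bar\theta)(\hat\theta_f-\theta_f)$ and the claim $J_n(\bar\theta)\to J$ are not justified by the assumptions (and, separately, the mean-value form with a single intermediate point $\bar\theta$ is not valid for vector-valued gradients; one needs the integral form or a componentwise argument). The route that works under exactly M1--M5 is the Hjort--Pollard/Niemiro convexity argument used by Bose and Chatterjee: show that the rescaled convex process $u\mapsto n\{\Ep_n^2 f_{\theta_f+u/\sqrt{n}}-\Ep_n^2 f_{\theta_f}\}$ converges in distribution (finite-dimensionally suffices, by convexity) to the quadratic $u^\top W+\tfrac12 u^\top J u$ with $W\sim N(0,4K)$, and then invoke the convexity lemma to pass from convergence of the processes to convergence of their maximizers. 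A smaller point: your definition $g(x)=\Ep_X[\nabla_\theta f_{\theta_f}(x,\tilde X)\mid\tilde X=x]$ is garbled --- the H\'ajek projection is $g(x)=\Ep[\nabla_\theta f_{\theta_f}(x,\tilde X)]$ with the expectation taken over the second argument only, which by symmetry matches the conditional expectation appearing in $K$.
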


We apply this result to $f_{\theta}(x,\tilde{x})=m_{\theta}(x,\tilde{x})$ and only have to check that all conditions M1-M5 are satisfied.
Conditions M1 and M5 are satisfied from the first additional assumption in Theorem \ref{theorem:consistencyofPL}.
Condition M3 is satisfied from the argument in the proof of the consistency of PLE.
Conditions M2 and M4 are satisfied from the last argument in the proof of the consistency of PLE
and from the second additional assumption in Theorem \ref{theorem:consistencyofPL}. 
Thus we confirm the asymptotic normality of PLE, from which we obtain an estimator of the asymptotic variance.

\section{Applications to real data} \label{section:real}

This section provide
two applications of our method to real data.
The first is to the palmerpenguins dataset, which consists of continuous and categorical variables. The second is to an earthquake catalog dataset, which contains observations on the Stiefel manifold.

\subsection{Penguins data}

We apply our method to the palmerpenguins dataset provided by \cite{Horst_et_al_2020}. The data consist of 344 observations with 8 variables on Palmer Archipelago (Antarctica) penguins. We focus on the characteristics of Adelie penguins and ignore the information on islands and years. The variables are then five: (1) bill length, (2) bill depth, (3) flipper length, (4) body mass, and (5) sex. The first four variables are standardized and the last category variable is quantified as 0 (female) or 1 (male). Individuals with missing values are simply removed.

\begin{figure}[htb]
\centering
\includegraphics[width=7cm]{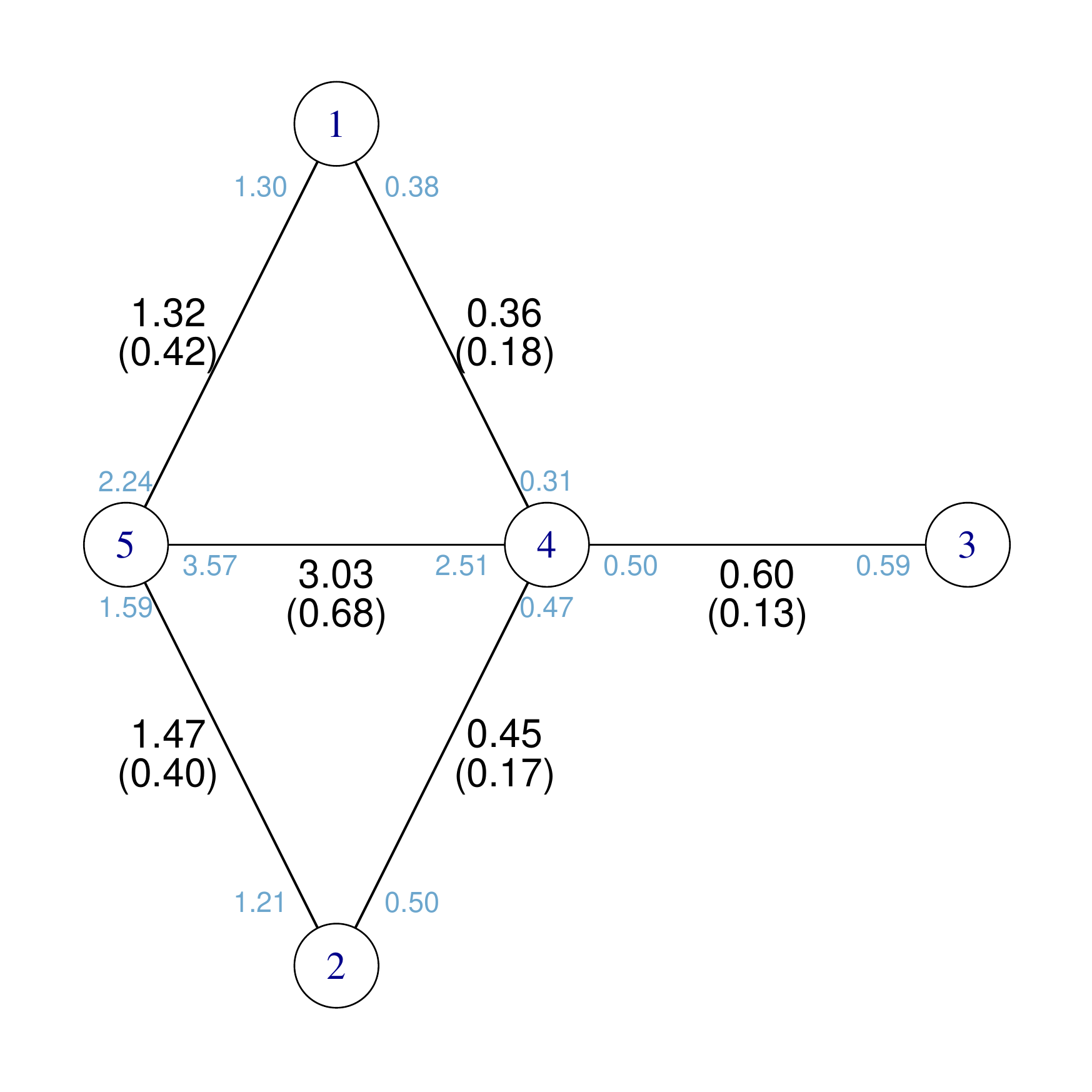}
\caption{Selected interactions for the palmerpenguins dataset. The vertices correspond to (1) bill length, (2) bill depth, (3) flipper length, (4) body mass, and (5) sex. 
The numbers on the edges are the parameter estimates and the standard errors, respectively. The numbers around the vertices indicate the results of regression analysis when each variable is treated as the response variable. The linear regression coefficients are divided by the residual variance for comparison.
}
\label{fig:penguin}
\end{figure}

\begin{table}[htb]
    \caption{
    \label{tab:palmerpenguins-3dim} Estimation of the three-dimensional interaction model for the palmerpenguins dataset.
    The first two rows are the estimates and the standard errors, respectively. The last row indicates the coefficients of linear regression on $x_2$ against $(x_4,x_5,x_4x_5)$, where the coefficients are divided by the residual variance.}
    \centering
    \fbox{%
    {\footnotesize
        \begin{tabular}{l|cccccc|cc}
        & $\theta_{14}$& $\theta_{15}$& $\theta_{24}$& $\theta_{25}$& $\theta_{34}$& $\theta_{45}$& $\theta_{145}$& $\theta_{245}$\\\hline
        estimate& 0.36& 1.30& 1.05& 1.28& 0.58& 2.97& -0.01& -0.78 \\
        (std.\ err.)& (0.25)& (0.38)& (0.33)& (0.39)& (0.12)& (0.63)& (0.32)& (0.39)\\
        \hline
        $x_2$, linear& & & 0.87& 1.25& & & & -0.58
        \end{tabular}
    }}
\end{table}

We first fit the second-order interaction model
$h(x) = (x_ix_j)_{1\leq i<j\leq 5}\in\mathbb{R}^{10}$
on $x=(x_j)\in\mathbb{R}^4\times \{0,1\}$.
We then performed backward stepwise variable selection based on the AIC values with Wald-type approximation. The selected interactions $I$ are shown in Figure~\ref{fig:penguin}.
The flipper length ($x_3$) depends only on the body mass ($x_4$), which is quite reasonable.

CLE of $\theta=(\theta_{ij})_{(i,j)\in I}$ and their standard errors are shown on the edges of the graph in Figure~\ref{fig:penguin}, respectively.
The numbers around the vertices indicate the regression coefficients based on the linear/logistic regression for each variable, where the linear regression coefficients are divided by the residual variance (dispersion parameter) to make the results comparable.
For example, $x_1$ is explained by $x_4$ and $x_5$, since $x_1$ interacts only with $x_4$ and $x_5$. The intercept term was estimated but is not shown.
The estimates of our model are close to the regression coefficients. The results are consistent with Example~\ref{example:mixed}
of Section \ref{subsection:examples}.

Since the obtained graph has two triangles 1--4--5 and 2--4--5, we estimated three-dimensional interactions among the variables by adding the statistics $x_1x_4x_5$ and $x_2x_4x_5$ into the model. Table~\ref{tab:palmerpenguins-3dim} shows the estimates together with the standard errors. There exists a weak negative three-dimensional interaction 2--4--5, which is also observed in the linear regression on $x_2$ against $(x_4,x_5,x_4x_5)$.

\subsection{Earthquake catalog}
\label{subsec: earthquake}
We lastly apply the minimum information dependence modeling to the 158 earthquake data that occurred in Japan during the period from January 1st, 2021 to December 8th, 2021,
by using the catalog provided by the Japan Meteorological Agency (\cite{JMA}).
The catalog consists of origin times (the start time of the earthquake rupture), locations of epicenters (longitudes, latitudes, and depths), magnitudes, strikes (the fault-trace directions), dips (the angles of faults), and rakes (directions of fault motions during ruptures with respect to strikes).
Using  the strike $\xi$, the dip $\delta$, and the rake $\lambda$, we can express the focal mechanism of an earthquake that describes the slip of the fault in the earthquake and the orientation of the fault on which it occurs.
The focal mechanism is described by the normal vector $n$ to one side of the fault blocks (a vector perpendicular to the fault plane) and the slip vector $u$ of the other block against it
(e.g.,\, Section 9 of \cite{Shearer2009}):
\begin{align*}
    u&=(\sin \xi \cos \delta \sin \lambda +\cos\xi\cos\lambda,-\cos\xi\cos\delta\sin\lambda+\sin\xi\cos\lambda,-\sin\delta\sin\lambda )^{\top},\\
    n&=(-\sin\xi\sin\delta,\cos\xi\sin\delta,-\cos\delta)^{\top}.
\end{align*}
However, there are two ambiguities in focal mechanism.
The first is that since the choice of the reference block is arbitrary, we cannot distinguish $(n,u)$ from $(-n,-u)$.
The second is that since the fault plane and the auxiliary plane (the plane perpendicular to the fault plane) cannot be separated from observations,  we cannot distinguish $(n,u)$ from $(u,n)$. We deal with these ambiguities by transforming two vectors into the compressional ($P$) and tensional ($T$) axes:
$P = 2^{-1/2} (n-u)$ and $T = 2^{-1/2} (n+u)$.
This pair $(P,T)$ forms an orthogonal 2-frame in $\mathbb{R}^{3}$ and is an element in the Stiefel manifold $V_{2}(\mathbb{R}^{3}):=\{(u_{1},u_{2}):u_{1}^{\top}u_{2}=0, u_{1}\in\mathbb{S}^{2},u_{2}\in\mathbb{S}^{2}\}$ (e.g.,\ \cite{ArnoldJupp2013,WalshArnoldTownend2009}),
where $\mathbb{S}^{2}$ is the unit sphere in $\mathbb{R}^{3}$.

\begin{figure}[tb]
\centering
\includegraphics[width=12cm]{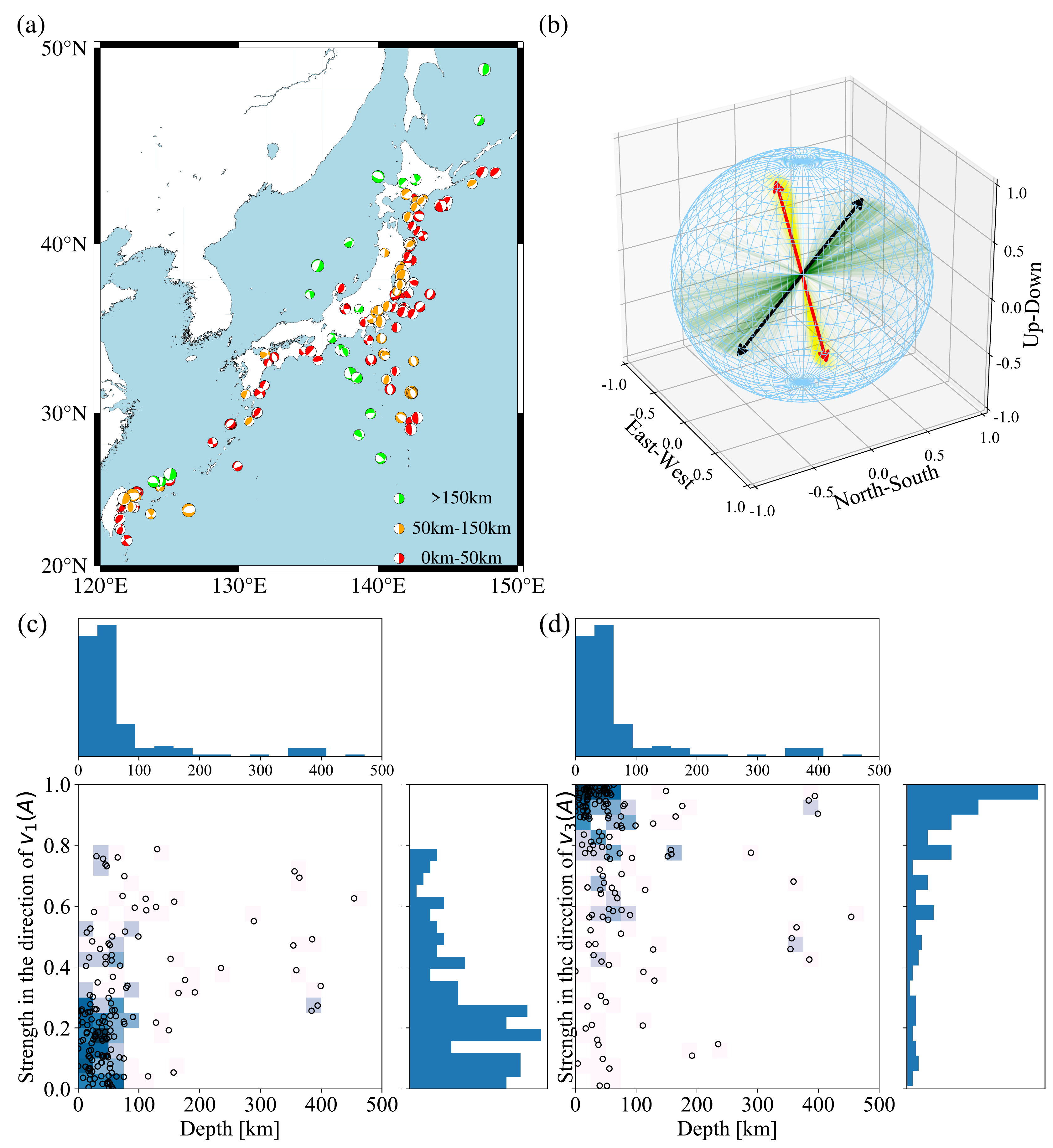} \caption{Application to the earthquake catalog. (a) Map of focal mechanisms. (b) Axes of the first and third eigenvectors of the estimated $A$ in (\ref{eq: min-info for mechanism and depth}). The axis of the first eigenvector is colored red with 1000 samples from the asymptotic distribution colored yellow.
The axis of the third eigenvector is colored black with 1000 samples from the asymptotic distribution colored green. (c,d) The two-dimensional histograms with marginal histograms of the depths and the strengths in the directions of the first (c) and third (d) eigenvectors of the estimated $A$,  where the data themselves are plotted as black circles.}
\label{fig:mechdist}
\end{figure}

Figure~\ref{fig:mechdist} (a) shows the integrated map of focal mechanisms in the ``beach ball'' representation; for  details of this representation, see, for example, Section 9 of \cite{Shearer2009}. 
The beach ball is the stereographic projection of the fault and the slip on the lower half of a sphere surrounding the source location of an earthquake.
The colored quadrants indicate outward motion away from those quadrants, and the white quadrants indicate inward motion towards those quadrants. Then, the $T$-axis is the line through the origin and the center of the colored quadrant, and the $P$-axis is that through the origin and the center of the white quadrant.

\begin{table}[h]
    \caption{
    \label{tab:A} The CLE $\hat{A}$ of $A$. Let $v_{1}(\hat{A})$ be the first eigenvector of $\hat{A}$, let $v_{2}(\hat{A})$ be the second eigenvector of $\hat{A}$, and let $v_{3}(\hat{A})$ be the third eigenvector of $\hat{A}$. The eigenvalues along with each eigenvector are $4.13$,  $0$, and $-1.09$, respectively.}
    \centering
    \fbox{
    \begin{tabular}{c|ccc}
         & $v_{1}(A)$ & $v_{2}(A)$& $v_{3}(A)$  \\\hline
        East-West & -0.27& 0.46& 0.84 \\
        North-South & 0.06& 0.88& -0.46 \\
        Up-Down & 0.96& 0.07& 0.26\\
    \end{tabular}}
\end{table}

In this application, we analyze the dependence between the focal mechanism $(P,T)$ and depth $z$ of earthquakes. To this end, we 
set $\mathcal{X}_{1}=V_{2}(\mathbb{R}^{3})$ and $\mathcal{X}_{2}=\mathbb{R}$,
and use the following canonical statistics:
$h^{(P)}_{ij}(z,P)=zP_{i}P_{j}, 1\le i,j\le 3,
    \quad\text{and}\quad
    h^{(T)}_{ij}(z,T)=zT_{i}T_{j}, 1\le i,j\le 3$,
where $h^{P}_{33}$ and $h^{T}_{33}$ are determined by 
$z$ and the constraints that $P_{1}^{2}+P_{2}^{2}+P_{3}^{2}=T_{1}^{2}+T_{2}^{2}+T_{3}^{2}=1$. 
We denote by $h^{(P)}(z,P)$ and $h^{(T)}(z,T)$ the matrices with $(i,j)$-components $h^{(P)}_{ij}(z,P)$
and $(i,j)$-components $h^{(T)}_{ij}(z,T)$, respectively. 
Then, the minimum information dependence model is described as 
\begin{align}
    \log p(z,P,T;A,B,\nu)=\mathrm{tr}(Ah^{(P)}(z,P))+\mathrm{tr}(Bh^{(T)}(z,T))-c,
    \label{eq: min-info for mechanism and depth}
\end{align}
where 
$A$ and $B$ are symmetric matrices,
and
$c:=a_{1}(z;A,B,\nu)+a_{2}(P,T;A,B,\nu)+\psi(A,B,\nu)$.
Since $\{h^{(P)}_{1,1},h^{(P)}_{1,2},h^{(P)}_{1,3},h^{(P)}_{2,2},h^{(P)}_{2,3}\}$ are linearly independent modulo additive functions,
the number of independent parameters in $A$ is $5$.
The same argument applies to $B$.
Thus, the total number of independent parameters in (\ref{eq: min-info for mechanism and depth}) is 10.

Here we report the inferential result.
We first conducted hypothesis testing with the null hypothesis
$H_{0}^{AB}: A=B=0$ using Wald statistics, which is a null hypothesis implying that the depth and the focal mechanism are independent.
The resulting $p$-value is $0.02$.
We next took a closer look at the dependence between the depth and the $P$-axis. Table \ref{tab:A} shows the value of the estimated $A$.
Figure \ref{fig:mechdist} (b) displays the first and the third eigenvectors $v_{1}$ and $v_{3}$ of the CLE of $A$.
We then projected the data onto the corresponding axes and obtained the joint histograms in Figures~\ref{fig:mechdist} (c) and (d). 
Figure~\ref{fig:mechdist} (c)
shows that the strength $|\langle v_{1}, P\rangle|$ in the direction of $v_{1}$ concentrated around the lower values for shallower depths, but
takes larger values for deeper depths, which might be consistent with the fact that intermediate-depth earthquakes at depths of about $50$km--$300$km, often have normal faults (i.e., the $P$-axes of these are relatively vertical) and most of the shallower earthquakes occur near the Japan trench and have reverse faults or strike-slip faults (i.e., the  $P$-axes of these are relatively horizontal).

\bibliographystyle{imsart-nameyear}
\bibliography{min-info}

\end{document}